\definecolor{shadecolor}{rgb}{0.90,0.90,0.90}
\numberwithin{equation}{section}
\newtheorem{theorem}{Theorem}[section]
\theoremstyle{definition}
\newtheorem{corollary}{Corollary}[theorem]
\newtheorem{lemma}[theorem]{Lemma}
\def\beq{\begin{eqnarray}}\def\eeq{\end{eqnarray}}
\def\be{\begin{equation}}\def\ee{\end{equation}}
\def\p{\pi}
\def\g{\gamma}
\def\r{\rho}
\def\m{\mu}
\def\n{\nu}
\def\a{\alpha}
\def\b{\beta}
\def\d{\delta}
\def\vf{\varphi}
\def\th{\theta}
\def\l{\lambda}
\def\ma{{\mathcal{A}}}
\def\mm{{\mathcal{M}}}
\def\ms{{\mathcal{S}}}
\def\w{\omega}
\def\zt{\tilde{z}}
\def\immp{\mathcal{A}\left(s_{1}^{\prime} ; s_{2}^{(+)}\left(s_{1}^{\prime}, a\right)\right)}
\def\imm{\mathcal{A}\left(s_{1} ; s_{2}^{(+)}\left(s_{1}, a\right)\right)}
\def\imeas{ \int_{\frac{2 \mu}{3}}^{\infty} \frac{d s_{1}^{\prime}}{s_{1}^{\prime}}}
\newcommand{\Abs}[1]{\left| #1 \right|}
\def\rb{\mathbb{R}}
\def\mW{{\mathcal{W}}}
\def\nn{\nonumber}
\begin{document}

\title{\bf Quantum field theory \\ and the Bieberbach conjecture}
\date{}
\author{Parthiv Haldar$^{\a}$\footnote{parthivh@iisc.ac.in}, Aninda Sinha$^{\a}$\footnote{asinha@iisc.ac.in}, and Ahmadullah Zahed$^{\a}$\footnote{ahmadullah@iisc.ac.in
}\\~~~~\\
\it ${^\a}$Centre for High Energy Physics,
\it Indian Institute of Science,\\ \it C.V. Raman Avenue, Bangalore 560012, India.}
\maketitle
\vskip 2cm
\abstract{An intriguing correspondence between ingredients in geometric function theory related to the famous Bieberbach conjecture (de Branges' theorem) and the non-perturbative crossing symmetric representation of 2-2 scattering amplitudes of identical scalars is pointed out. Using the dispersion relation and unitarity, we are able to derive several inequalities, analogous to those which arise in the discussions of the Bieberbach conjecture. We derive new and strong bounds on the ratio of certain Wilson coefficients and demonstrate that these are obeyed  in one-loop $\phi^4$ theory, tree level string theory as well as in the S-matrix bootstrap.  Further, we find two sided bounds on the magnitude of the scattering amplitude, which are shown to be respected in all the contexts mentioned above. Translated to the usual Mandelstam variables,  for large $|s|$, fixed $t$, the upper bound reads $|\mathcal{M}(s,t)|\lesssim |s^2|$. We discuss how Szeg\"{o}'s theorem corresponds to a check of univalence in an EFT expansion, while how the Grunsky inequalities translate into nontrivial, nonlinear inequalities on the Wilson coefficients.
}

\tableofcontents

\onehalfspacing

\section{Introduction}

In mathematics, the \emph{Bieberbach conjecture} is about how fast the Taylor expansion coefficients of a holomorphic  univalent\footnote{A function is univalent on a domain $D$ if it is holomorphic, and one-to-one, i.e. for all $z_1, z_2\in D$, $f(z_1)\neq f(z_2)$ if $z_1\neq z_2$.} function, of a single complex variable $z$, on the unit disc ($|z|<1$) grows. If we write this function as 
\be
f(z)=\sum_{n=1}^\infty b_n z^n\,,
\ee
then according to this conjecture
\be \label{m1}
|b_n|\leq n |b_1|\,.
\ee
This famous conjecture was put forth by Bieberbach in 1916 \cite{bieberbach} and resisted a complete proof until 1985 when it was proved by de Branges \cite{branges}.  Another important property for such univalent functions is what is known as the \emph{Koebe Growth theorem} which says that
\be\label{m2}
\frac{|b_1 z|}{(1+|z|)^2}\le |f(z)|\le \frac{|b_1 z|}{(1-|z|)^2}\,,
\ee
providing a two-sided bound on the absolute value of the function. In the course of 70 years, attempts at proving the Bieberbach conjecture led to the invention of new mathematical results such as \eqref{m2} and techniques in the area of geometric function theory.

Now it is certainly not obvious, but we claim that \eqref{m1} and \eqref{m2} have analogues in the context of 2-2 scattering in quantum field theory. To see this, we will make use of the crossing symmetric representation of the 2-2 scattering of identical massive scalars, first used in the long-forgotten work by Auberson and Khuri \cite{AK} and resurrected in \cite{ASAZ, RGASAZ}. If we consider $\mm(s,t)$ and assume that there are no massless exchanges, then we expect a low energy expansion of the form
\be
\mm(s,t)=\sum_{p\geq 0,q\geq0} {\mathcal W}_{pq} x^p y^q\,,
\ee
where $x$ and $y$ are the quadratic and cubic crossing symmetric combinations of $s,t,u$ to be made precise below. 
Normally, the term dispersion relation for scattering amplitude $\mm(s,t)$ refers to the fixed-$t$ dispersion relation, which lacks manifest crossing symmetry. As we will review below, in order to exhibit three-channel crossing symmetry in the dispersion relation, we should work with a different set of variables, $z$ and $a\equiv y/x$. For now, we note that both $x,y \propto z^3/(z^3-1)^2$. As such, the appropriate variable is $\tilde z=z^3$. We write a crossing symmetric dispersion relation in the variable $\zt$ keeping $a$ fixed. This dispersion relation, together with unitarity, leads to similar bounds as in \eqref{m1} for the $\zt$ expansion of $\mm(\zt,a)$ and as in \eqref{m2} for $|\mm(\zt,a)|$. The expansion of $M(\zt,a)$ around $\zt=0$ is similar to a low energy expansion and the bound \eqref{m1} relates to bounds on the Wilson coefficients $\mW_{pq}$.

Over the last few months, the existence of upper and lower bounds on ratios of Wilson coefficients have been discovered \cite{tolley, Caron-Huot:2020cmc, ASAZ}. These bounds are remarkable since they say that Wilson coefficients cannot be arbitrarily big or small and, in a sense, corroborate the efficacy of effective field theories. 
One of the interesting outcomes of our analysis is that 
\be
-\frac{9}{4\m+6\d_0}<\frac{\mathcal{W}_{0,1}}{{\mW_{1,0}}}< \frac{9}{2\m+3\d_0}\,,
\ee
where $\mu=4m^2$ with $m$ being the mass of the external scalar and $\d_0$ is some cutoff scale in the theory. Expanding $\d_0\gg \mu$ leads to the 2-sided, space-time dimension independent bound $-\frac{3}{2\delta_0}<\frac{\mathcal{W}_{0,1}}{W_{1,0}}< \frac{3}{\d_0}$. Compared to \cite{tolley, Caron-Huot:2020cmc}, the lower bound is identical but the upper bound we quote above is stronger. We have checked this inequality for several known examples. Other fascinating consequences of the analogs of \eqref{m1} will be discussed below. 

Univalence of a function leads to further nontrivial constraints in the form of the Grunsky inequalities, which are necessary and sufficient for an analytic function on the unit disk to be univalent. If these were to hold in QFT, they would imply non-linear constraints on $\mW_{pq}$. Unfortunately, proving univalence is a tough problem. The $\tilde z$ dependence in the crossing symmetric dispersion relation arises entirely from the crossing symmetric kernel. One can show that this kernel, for a range of real $a$ values, is indeed univalent! Therefore, one concludes that for unitary theories, the amplitude is a convex sum of univalent functions. However, a complete classification of circumstances as to when a convex sum of univalent functions leads to a univalent function does not appear to be known in the mathematics literature.
Nevertheless, just by using the univalence of the kernel, we will be able to derive analogues of \eqref{m1} and \eqref{m2}. What we will further show is that as an expansion around $a\sim 0$, the Grunsky inequalities hold as the resulting inequality on $\mW$ is known to hold using either fixed-$t$ or crossing symmetric dispersion relation. Therefore, at least around $a\sim 0$, it is indeed true that the amplitude, and not just the kernel in the crossing symmetric dispersion relation, is univalent. Our numerical checks for known S-matrices, such as 1-loop $\phi^4$, $\pi^0\pi^0\rightarrow \pi^0\pi^0$ arising from the S-matrix bootstrap, tree-level string theory, suggest that there is always a finite region near $a\sim 0$ where univalence holds. Thus we conjecture that we can impose univalence on the amplitude even beyond the leading order in $a$. This gives rise to a non-linear inequality for the Wilson coefficients; as a sanity check, this inequality is satisfied for all the cases studied in this paper.

The discussion above may seem to suggest that we may need to know the full amplitude in QFT to check for this seemingly magical property of univalence. Fortunately, this is not the case. In QFT, we would like to work in an effective field theory framework where we have access to certain derivative order in the low energy expansion. Thus we would need to know about the analogous statement in the mathematics literature, which deals with partial sums (truncations) of $f(z)$. Indeed there is such a theorem, called Szego's theorem! This remarkable theorem allows us to examine univalence for partial sums and, loosely speaking, states that the radius of the disk within which univalence holds for the partial sums of a function $f(z)$, which is univalent inside the unit disk, is at least $1/4$. We will use this theorem to rule out situations where univalence fails.

We should add that we are not the first to discuss univalence in physics: however, such discussion is scarce in the literature. To the best of our knowledge, in the context of high energy scattering amplitudes, such an investigation was first undertaken in the mid-1960s by Khuri and Kinoshita \cite{Khuri:1965zza}. In more recent times, possible use of univalence of complex function has been discussed in the context of scattering amplitudes in \cite {andrea} and in the context of bounding transport coefficients using AdS/CFT in \cite{saso}. We will review all three papers in an appendix. However,  we want to emphasize that our treatment of univalence is quite different from all of these, as will become apparent in due course of time. 

Let us now lay out the organization of the paper. First, we give a survey of the various aspects of univalent functions relevant to our analysis, in section \ref{univalsurvey}. Next, in section \ref{crossreview}, we review the crossing-symmetric dispersion relation and associated structures. Following this, we discuss the bounds on the Taylor coefficients of the scattering amplitude in section \ref{abound} the physical implications of which for Wilson coefficients is discussed in section \ref{wilsonbound}. Next, we derive two-sided bounds on the scattering amplitude in section \ref{ampbound}. In section \ref{EFTunival}, we look for hints of univalence in EFT amplitudes with the aid of Szeg\"{o}'s theorem followed by an exploration of Grunsky inequalities for amplitudes in section \ref{Grunsky}. Finally, we discuss the conclusions and provide outlooks on future directions in section \ref{discuss}. Various explorations associated with the main text providing the analysis with wholesomeness have been placed in the appendices.

\section{Univalent functions and de Branges's theorem: A survey }\label{univalsurvey}
A central theme of complex analysis is to study a complex function by the nature of the mapping produced by the function. A complex function $w=f(z)$ can be geometrically viewed as a mapping from a region in $z-$plane to $w-$plane, defined by $u=u(x,y)$ and $v=v(x,y)$, where $z=x+iy$ and $w=u+iv$. This aspect of complex analysis is known as ``Geometric Function Theory". In geometric function theory, a class of functions called \textbf{univalent functions} play particularly important role. These functions will play a central role in our subsequent QFT analysis. Therefore, we will survey the crucial aspects of the univalent function in this section.

\subsection{Univalent and schlicht functions}
A function $f$ is defined to be \emph{univalent} on a domain\footnote{A domain is defined to be a simply-connected open subset of the complex plane $\mathbb{C}$.} $D\subset\mathbb{C}$ if it is holomorphic\footnote{The definition of univalent function can be extended to consider  meromorphic functions as well. A meromorphic univalent function on a domain $D$ can have at most one simple pole there. See \cite{goluzin} for a discussion. Since we are only concerned with holomorphic function in the present work, we decide to include the requirement of holomorphicity in the definition of the univalent function. } and injective.   
The function is said to be \emph{locally univalent} at a point $z_0\in D$ if it is univalent in some neighbourhood of $z_0$. The function $f$ is locally univalent at some $z_0\in D$ if and only if $f'(z_0)\neq0$. It is to be emphasized that even if a function is locally univalent at \emph{each point} of $D$, it may fail to be univalent \emph{globally} on $D $. For example, the function $f(z)=e^{z}$ is locally univalent at all the points of the disc $\mathbb{D}_r:=\{z: |z|<r\}$ with $r>\pi$, but fails to be globally {univalent}. From now on, by univalent functions, we will always mean globally univalent functions.\par 
We will be primarily concerned with the class $\mathcal{S}$ of univalent functions on the unit disc $\mathbb{D}=\{z:|z|<1\}$, normalized so that $f(0)=0$ and $f'(0)=1$. These functions are also called \emph{schlicht}\footnote{{In literature often, the terms univalent and schlicht are used interchangeably. Conway \cite{conway} reserves the term schlicht for describing univalent functions with the specific normalization introduced thus defining the class of schlicht functions, $\ms$, as a  subclass all the univalent functions. We follow this custom in the present work.}}$^{,}$\footnote{The word \enquote{schlicht} is German and means ``simple"!} functions. Thus each $f\in\mathcal{S}$ has a Taylor series representation of the form
\be \label{taylor}
f(z)=z+\sum_{p=2}^{\infty}b_p z^p,\qquad\qquad |z|<1.
\ee  
Note that a schlicht function $f(z)$ can always be obtained from an arbitrary univalent function defined on $\mathbb{D}$, $g(z)$, by an affine transformation with the definition
\be 
f(z):=\frac{g(z)-g(0)}{g'(0)}.
\ee The usefulness of a schlicht function is that its characteristic normalization makes various numerical estimates pertaining to it simpler compared to an arbitrary univalent function.\par 
The class $\mathcal{S}$ is preserved under a number of transformations. We will mention two of these transformations.
\begin{enumerate}
	\item[(i)] \textbf{Conjugation:} If $f(z)$ belongs to $\mathcal{S}$, so does 
	\be 
	g(z)=g(z^{*})^{*}=z+\sum_{p=2}^{\infty}b_p^{*} z^p.
	\ee 
	\item[(ii)]\textbf{Rotation:} The \emph{rotation of a function $f$} is defined by 
	\be 
	f_{\theta}(z):=e^{-i\th}f(e^{i\th}z), \qquad \qquad \th\in\mathbb{R}
	\ee If $f\in\mathcal{S}$ then $f_\th\in\mathcal{S}$ as well for every $\th\in\mathbb{R}$.
\end{enumerate}
\paragraph{Koebe Function:} 
The leading example of a \emph{schlicht} function is the \emph{Koebe function}
\be \label{Koebe}
k(z):=\frac{z}{(1-z)^2}=z+\sum_{p=2}^{\infty}p\, z^p.
\ee  
Koebe function and its rotations are often solutions to various extremal problems pertaining to \emph{schlicht} functions. Koebe function will play a central role in our subsequent analysis of scattering amplitudes.\par 
Now that we have given a brief overview of univalent functions and the subclass of schlicht functions thereof, we will discuss a few crucial theorems and results on the schlicht functions, which will play critical roles in our analysis of the crossing-symmetric dispersive representation of scattering amplitudes.
\subsection{Conditions for univalence of a function}
In the previous section, we laid down basic notions of univalent and schlicht functions.  We saw that the condition of non-vanishing first derivative of the function over a domain is not sufficient for the function to be globally univalent on the domain. However, the condition is a necessary one. In this section, we will discuss two important sufficient conditions and two necessary conditions for univalence, or equivalently schlichtness (with the specific normalization), of a function on the unit disc. 
\subsubsection{Grunsky inequalities}
Grunsky inequalities \cite{goluzin} are necessary {\it and} sufficient inequalities satisfied by a function $f$ to be a schlicht on the unit disc. 

Consider a holomorphic function $f:\mathbb{D}\to\mathbb{C}$  with the power series representation given by \eqref{taylor}, and let
\be 
\ln\frac{f(t)-f(z)}{t-z}=\sum_{j,k=0}^{\infty}\omega_{j,k}\,t^jz^k,
\ee 
with constant coefficients $\{\omega_{j,k}\}$. These are called Grunsky coefficients. It is straightforward to observe that $\w_{j,k}=\w_{k,j}$. These coefficients have an interesting property. Let $h:\mathbb{D}\to\mathbb{C}$ be a composition of a Mobius transformation with $f$, i.e.
\be 
h(z):=\frac{af(z)+b}{cf(z)+d},\quad ad-bc\ne 0,
\ee 
and let $\{\widetilde{\w}_{j,k}\}$ be corresponding Grunsky coefficients. Then,
\be \label{grunskymob}
\widetilde{\w}_{j,k}=\w_{j,k},\quad\forall\,\, j,k\ge1.
\ee 

\begin{theorem}
	$f\in\ms $  if and only if the corresponding Grunsky coefficients satisfy the inequalities
	\be\label{eq:grunslyomega}
	\left|\sum_{j,k=1}^N \omega_{j,k}\lambda_j \lambda_k\right|\le\sum_{k=1}^N\frac{1}{k}\left|\lambda_k\right|^2
	\ee for every positive integer $N$ and all $\lambda_k$, $k=1,\dots,N$.\end{theorem}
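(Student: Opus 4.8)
The plan is to prove the two directions separately. For necessity (schlichtness $\Rightarrow$ the inequalities) I would run the classical Grunsky argument built on the area theorem; for sufficiency (the inequalities $\Rightarrow$ schlichtness) I would argue by contradiction, converting a coincidence $f(z_1)=f(z_2)$ into a violation of \eqref{eq:grunslyomega}. As a sanity check, the Koebe function $k(z)=z/(1-z)^2$ has $\w_{j,k}=-\d_{jk}/j$ for $j,k\ge1$, so its Grunsky form is $-\sum_n\lambda_n^2/n$ and equality in \eqref{eq:grunslyomega} is attained, e.g.\ for real $\lambda$.

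\emph{Necessity.} I would first pass to the class $\Sigma$ of maps univalent on $\{|\zeta|>1\}$: if $f\in\ms$ then $g(\zeta):=1/f(1/\zeta)=\zeta+c_0+c_1\zeta^{-1}+\cdots$ is such a map (injectivity of $f$ makes it zero-free off the origin, so $g$ is holomorphic there), and a short computation — consistent with the M\"obius invariance \eqref{grunskymob} applied to $w\mapsto1/w$ — shows that the coefficients $\w_{j,k}$ with $j,k\ge1$ are the same whether read from $f$ on $\mathbb{D}$ or from $g$ on the exterior. Next, introduce the Faber polynomials $\Phi_n$ of $g$, fixed by $\log\frac{g(\zeta)-w}{\zeta}=-\sum_{n\ge1}\frac1n\Phi_n(w)\zeta^{-n}$, which satisfy the Grunsky identity $\Phi_n(g(\zeta))=\zeta^n+n\sum_{k\ge1}\w_{n,k}\zeta^{-k}$, and for given $N,\lambda_1,\dots,\lambda_N$ form
\be
F(\zeta):=\sum_{n=1}^N\frac{\lambda_n}{n}\,\Phi_n(g(\zeta))=\sum_{n=1}^N\frac{\lambda_n}{n}\,\zeta^n+\sum_{k=1}^\infty\mu_k\,\zeta^{-k},\qquad \mu_k=\sum_{n=1}^N\w_{n,k}\,\lambda_n .
\ee
The area theorem — for a univalent map the Laurent weight $\sum k|\cdot|^2$ of the negative powers cannot exceed that of the positive powers — then gives $\sum_{k\ge1}k|\mu_k|^2\le\sum_{n=1}^N\frac1n|\lambda_n|^2$, and Cauchy--Schwarz closes it:
\be
\Abs{\sum_{j,k=1}^N\w_{j,k}\lambda_j\lambda_k}=\Abs{\sum_{k=1}^N\mu_k\lambda_k}\le\Big(\sum_k k|\mu_k|^2\Big)^{1/2}\Big(\sum_k\tfrac1k|\lambda_k|^2\Big)^{1/2}\le\sum_{k=1}^N\frac1k|\lambda_k|^2 .
\ee
The genuinely delicate point is that $F$, having a pole of order $N$ at $\infty$, is not univalent, so the area theorem does not apply to it verbatim; legitimizing the estimate needs the standard machinery (exponentiating $g$ to univalent maps with simple poles, or a direct Dirichlet-integral computation on annuli $1<|\zeta|<R$), which I would import from \cite{goluzin}.

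\emph{Sufficiency.} I would prove the contrapositive. If $f$ is holomorphic on $\mathbb{D}$ with $f(0)=0$, $f'(0)=1$ but not injective, then by the open mapping theorem there are distinct $z_1,z_2\in\mathbb{D}$ with $f(z_1)=f(z_2)=:w_0$; replacing $w_0$ by a generic regular value nearby we may take $w_0\ne0$, hence also $z_i\ne0$ and $f'(z_i)\ne0$. Discarding the $0$-index terms from the defining series gives, for $a,b\in\mathbb{D}$,
\be
A(a,b):=\sum_{j,k\ge1}\w_{j,k}\,a^jb^k=\log\frac{ab\,(f(a)-f(b))}{(a-b)\,f(a)f(b)} .
\ee
With $\lambda_k=\xi_1 z_1^{\,k}+\xi_2 z_2^{\,k}$ (an $\ell^2$ sequence, since $|z_i|<1$) and $\w_{j,k}=\w_{k,j}$, the Grunsky form becomes
\be
\sum_{j,k\ge1}\w_{j,k}\lambda_j\lambda_k=\xi_1^2\,A(z_1,z_1)+2\xi_1\xi_2\,A(z_1,z_2)+\xi_2^2\,A(z_2,z_2),
\ee
in which $A(z_i,z_i)=\log\big(z_i^2 f'(z_i)/f(z_i)^2\big)$ and $\sum_k\frac1k|\lambda_k|^2$ are finite, while the cross term carries $f(z_1)-f(z_2)=0$, so $\mathrm{Re}\,A(z_1,z_2)=-\infty$. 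An Abel-summation argument then forces the partial sums $\sum_{j,k=1}^N\w_{j,k}\lambda_j\lambda_k$ to have real part $\to-\infty$, whereas the right side of \eqref{eq:grunslyomega} on the length-$N$ truncation of $\lambda$ stays bounded by $\sum_{k\ge1}\frac1k|\lambda_k|^2<\infty$; so \eqref{eq:grunslyomega} fails for $N$ large, a contradiction.

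\emph{Main obstacle.} I expect the real work to lie in the sufficiency limit: when $f$ is not univalent the double series $\sum_{j,k\ge1}\w_{j,k}\lambda_j\lambda_k$ need not converge, so the blow-up must be teased out of the logarithmic singularity of $A$ at $(z_1,z_2)$ — one has to show that the Abel-regularized value being $-\infty$ forces $\liminf_N\mathrm{Re}\sum_{j,k=1}^N\w_{j,k}\lambda_j\lambda_k=-\infty$, with some care over square- versus triangular-index partial sums. On the necessity side the analogue is legitimizing the area estimate for the non-univalent $F$; with that and the Faber/Grunsky identity in hand, the remaining steps are routine.
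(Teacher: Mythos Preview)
The paper does not prove this theorem at all: it is stated as a classical result in geometric function theory and the reader is referred to Goluzin \cite{goluzin} for the proof. So there is no ``paper's own proof'' against which to compare your proposal.

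That said, your outline is essentially the standard textbook argument (the one found in Goluzin and in Duren \cite{duren}), and you have correctly flagged the two places where real work is needed. On the necessity side, the area estimate for $F(\zeta)=\sum_{n=1}^N\frac{\lambda_n}{n}\Phi_n(g(\zeta))$ is indeed not an application of the area theorem for univalent maps in $\Sigma$ (since $F$ has an order-$N$ pole), but follows from a direct Dirichlet-integral computation: one integrates $|F'|^2$ over an annulus $1<|\zeta|<R$, uses that the complement of $g(\{|\zeta|>R\})$ has finite area, and lets $R\downarrow1$. This is the ``polynomial area theorem'' and is routine once written out; your deferral to \cite{goluzin} is appropriate. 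On the sufficiency side, your identification of the obstacle is accurate: when $f$ is not univalent the double series defining $A(z_1,z_2)$ need not converge at the offending pair, so one must extract the divergence from the finite-$N$ partial sums. The clean way to do this (and what the classical proofs actually do) is to observe that the Grunsky inequalities \eqref{eq:grunslyomega} for all $N$ force the quadratic forms to be uniformly bounded, which in turn forces the generating series $\log\frac{f(t)-f(z)}{t-z}$ to extend holomorphically to the full bidisc $\mathbb{D}\times\mathbb{D}$; hence $f(t)\neq f(z)$ for $t\neq z$. Your Abel-summation route would work but is more laborious than necessary.

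In short: your proposal is the classical proof, correctly sketched with the genuine subtleties named; the paper simply quotes the result rather than reproving it.
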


As an example, the Grunsky coeffiecients of the Koebe function $k(z)$ are given by $\omega_{j,0}=\omega_{0,j}=2/j,\,\w_{j,k}=-\d_{j,k}/j$, with $\d_{j,k}$ being the usual Kronecker delta.  
{\paragraph{Logarithmic coefficients:} Note that the Grunsky coefficients $\{\w_{j,0}\}$ do not enter the inequality above. One wonders whether there exists any bounding relations satisfied by these Grunsky coefficients. Indeed they satisfy very important and interesting bounds. In the literature, these Grunsky coefficients are studied as \emph{logarithmic coefficients} because of the simple observation
	\be 
	\ln\frac{f(z)}{z}=\sum_{n=0}^\infty \w_{n,0}\,z^n,\quad f\in\mathcal{S}.
	\ee The \emph{logarithmic coefficients } $\{\g_n\}$ are defined by 
	\be 
	\g_n=\w_{n,0}/2.
	\ee These logarithmic coefficients satisfy interesting inequalities. They satisfy the celebrated de Branges's inequalities (previously Milin conjecture ) \cite{branges} which state that for   $f\in \mathcal{S}$ the corresponding logarithmic coefficients satisfy
	\be 
	\sum_{k=1}^n k(n-k+1)\,|\g_k|^2\le \sum_{k=1}^n \frac{n+1-k}{k},\qquad n=1, 2,\dots,
	\ee the equality is satisfied if and only if $f$ is a rotation of Koebe function. de Branges used this inequality in his  proof of the famed Bieberbach conjecture [see section \ref{debrangestheo} ]. There have been attempts  at obtaining sharp bounds on the individual coefficients $|\g_n|$. While the following sharp estimates have been obtained \cite{branges} 
	\begin{align}
		|\g_1|<1,\qquad |\g_2|\le \frac{1}{2}\left(1+2e^{-2}\right),
	\end{align} the problem of finding sharp upper bounds for $|\g_n|$ for $n\ge 3$ in general is still an open one. However, there are some sharp estimates for modulus of logarithmic coefficients in  some subclasses of $\ms$.}

\subsubsection{Nehari conditions}
In a  seminal work, Nehari \cite{Nehari} provided a necessary and a sufficient condition for the univalence of a function on the unit disc $\mathbb{D}$. The conditions are expressed in terms of Schwarzian derivative of the function. Schwarzian derivative of a function $f(z)$ w.r.t $z$ is defined by
\be\label{eq:sch}
\{f(z),z\}:=\left(\frac{f''(z)}{f'(z)}\right)'-\frac{1}{2}\left(\frac{f''(z)}{f'(z)}\right)^2.
\ee 
One advantage of these conditions are that these are independent of the normalization corresponding to the schlicht functions. Thus, these conditions work with univalent functions with generic power series 
\be 
g(z)=b_0+b_1z+b_2z^2+\dots\,.
\ee  This is to be contrasted with the Grunsky inequalities whose precise form requires the normalization of schlicht functions.\par 
The conditions can be stated as following theorems.
\begin{theorem}[{\bf Sufficient condition}]\label{NehariS}
	A function $g(z)$ holomorphic on the open disc $\mathbb{D}$ will be univalent if its Schwarzian derivative satisfies the inequality
	\be 
	\Abs{\{g(z);z\}}\le\frac{2}{(1-|z|^2)^2}.
	\ee 
\end{theorem}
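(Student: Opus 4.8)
The plan is to prove Theorem~\ref{NehariS} along the classical lines due to Nehari: use the conformal invariances of the Schwarzian derivative to normalize the hypothetical failure of injectivity, translate the problem into a statement about zeros of solutions of the linear ODE $u''+\tfrac12\{g;z\}\,u=0$, and close the argument with a sharp one–dimensional Hardy-type inequality. First I would record the two relevant invariances. The Schwarzian is unchanged under post-composition of $g$ with any M\"obius map, and under pre-composition with a disk automorphism $\phi$ it transforms as $\{g\circ\phi;z\}=\{g;\phi(z)\}\,\phi'(z)^2$; combined with the identity $1-|\phi(z)|^2=|\phi'(z)|\,(1-|z|^2)$, this shows the bound $\Abs{\{g(z);z\}}\le 2/(1-|z|^2)^2$ is preserved under both operations, while injectivity of $g$ is also preserved by both. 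Note that finiteness of $\{g;z\}$ on $\mathbb{D}$ already forces $g'\ne 0$ there, so $g$ is locally univalent. Hence, if $g$ were not univalent, say $g(z_1)=g(z_2)$ with $z_1\ne z_2$, then after pre-composing with an automorphism of $\mathbb{D}$ and a rotation, and post-composing with an affine map and a M\"obius map of the form $w\mapsto w/(1+cw)$, I may assume $g(0)=g(x_0)=0$ for some $x_0\in(0,1)$, with $g'(0)=1$ and $g''(0)=0$, still subject to the same Schwarzian bound.

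Next I would pass to the ODE representation. Put $p:=\tfrac12\{g;z\}$, which is holomorphic on $\mathbb{D}$ with $|p(z)|\le(1-|z|^2)^{-2}$, and let $u_1,u_2$ solve $u''+p\,u=0$ with $u_1(0)=0$, $u_1'(0)=1$, $u_2(0)=1$, $u_2'(0)=0$; these are holomorphic on $\mathbb{D}$ and their Wronskian is the constant $-1$. A short computation gives $\{u_1/u_2;z\}=2p=\{g;z\}$, and since $u_1/u_2$ agrees with $g$ to second order at the origin and any two functions with the same Schwarzian differ by a M\"obius transformation, $g=u_1/u_2$. Since $g$ is holomorphic, $u_2$ does not vanish in $\mathbb{D}$, so $g(x_0)=0$ forces $u_1(x_0)=0$. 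Thus the solution $u_1$ of the ODE vanishes at both $0$ and $x_0\in(0,1)$.

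Then I would derive a contradiction as follows. Restrict $u:=u_1$ to the real segment $[0,x_0]\subset(-1,1)$, where it is smooth, not identically zero, and vanishes at both endpoints. Multiplying $u''=-p\,u$ by $\bar u$ and integrating by parts over $[0,x_0]$ (the boundary terms vanish) gives
\be
\int_0^{x_0}|u'(x)|^2\,dx=\int_0^{x_0}p(x)\,|u(x)|^2\,dx\ \le\ \int_0^{x_0}\frac{|u(x)|^2}{(1-x^2)^2}\,dx .
\ee
On the other hand I claim the strict inequality
\be
\int_0^{x_0}|u'(x)|^2\,dx>\int_0^{x_0}\frac{|u(x)|^2}{(1-x^2)^2}\,dx
\ee
holds for every $x_0\in(0,1)$ and every smooth $u\not\equiv0$ with $u(0)=u(x_0)=0$. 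This I would prove by writing $u=\varphi\,\psi$ with $\varphi(x):=(1-x^2)^{1/2}$, a positive solution of $-\varphi''=\varphi/(1-x^2)^2$ on $(-1,1)$, and $\psi:=u/\varphi$, which vanishes at $0$ and $x_0$; a Picone-type integration by parts then yields $\int_0^{x_0}|u'|^2\,dx=\int_0^{x_0}\frac{|u|^2}{(1-x^2)^2}\,dx+\int_0^{x_0}\varphi^2|\psi'|^2\,dx$, and the last integral is strictly positive because $\psi\not\equiv0$. Comparing the two displays gives the desired contradiction, so $g$ is univalent.

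I expect the main obstacle to be the sharp Hardy-type inequality and, above all, the strictness in it: the natural extremal $\varphi(x)\cdot\tfrac12\log\frac{1+x}{1-x}$ vanishes at $0$ but, for every $x_0<1$, does not vanish at $x_0$, which is precisely why the inequality on $[0,x_0]$ is strict and hence yields a contradiction. This is the same phenomenon that makes the borderline function $\tfrac12\log\frac{1+z}{1-z}$ — which maps $\mathbb{D}$ univalently onto a horizontal strip and has Schwarzian exactly $2/(1-z^2)^2$ — consistent with the theorem rather than a counterexample to it. The remaining steps (the automorphism-invariance identity, and the passage between a holomorphic function and the quotient of two solutions of a linear ODE with the same Schwarzian) are standard but need to be stated with some care.
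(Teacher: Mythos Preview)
The paper does not actually prove Theorem~\ref{NehariS}; it merely states it as a classical result and cites Nehari's original 1949 paper. So there is no in-paper proof to compare against.

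Your argument is correct and is essentially Nehari's own proof: exploit the M\"obius/automorphism invariances of the hypothesis to normalize a hypothetical coincidence $g(0)=g(x_0)=0$ with $x_0\in(0,1)$, realize $g$ as the quotient $u_1/u_2$ of two solutions of $u''+\tfrac12\{g;z\}\,u=0$, deduce that $u_1$ has two zeros on $[0,x_0]$, and reach a contradiction via the sharp Hardy-type inequality obtained from the factorization $u=(1-x^2)^{1/2}\psi$. One cosmetic point: in the line
\[
\int_0^{x_0}|u'|^2\,dx=\int_0^{x_0}p(x)\,|u|^2\,dx\ \le\ \int_0^{x_0}\frac{|u|^2}{(1-x^2)^2}\,dx
\]
the coefficient $p=\tfrac12\{g;z\}$ is in general complex, so the middle expression need not be real; what you really mean is $\int_0^{x_0}|u'|^2\,dx=\mathrm{Re}\!\int_0^{x_0}p\,|u|^2\,dx\le\int_0^{x_0}|p|\,|u|^2\,dx\le\int_0^{x_0}|u|^2/(1-x^2)^2\,dx$. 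With that adjustment the chain of inequalities is clean and the strict Picone remainder $\int\varphi^2|\psi'|^2>0$ gives the contradiction exactly as you say.
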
 	
\begin{theorem}[{\bf Necessary condition}]\label{NehariN}
	If a holomorphic function $g(z)$ is univalent in the open disc $\mathbb{D}$ then
	\be 
	\Abs{\{g(z);z\}}\le\frac{6}{(1-|z|^2)^2}.
	\ee 
\end{theorem}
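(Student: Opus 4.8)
To prove Theorem~\ref{NehariN} the plan is to \emph{localize}: use the conformal covariance of the Schwarzian derivative to move an arbitrary point $z_0\in\mathbb{D}$ to the origin, reducing the pointwise estimate to the single statement $\Abs{\{h;0\}}\le 6$ for every function $h$ univalent on $\mathbb{D}$, and then to read off this bound from the $N=1$ case of the Grunsky theorem, eq.~\eqref{eq:grunslyomega}.

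First I would recall the composition law $\{h\circ\phi;z\}=\{h;\phi(z)\}\,\phi'(z)^2+\{\phi;z\}$ together with the vanishing $\{T;z\}\equiv0$ for every M\"obius map $T$. Fix $z_0\in\mathbb{D}$ and take the disc automorphism $\phi_{z_0}(w)=\dfrac{w+z_0}{1+\bar z_0 w}$, which satisfies $\phi_{z_0}(0)=z_0$ and $\phi_{z_0}'(0)=1-|z_0|^2$. Since $\phi_{z_0}$ maps $\mathbb{D}$ bijectively onto $\mathbb{D}$ and $g$ is injective there, $h:=g\circ\phi_{z_0}$ is holomorphic and univalent on $\mathbb{D}$; evaluating the composition law at $w=0$, where $\{\phi_{z_0};0\}=0$ because $\phi_{z_0}$ is M\"obius, gives
\be
\{g;z_0\}\,(1-|z_0|^2)^2=\{h;0\}\,.
\ee
As $z_0$ is arbitrary, Theorem~\ref{NehariN} follows once we show $\Abs{\{h;0\}}\le6$ for an arbitrary univalent $h$ on $\mathbb{D}$.

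Because the Schwarzian is unchanged under post-composition by an affine (indeed any M\"obius) map, I may replace $h$ by $(h-h(0))/h'(0)$ and hence assume $h\in\ms$, say $h(z)=z+b_2z^2+b_3z^3+\cdots$, with the same value $\{h;0\}$. A short expansion of $(h''/h')'-\tfrac12(h''/h')^2$ at $z=0$ gives $\{h;0\}=6(b_3-b_2^2)$, and expanding $\ln\frac{h(t)-h(z)}{t-z}$ identifies the Grunsky coefficient $\omega_{1,1}$ of $h$ with $b_3-b_2^2$; hence $\{h;0\}=6\,\omega_{1,1}$. Applying the Grunsky inequality~\eqref{eq:grunslyomega} with $N=1$ and $\lambda_1=1$ yields $\Abs{\omega_{1,1}}\le1$, so $\Abs{\{h;0\}}\le6$, which combined with the previous display proves the theorem. (Alternatively, $\Abs{b_3-b_2^2}\le1$ can be obtained by noting that $1/f(1/z)+b_2=z+(b_2^2-b_3)z^{-1}+\cdots$ lies in the class $\Sigma$ of functions univalent on $|z|>1$ and invoking the area theorem.)

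There is no serious obstacle here: the substance is entirely contained in the Grunsky (or area) theorem, available from the survey above, and the remainder is bookkeeping with the Schwarzian. The step where care is needed is the localization: $\phi_{z_0}$ must enter as a \emph{pre}-composition, so that the additive term is $\{\phi_{z_0};0\}$ (which vanishes) rather than the Schwarzian of some post-composed map, and one must retain the weight $\phi_{z_0}'(0)^2=(1-|z_0|^2)^2$; the reduction to a schlicht $h$ is legitimate precisely because post-composition with affine maps leaves the Schwarzian invariant. I would also flag that the companion sufficient condition, Theorem~\ref{NehariS} with the sharper constant $2$, does \emph{not} emerge from this argument and requires a genuinely different method (a non-vanishing/comparison argument for solutions of $u''+\tfrac12\{h;z\}u=0$), but that is not needed for the necessary condition stated here.
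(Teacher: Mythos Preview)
Your argument is correct: the localization via the disc automorphism $\phi_{z_0}$ and the Schwarzian cocycle reduces the claim to $\Abs{\{h;0\}}\le 6$ for $h\in\ms$, and the computation $\{h;0\}=6(b_3-b_2^2)=6\,\omega_{1,1}$ together with the $N=1$ Grunsky inequality (equivalently, the area theorem applied to $1/h(1/z)$) gives exactly this. The bookkeeping points you flag---pre- versus post-composition, and the weight $(1-|z_0|^2)^2$---are handled properly.

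There is nothing to compare against, however: the paper does \emph{not} prove Theorem~\ref{NehariN}. It appears in the survey section~\ref{univalsurvey} as a quoted background result, attributed to Nehari~\cite{Nehari}, with no argument supplied. So your write-up is not an alternative to the paper's proof but rather a self-contained justification of a statement the paper takes as known. If anything, your approach is the standard modern one (reduce to the origin, then invoke the $|b_3-b_2^2|\le1$ consequence of the area/Grunsky theorem), and it dovetails nicely with the Grunsky machinery that the paper \emph{does} discuss in Section~\ref{univalsurvey}.
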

\subsection{Koebe growth theorem}
A very important theorem that shines in our analysis of scattering amplitude is the \emph{Koebe Growth Theorem}. This theorem, essentially, puts upper and lower bound on the magnitude of a schlicht function $f\in\mathcal{S}$. 
\begin{theorem}\label{growth}
	If $f\in\mathcal{S}$ and $|z|<1$, then
	
	\be 
	\frac{|z|}{(1+|z|)^2}\le |f(z)|\le \frac{|z|}{(1-|z|)^2}.
	\ee 
	
	One of the equalities holds at some point $z\ne0$ if and only if $f$ is a rotation of the Koebe function.\end{theorem}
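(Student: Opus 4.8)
The plan is to reduce Theorem~\ref{growth} to the Bieberbach bound $|b_2|\le 2$ for $f\in\ms$ and then integrate twice. First I would establish $|b_2|\le 2$: given $f(z)=z+b_2z^2+\cdots\in\ms$, the odd transform $h(z)=\sqrt{f(z^2)}=z+\tfrac12 b_2 z^3+\cdots$ is again univalent on $\mathbb D$, so $g(\zeta)=1/h(1/\zeta)=\zeta-\tfrac12 b_2\zeta^{-1}+\cdots$ is univalent on $|\zeta|>1$; the area theorem (the complement of the image of $g$ has non-negative area) gives $\sum_{n\ge1} n|c_n|^2\le 1$ for $g(\zeta)=\zeta+\sum_n c_n\zeta^{-n}$, and retaining only the $n=1$ term yields $|b_2|\le2$.

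Second, I would feed this into the Koebe transform. For fixed $z_0\in\mathbb D$ the function $f_{z_0}(z)=\big[f(\tfrac{z+z_0}{1+\bar z_0 z})-f(z_0)\big]\big/\big[(1-|z_0|^2)f'(z_0)\big]$ again belongs to $\ms$, and a direct computation gives its second coefficient $b_2(f_{z_0})=\tfrac12\big[(1-|z_0|^2)\tfrac{f''(z_0)}{f'(z_0)}-2\bar z_0\big]$. The bound $|b_2(f_{z_0})|\le2$, after setting $z_0=\rho e^{i\phi}$, multiplying by $\rho e^{i\phi}/(1-\rho^2)$ and taking real parts, becomes the differential inequality $\big|\rho\,\partial_\rho\log|f'(\rho e^{i\phi})|-\tfrac{2\rho^2}{1-\rho^2}\big|\le\tfrac{4\rho}{1-\rho^2}$, using $\mathrm{Re}\big(\rho e^{i\phi}f''/f'\big)=\rho\,\partial_\rho\log|f'|$. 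Dividing by $\rho$ and integrating from $0$ to $r=|z|$ with $|f'(0)|=1$ produces the distortion estimate $\tfrac{1-r}{(1+r)^3}\le|f'(z)|\le\tfrac{1+r}{(1-r)^3}$.

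Third, I would integrate the distortion estimate along suitable paths. The upper growth bound is immediate: $|f(z)|\le\int_0^{|z|}|f'(te^{i\arg z})|\,dt\le\int_0^{|z|}\tfrac{1+t}{(1-t)^3}\,dt=\tfrac{|z|}{(1-|z|)^2}$. For the lower bound, if $|f(z)|\ge\tfrac14$ then $\tfrac{|z|}{(1+|z|)^2}<\tfrac14\le|f(z)|$ and we are done; otherwise the Koebe $1/4$-theorem (another consequence of $|b_2|\le2$) ensures the straight segment from $0$ to $f(z)$ lies in $f(\mathbb D)$, so pulling it back under $f^{-1}$ to a curve $\Gamma$ from $0$ to $z$ gives $|f(z)|=\int_\Gamma|f'(\zeta)|\,|d\zeta|\ge\int_\Gamma|f'(\zeta)|\,d|\zeta|\ge\int_0^{|z|}\tfrac{1-\rho}{(1+\rho)^3}\,d\rho=\tfrac{|z|}{(1+|z|)^2}$, where the last step uses that $|\zeta|$ ranges over all of $[0,|z|]$ along $\Gamma$. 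Finally the extremal statement follows by tracking equality backwards: saturation at one point forces the distortion bound to be an equality along the whole ray, hence $|b_2(f_{z_0})|=2$, and the equality case of the area theorem ($\sum n|c_n|^2=1$) then forces $f$ to be a rotation of the Koebe function $k(z)=z/(1-z)^2$.

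The step I expect to be the main obstacle is the lower bound. It requires first proving the Koebe $1/4$-theorem, and then a careful treatment of the pulled-back curve $\Gamma$: since $|\zeta|$ need not be monotone along $\Gamma$, one must argue that the arc-length integral of $|f'|$ dominates the corresponding radial integral before combining with the lower distortion estimate. The extremal analysis is likewise delicate, since it rests on the sharp equality case of the area theorem rather than merely the inequality.
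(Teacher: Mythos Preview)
The paper does not give a proof of Theorem~\ref{growth}; it is stated without proof in the survey Section~\ref{univalsurvey} as a classical result from geometric function theory, with the monographs of Goluzin and Duren cited for background. Your proposal is precisely the standard textbook argument (area theorem $\Rightarrow$ $|b_2|\le2$ $\Rightarrow$ Koebe distortion theorem $\Rightarrow$ growth theorem) that one finds in those references, so there is nothing in the paper to compare against; your route is the canonical one.

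One small correction in your lower-bound chain. As written, the step
\[
\int_\Gamma|f'(\zeta)|\,d|\zeta|\ \ge\ \int_0^{|z|}\frac{1-\rho}{(1+\rho)^3}\,d\rho
\]
is not justified when $|\zeta|$ is non-monotone along $\Gamma$: applying the distortion bound $|f'(\zeta)|\ge\phi(|\zeta|):=\tfrac{1-|\zeta|}{(1+|\zeta|)^3}$ on a piece where $d|\zeta|<0$ reverses the inequality. The fix is to swap the order of the two operations. First apply the distortion bound inside the arc-length integral, where the measure $|d\zeta|$ is non-negative, to get $\int_\Gamma|f'(\zeta)|\,|d\zeta|\ge\int_\Gamma\phi(|\zeta|)\,|d\zeta|$; then use $|d\zeta|\ge d|\zeta|$ together with $\phi\ge0$ to obtain $\int_\Gamma\phi(|\zeta|)\,|d\zeta|\ge\int_\Gamma\phi(|\zeta|)\,d|\zeta|$, and the last integral equals $\int_0^{|z|}\phi(\rho)\,d\rho$ by the fundamental theorem of calculus regardless of monotonicity. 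You correctly flag this as the delicate step, but your verbal description of the remedy (``the arc-length integral of $|f'|$ dominates the corresponding radial integral before combining with the lower distortion estimate'') has the two operations in the wrong order.
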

We want to emphasize that the bounds are the consequence of $f$ being univalent. Thus, the converse of the theorem need not be true, i.e. a function defined on the unit disc $\mathbb{D}$ with the normalization same as that of a schlicht function satisfying any one of the four bounding relations above need not be univalent. 
\subsection{de Branges's theorem}\label{debrangestheo}
One of the most important properties of univalent functions is that its Taylor coefficients are bounded. For a schlicht function $f\in\mathcal{S}$, Bieberbach proved in 1916 \cite{bieberbach} that the second coefficient $b_2$ in the Taylor series representation \eqref{taylor} is bounded as 
\be 
|b_2|\le 2,
\ee with equality holding \emph{if and only if} $f$ is a rotation of the Koebe function.  
In the same work, Bieberbach conjectured the following bound for the general coefficient $b_n$:
\be 
|b_n|\le n,\qquad \forall \,n\ge 2;
\ee  with the equality holding \emph{if and only if} $f$ is a rotation of the Koebe function. This conjecture came to be known as the famed \emph{Bieberbach Conjecture} and resisted a rigorous proof for about seven decades until Louis de Branges proved it in 1985 \cite{branges}, and the result came to be known as \emph{de Branges's Theorem.} For completeness, let us note down the full statement of de Branges's theorem.\par 
\begin{theorem}\label{bieber}
	Let $f$ be an arbitrary schlicht function,  $f\in \mathcal{S}$, with the power series representation defined by \eqref{taylor}. Then the Bieberbach conjecture holds true, i.e.
	\be 
	|b_n|\le n, \qquad\qquad\forall\,n\ge2;
	\ee with the equality holding if and only if $f$ is a rotation of the Koebe function $k(z)$ defined in \eqref{Koebe}, i.e. if and only if
	\be 
	f(z)=e^{-i\th}k(e^{i\th}z),\qquad\qquad \forall\,\th\in\mathbb{R}.
	\ee \end{theorem}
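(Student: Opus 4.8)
The plan is to follow the route by which de Branges finally settled this, which funnels everything through two classical reductions and one hard analytic core. \emph{Step 1 (reduce to Robertson).} Given $f\in\ms$ with series \eqref{taylor}, form the odd square-root transform $h(z):=\sqrt{f(z^2)}=z+c_3z^3+c_5z^5+\cdots$, which is again schlicht; squaring back gives the coefficient identity $b_n=\sum_{k=1}^{n}c_{2k-1}\,c_{2(n-k)+1}$, and Cauchy--Schwarz turns this into $|b_n|\le\sum_{k=1}^{n}|c_{2k-1}|^2$. So it suffices to prove the Robertson statement $\sum_{k=1}^{n}|c_{2k-1}|^2\le n$ for every odd schlicht function. \emph{Step 2 (reduce to Milin).} Introduce the logarithmic coefficients $\g_k$ of $f$ via $\log\big(f(z)/z\big)=2\sum_{k\ge1}\g_k z^k$, so that $h(z)/z=\exp\big(\sum_{k\ge1}\g_k z^{2k}\big)$. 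The second Lebedev--Milin inequality --- a purely formal comparison of the Taylor coefficients of $e^{P(z)}$ with those of $P(z)$ --- converts the target $\sum_{k=1}^{n}|c_{2k-1}|^2\le n$ into the Milin inequality
\be
\sum_{k=1}^{n-1}(n-k)\Big(k\,|\g_k|^2-\frac1k\Big)\le 0,\qquad\forall\,n\ge2 .
\ee
Both reductions are elementary and predate 1984; all the difficulty now sits in the Milin inequality.

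\emph{Step 3 (prove Milin via Loewner's equation --- the heart).} Since single-slit maps are dense in $\ms$, it is enough to handle an $f$ sitting in a Loewner chain $f(z,t)$, $t\ge0$, with $f(z,0)=f(z)$ and $e^{-t}f(z,t)\in\ms$, obeying $\pd_t f=z\,\pd_z f\,\tfrac{1+\k(t)z}{1-\k(t)z}$ with $|\k(t)|=1$. Let $\g_k(t)$ be the logarithmic coefficients of $e^{-t}f(z,t)$; they satisfy a first-order system driven by powers of $\k(t)$, and $\g_k(t)\to0$ as $t\to\infty$. Following de Branges, I would introduce nonnegative weight functions $\s_1(t),\dots,\s_{n-1}(t)$ fixed by a terminal-value ODE system with $\s_k(0)=n-k$, $\s_k(\infty)=0$, and a definite sign for a specific combination of the $\dot\s_k$, and form
\be
\F(t):=\sum_{k=1}^{n-1}\s_k(t)\Big(k\,|\g_k(t)|^2-\frac1k\Big),
\ee
so that $\F(0)$ is precisely the Milin expression and $\F(\infty)=0$. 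Differentiating and eliminating $\dot\g_k$ with the Loewner PDE, one rearranges $\dot\F(t)$ into a sum of terms $-(\text{coefficient})\,\big|\sum_j(\cdots)\,\k(t)^j\big|^2$; hence $\dot\F\ge0$, and therefore $\F(0)\le\F(\infty)=0$, provided all those coefficients are nonnegative.

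\emph{Step 4 (positivity input, rigidity, obstacle).} Solving the $\s_k$-system in closed form expresses the coefficients above in terms of Jacobi polynomials, and their nonnegativity becomes exactly the Askey--Gasper inequality $\sum_{k=0}^{m}P_k^{(\a,0)}(x)\ge0$ for $-1\le x\le1$, $\a>-2$; I would invoke this as the single genuinely hard external ingredient (it itself admits a hypergeometric/generating-function proof). Tracking equality back through the Cauchy--Schwarz step of Step 1 and the Lebedev--Milin estimate of Step 2 --- equality at level $n$ forces $|\g_k|=1/k$ with aligned phases for $k\le n$, i.e. $f$ a rotation of the Koebe function $k(z)$ of \eqref{Koebe} --- supplies the ``if and only if'' clause. \textbf{The main obstacle} is Step 3: divining the correct weight functions $\s_k(t)$ and establishing the sign of $\dot\F$ is the conceptual leap that resisted seventy years of effort, and reducing that sign to a clean special-function inequality (and then proving it) is the hardest technical part; Steps 1--2 are routine by comparison. (An alternative, somewhat shorter route would be Weinstein's Loewner-based argument, which bypasses the explicit weight functions but still rests on a Jacobi-polynomial positivity.)
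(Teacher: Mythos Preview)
The paper does not actually prove this theorem. Theorem~\ref{bieber} is stated in the survey section \S\ref{univalsurvey} as a known deep result from geometric function theory, with the proof delegated entirely to the citation \cite{branges}; immediately after the statement the paper simply remarks on the history and moves on to Szeg\"{o}'s theorem. So there is no ``paper's own proof'' to compare against --- de Branges's theorem functions here as an input to the physics analysis (it is invoked later in Corollary~\ref{bnbound} and Theorem~\ref{alphabound}), not as something the authors establish.

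That said, your sketch is an accurate high-level outline of precisely the argument the paper is citing: the two classical reductions (Bieberbach $\Leftarrow$ Robertson $\Leftarrow$ Milin via the square-root transform and the Lebedev--Milin exponentiation inequality), followed by de Branges's monotonicity argument along a Loewner chain with the Askey--Gasper positivity of Jacobi-polynomial sums as the decisive analytic input. You have also correctly located the hard step and the equality analysis. The only comment is one of framing: since the paper treats this as a black box, a full proof is well beyond what the paper itself supplies, and your honest flagging of Step~3 as ``the conceptual leap that resisted seventy years of effort'' is appropriate --- what you have written is a roadmap rather than a self-contained proof, but it is the correct roadmap.
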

While de Branges proved the Bieberbach conjecture in its full generality only in 1985, various special cases have been proved earlier. One particular case relevant to our QFT discussion is the Bieberbach conjecture for schlicht functions with real Taylor coefficients, $a_n\in\rb$. This particular case was proved independently during 1931-1933 by Dieudonn\'{e} \cite{dieudonne}, Rogosinski \cite{rogosinski}, and Sz\'{a}sz \cite{szasz}. 
\subsection{Partial sums of univalent functions: Szeg\"{o} theorem}
Consider a schlicht function $f(z)$ on the unit disc with the power series representation \eqref{taylor}. The $n$th partial-sum, or $n$th section, of the function $f$, denoted by $f_n$, is defined by 
\be \label{section}
f_n(z):=z+\sum_{k=2}^{n}b_k z^k. 
\ee  Now, the important question is what is the domain of univalence for the partial sum $f_n$. While that is, in general, a difficult question to answer, one can still ask as to what is the largest domain over which \emph{any section of an arbitrary} $f\in\ms$ is univalent? Szeg\"{o} \cite{szego} proved the following theorem in this aspect. See \cite[\S 8.2, pp. 241-246]{duren} for a proof.	
\begin{theorem}[{\bf Szeg\"{o} theorem}]
	Define the numbers $\left\{r_n\in\rb^{+}\right\}$ such that the $m$th section of a schlicht function $f\in\ms$, $f_m$, is univalent in the disc $\mathbb{D}_{r_n}$ for all $m\ge n$. Then,
	\be 
	r_1=\frac{1}{4},
	\ee i.e., each section remains univalent in the disc $\mathbb{D}_{1/4}$, and the number $1/4$ can't be replaced by a higher one.
\end{theorem}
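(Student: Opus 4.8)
My plan splits the statement into two halves: that the constant $1/4$ cannot be increased, and that every section of every $f\in\ms$ is univalent in the open disc $\mathbb{D}_{1/4}$. For the first half I would just exhibit one schlicht function with a section that fails on any larger disc. The natural candidate is the Koebe function $k(z)=z/(1-z)^2$ of \eqref{Koebe}, whose second section is $k_2(z)=z+2z^2$: since $k_2'(z)=1+4z$ vanishes at $z=-\tfrac14$, the section $k_2$ is not even locally univalent there, hence not univalent in $\mathbb{D}_\rho$ for any $\rho>\tfrac14$. Because the definition of $r_1$ requires \emph{all} sections $f_m$, $m\ge1$ --- in particular $m=2$ --- to be univalent in $\mathbb{D}_{r_1}$, this forces $r_1\le\tfrac14$. (Consistently with the theorem, $k_2(z_1)=k_2(z_2)$ amounts to $(z_1-z_2)\bigl(1+2(z_1+z_2)\bigr)=0$, which has no solution with $z_1\ne z_2$ inside $\mathbb{D}_{1/4}$, so $k_2$ itself is univalent exactly up to radius $\tfrac14$.)

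For the second half, fix $f\in\ms$ with Taylor coefficients $b_k$ ($b_1=1$) and an index $m\ge2$; the goal is injectivity of $f_m$ on $\mathbb{D}_{1/4}$. I would open with the difference-quotient identity, for distinct $z_1,z_2$,
\be
\frac{f_m(z_1)-f_m(z_2)}{z_1-z_2}=1+\sum_{k=2}^{m}b_k\,p_{k-1}(z_1,z_2),\qquad p_{k-1}(z_1,z_2):=\sum_{j=0}^{k-1}z_1^{\,j}z_2^{\,k-1-j},
\ee
so that it is enough to keep the right side away from $0$. De Branges's theorem (Theorem~\ref{bieber}) gives $|b_k|\le k$ and $|p_{k-1}|\le k\rho^{k-1}$ for $|z_1|,|z_2|\le\rho$, but the crude bound $\sum_{k\ge2}k^2\rho^{k-1}=\tfrac{1+\rho}{(1-\rho)^3}-1$ already exceeds $1$ at $\rho=\tfrac14$, so the triangle inequality cannot reach the optimal radius and the univalence of $f$ itself has to be used, not just the coefficient bound. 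Following Szeg\"o (see \cite[\S 8.2]{duren}), I would instead represent each section by a Cauchy integral over an auxiliary circle $|\zeta|=r<1$ --- equivalently write $f_m=f-\mathrm{tail}_m$ with $\mathrm{tail}_m(z)=z^{m+1}\sum_{j\ge0}b_{m+1+j}z^j$ --- so that $f_m(z_1)-f_m(z_2)=[f(z_1)-f(z_2)]-[\mathrm{tail}_m(z_1)-\mathrm{tail}_m(z_2)]$, bound the first bracket \emph{below} by a two-point version of the Koebe distortion estimate behind Theorem~\ref{growth} --- the Koebe function itself satisfies $\tfrac{k(z_1)-k(z_2)}{z_1-z_2}=\tfrac{1-z_1z_2}{(1-z_1)^2(1-z_2)^2}$, suggesting the extremal value $\tfrac{1-\rho}{(1+\rho)^3}$ --- and bound the second bracket \emph{above} by $|b_k|\le k$. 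Optimizing $r$ and comparing the two estimates uniformly in $m$ and in $|z_1|,|z_2|<\rho$ should show that they just balance at $\rho=\tfrac14$.

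The main obstacle will be precisely this balancing. That \emph{some} positive radius of univalence is common to all the sections is soft --- for large $m$ because $\mathrm{tail}_m$ is uniformly small on any fixed $\mathbb{D}_\rho$, and for the finitely many small $m$ by a direct (if not entirely trivial) check --- but squeezing out the \emph{sharp} value $\tfrac14$, matching the Koebe obstruction from the first half, is the delicate step, requiring the two-sided estimate above together with a genuine optimization of the auxiliary radius $r$. The reductions, the coefficient input from Theorem~\ref{bieber}, and the sharpness example are routine by comparison.
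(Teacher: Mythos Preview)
The paper does not supply its own proof of Szeg\"{o}'s theorem: it simply states the result and points to \cite[\S 8.2, pp.~241--246]{duren}. So there is no in-text argument to compare against, only the classical one you yourself cite.

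Your sharpness half is correct and is the standard example: $k_2(z)=z+2z^2$ has $k_2'(-\tfrac14)=0$, and your factorisation $k_2(z_1)-k_2(z_2)=(z_1-z_2)\bigl(1+2(z_1+z_2)\bigr)$ confirms univalence on $\mathbb{D}_{1/4}$ exactly.

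The second half, however, has a genuine gap at exactly the spot you flag. The decomposition $f_m=f-\mathrm{tail}_m$ is fine, but the pair of estimates you write down does not balance at $\rho=\tfrac14$. Taking your own numbers: the putative two-point distortion lower bound $\tfrac{1-\rho}{(1+\rho)^3}$ is $\tfrac{48}{125}\approx0.384$ at $\rho=\tfrac14$, whereas the de~Branges-based tail bound $\sum_{k\ge m+1}k^2\rho^{k-1}$ at $m=2$, $\rho=\tfrac14$ is $\tfrac{80}{27}-2=\tfrac{26}{27}\approx0.963$. So for the smallest nontrivial section the upper estimate on the tail already beats the lower estimate on $f$ by a factor exceeding two; optimising an auxiliary Cauchy radius $r$ does not repair this, because the lower bound you quote does not involve $r$ at all. (Separately, the two-point lower bound $\tfrac{1-\rho}{(1+\rho)^3}$ for $|f(z_1)-f(z_2)|/|z_1-z_2|$ is only heuristic --- the Koebe distortion theorem controls $|f'(z)|$, not the chordal quotient, and the sharp two-point inequality is more subtle.) The actual proof in Duren is more intricate than ``growth bound versus coefficient bound'': it uses an integral representation for $f_n'$ together with finer distortion control, and even then splits into separate regimes for small and large $n$. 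Your appeal to Theorem~\ref{bieber} is also anachronistic --- Szeg\"{o}'s 1928 argument of course predates de~Branges --- not a logical error, but a hint that the full Bieberbach bound is the wrong lever: the sharpness of $\tfrac14$ comes from the distortion side, not from $|b_k|\le k$.
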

{The statement of the number $1/4$ not being replaceable by a higher number needs some explanation. Consider an \emph{arbitrary} $f\in \ms$, and let the domain over which the $n$th section $f_n$ is univalent be $\mathcal{D}_{f_n}$. Then the above theorem tells that 
	\be 
	\mathcal{D}_{f_n}\supseteq \mathbb{D}_{\frac{1}{4}}\qquad \forall n\ge 2.
	\ee  Equivalently, this can be expressed as 
	\be 
	\bigcap_{\substack{f\in\ms\\n\in\mathbb{Z}^+,\,n\ge2}}\mathcal{D}_{f_n}=\mathbb{D}_{\frac{1}{4}}.
	\ee  
	The number $1/4$ is the best estimate because the domain of univalence for the second section of the Koebe function $k(z)$ is exactly equal to the disc of radius $1/4$, i.e.
	\be 
	\mathcal{D}_{k_2}=\mathbb{D}_{\frac{1}{4}}.
	\ee  Note that this theorem does not tell anything about the exact domains of univalence of sections of arbitrary schlicht functions. All this theorem tells us that whatever  the domain of univalence of a section of a schlicht function be, it is at least large enough to contain the disc $\mathbb{D}_{\frac{1}{4}}$, or equivalently, every section of any schlicht function is univalent on $\mathbb{D}_{\frac{1}{4}}$.}

{The evaluation of exact domains of univalence of sections of arbitrary schlicht function is still an open problem.}

\section{Crossing symmetric dispersion relation: A brief review}\label{crossreview}
We will begin our QFT discussion by reviewing key aspects of crossing symmetric dispersion relations.
At the onset, we should point out that unlike the Bieberbach conjecture, where univalence played a crucial role, in QFTs, we will be able to derive certain inequalities by just using the univalence of the kernel in the dispersion relation and unitarity. This enables us to derive inequalities like $|b_n|\leq n$ as well as the two-sided bounds on $|\mm|$ without needing univalence of the full amplitude. The question of univalence of the full scattering amplitude is a harder one to tackle in generality. We will begin a preliminary attack on this question without being able to settle the issue completely.

Scattering amplitudes of 2-2 identical scalars are functions of Mandelstam invariants $s,~t,~u$, they are related via $s+t+u=4m^2=\m$. For our convenience, we will work with shifted variables $s_1=s-\frac{\m}{3},~s_2=t-\frac{\m}{3},~s_3=u-\frac{\m}{3}$. Such scattering amplitudes $\mathcal{M}(s_1,s_2)$ are fully crossing symmetric, namely $\mathcal{M}(s_1,s_2)=\mathcal{M}(s_2,s_3)=\mathcal{M}(s_3,s_1)\,.$ Scattering amplitudes have physical {branch} cuts for $s_k\geq \frac{2\m}{3}$. To write down a crossing symmetric dispersion relation the most useful trick is to parametrize the $s_1,s_2$ as \cite{AK,ASAZ}
\be\label{eq:skdef}
s_{k}=a-\frac{a\left(z-z_{k}\right)^{3}}{z^{3}-1}, \quad k=1,2,3
\ee
with $a$ being real,  and $z_k$'s are cube roots of unity. The $z,a$ are crossing symmetric variables. They are related to the crossing symmetric combinations of $s_1,s_2$, namely
$x=-\left(s_{1} s_{2}+s_{2} s_{3}+s_{3} s_{1}\right)=\frac{-27 a^2z^3}{(z^3-1)^2},~~ y=-s_{1} s_{2} s_{3}=\frac{-27 a^3z^3}{(z^3-1)^2},~~a=y/x\,.$ 

\paragraph{} Fully crossing symmetric amplitude can be expanded like 
\be\label{eq:Wpqdef}
\mathcal{M}(s_1,s_2)=\sum_{p=0,q=0}^{\infty}\mathcal{W}_{p,q} x^p y^q\,.
\ee
The parametrization in \eqref{eq:skdef}, maps the physical cuts $s_k\geq \frac{2\m}{3}$ in a unit circle in $z$-plane, see figure \eqref{fig:Fig1z} for $-\frac{2\m}{9}< a<0$. { If $a<-2\mu/9$, then as \cite{AK} show, there will be branch cuts on the real $z$ axis.} In the transformed variables the amplitude becomes a function of $z,a$, namely $\mathcal{M}(z,a)$. The most usefulness of \eqref{eq:skdef} is that it enables us to write a {dispersion relation} which is  manifestly crossing symmetric, see \cite{AK,ASAZ}.
\be\label{crossdisp}
\mathcal{M}(\tilde{z},a)=\alpha_{0}+\frac{1}{\pi} \int_{\frac{2 \mu}{3}}^{\infty} \frac{d s_{1}^{\prime}}{s_{1}^{\prime}} \mathcal{A}\left(s_{1}^{\prime} ; s_{2}^{(+)}\left(s_{1}^{\prime}, a\right)\right) H\left(s_{1}^{\prime} , \tilde{z}\right)
\ee
where $\mathcal{A}(s_1;s_2)$ is the s-channel discontinuity (discontinuity of the amplitude cross $s_1\geq \frac{2\m}{3}$), $\a_0=\mathcal{M}(z=0,a)$ is the subtraction constant independent of $a$, and 
\be\label{Hs2def}
\begin{split}
	&H(s_1',\tilde{z})=\frac{27 a^2 \tilde{z} \left(2 s_1'-3 a\right)}{27 a^3 \tilde{z}-27 a^2 \tilde{z} s_1'-\left(1-\zt\right)^2 \left(s_1'\right)^3},\\
	&s_{2}^{(+)}\left(s_{1}^{\prime}, a\right)=-\frac{s_{1}^{\prime}}{2}\left[1-\left(\frac{s_{1}^{\prime}+3 a}{s_{1}^{\prime}-a}\right)^{1 / 2}\right],
\end{split}
\ee where we have introduced the new variable $\zt:=z^3$. This is because all the manifestly crossing symmetric functions are functions of $\zt=z^3$. 

Let us expound a bit on the analyticity structure of the amplitude on the complex $\zt-$plane. The figure \eqref{fig:Fig1zcube} below shows the image of the physical cuts in $\tilde{z}=z^3$-plane (for $-\frac{2\m}{9}<a<0$).
Notice that in $\tilde{z}=z^3$ plane, the images of the physical cuts in all three channels are same.  We will focus on the situation where $a$ is real and note that $|\tilde z|=1$ if $s_1, s_2$ are real {(we will set $\m=4m^2=4$ here)}. In the $\tilde z$ plane, the forward limit ($s_2=-4/3, s_1\geq 8/3$) corresponds to arcs that start at $\tilde z=-1$ and approaching $\tilde z=1$ along $|\tilde z|=1$. If $s_2>-4/3$ then the full boundary of the disc is not traversed while if $s_2\leq -4/3$ then the full boundary is traversed\footnote{There are two trajectories corresponding to the two roots of $\tilde z$ which are obtained on starting with $x,y$ in terms of $\tilde z, a$ and solving for the latter in terms of $s_1, s_2$. If $s_2>-4/3$ then the starting point is on the circle away from $\tilde z=-1$. As $s_1$ increases from $8/3$ the trajectory reaches $\tilde z=-1$ and then retraces along the boundary till it reaches $\tilde z=1$.}. A further important point to keep in mind is that since real $s_1,s_2$ correspond to $|\tilde z|=1$, to access the inside of the disc we need to consider complex $s_1,s_2$. Since later on, we will keep $a$ real, a complex $s_1$ will give us a complex $s_2$ since $a=s_1 s_2(s_1+s_2)/(s_1^2+s_1 s_2 +s_2^2)$. Plugging back into $\tilde z$, we get two values, one which lies inside the disc and the other which lies outside. 

\begin{figure}[!htb]
	\centering
	\begin{subfigure}[b]{0.45\textwidth}
		\centering
		\includegraphics[width=\textwidth]{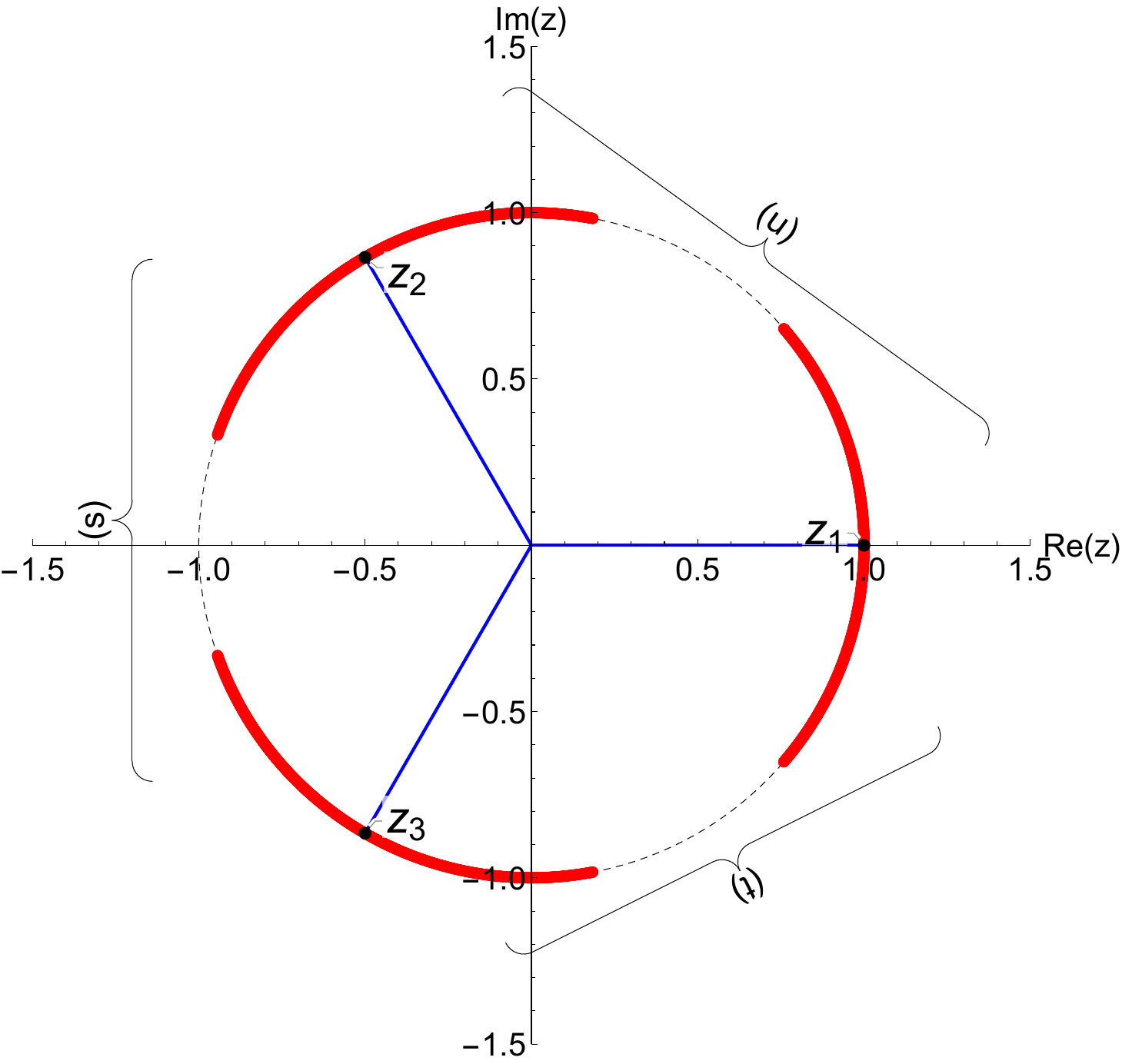}
		\caption{$z$ plane}
		\label{fig:Fig1z}
	\end{subfigure}
	\hfill
	\begin{subfigure}[b]{0.45\textwidth}
		\centering
		\includegraphics[width=\textwidth]{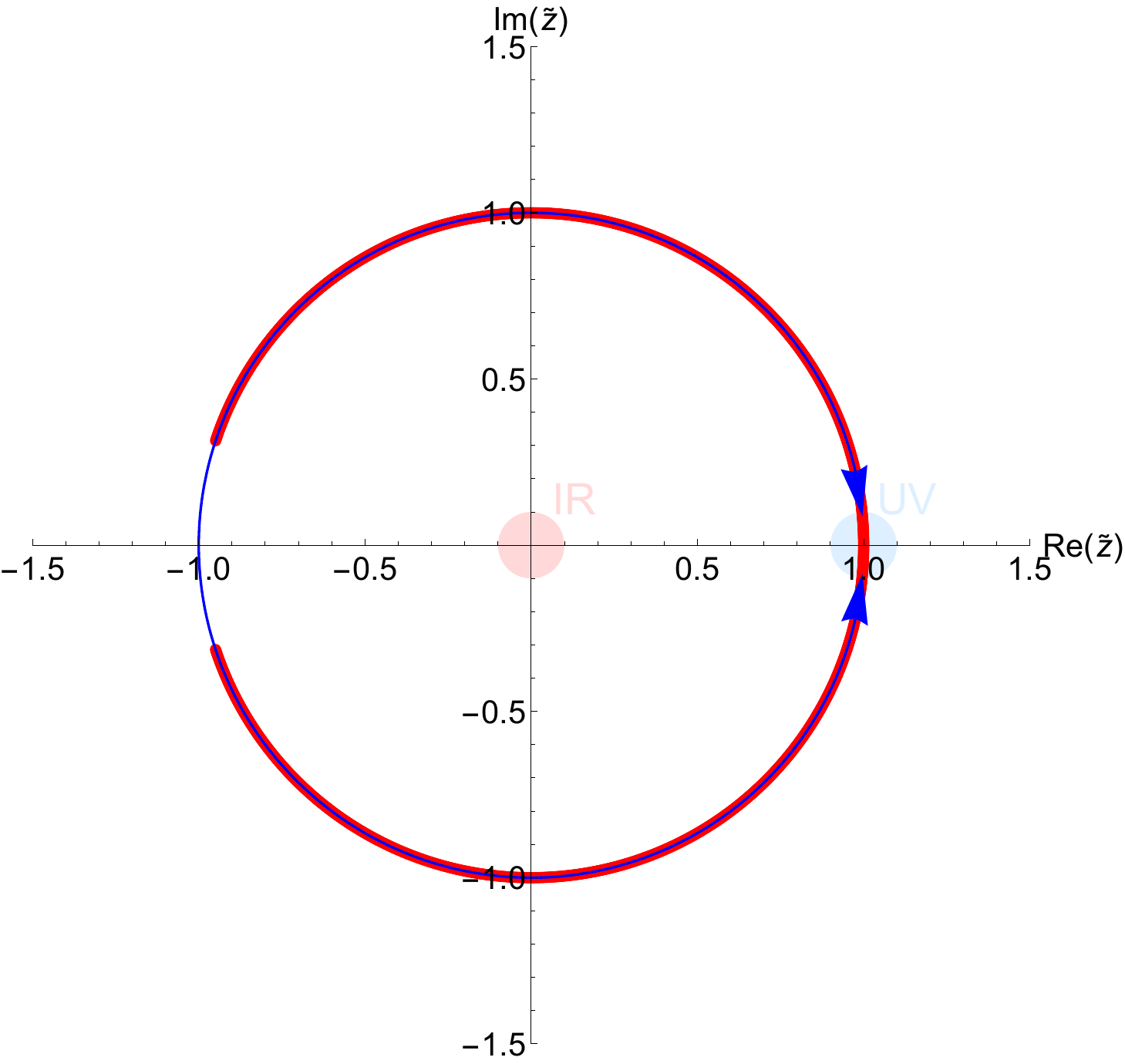}
		\caption{$\tilde{z}=z^3$ plane}
		\label{fig:Fig1zcube}
	\end{subfigure}
	\caption{Image of the physical cuts. The blue line on $\tilde z=1$ indicates the forward limit $s_2=-4/3, s_1\geq 8/3$. The two trajectories start from $\tilde z=-1$ and as $s_1$ increases they approach $\tilde z=1$.}
	\label{fig:Imageofcuts}
\end{figure}

The scattering amplitude $\mm$ admits a power series expansion about $\zt=0$ converging in the unit disc $|\zt|<1$,
\be\label{Mpower}
\mathcal{M}(\tilde{z},a)=\sum_{n=0}^{\infty} \a_n(a) a^{2n}\, \tilde{z}^n\,.
\ee
For a local theory\footnote{This follows from the expansion in \eqref{eq:Wpqdef}, see \cite{AK,ASAZ}
	\be
	\mathcal{M}(z,a)=\sum_{n=0}^{\infty} \bar{w}_n(a) x^n\,.
	\ee
}, $\a_n(a) a^{2n}$ can be  a polynomial in $a$ of order at most $3n$. It can be seen from the expression 
\be\label{eq:alphatoW}
\a_p(a)a^{2p}=\sum_{n=0}^{p}\sum_{m=0}^{n} \mathcal{W}_{n-m,m}a^m (-1)^{p-n}(-27)^n~a^{2n}\binom{-2 n}{p-n}.
\ee This expression also implies that $\a_n(a)$ is in general a Laurent polynomial\footnote{A Laurent polynomial $\ell(x)$ over a field $\mathbb{F}$ is an expression of the form
	\be 
	\ell(x)=\sum_{k\in\mathbb{Z}}\d_k\,x^k,\qquad \d_k\in\mathbb{F}
	\ee   where now $k$ need not be necessarily positive and only finitely many coefficients $\d_k$ are non-zero.  }.
Similarly, the crossing-symmetric kernel $H(s_1',\zt)$ admits a power series expansion abut $\zt=0$:
\be\label{Hexp}
H(s_1',\tilde{z})=\sum_{n=0}^{\infty}\beta _n\left(a,s_1'\right) \tilde{z}^n\,,
\ee
with
\be
\begin{split}
	\beta _n\left(a,s_1'\right)&=\frac{3 \sqrt{3} a 2^{-n} \left(s_1'\right)^{-3 n} }{\sqrt{a-s_1'} \sqrt{3 a+s_1'}}\Bigg[\left(27 a^3-27 a^2 s_1'+3 \sqrt{3} a \sqrt{a-s_1'} \sqrt{3 a+s_1'} \left(3 a-2 s_1'\right)+2 \left(s_1'\right){}^3\right)^n\\
	&-\left(27 a^3-27 a^2 s_1'+3 \sqrt{3} a \sqrt{a-s_1'} \sqrt{3 a+s_1'} \left(2 s_1'-3 a\right)+2 \left(s_1'\right){}^3\right)^n\Bigg].
\end{split}
\ee
We can make two immediate and important observations from the above expression:
\begin{enumerate}
	\item[(I)] \be \b_0(a,s_1)=0\quad \text{identically}.\ee
	\item[(II)]
	\be 
	\b_1(a,s_1)=\frac{27a^2}{s_1^3}\left(3a-2s_1\right).
	\ee Now, recall that, the {analytic}\footnote{ {We are calling the above domain of $a$ to be ``analytic'' since for this domain of $a$, the branch cuts in the complex $\tilde{z}$ plane do not lie along the real line. }  } domains for $a$ and $s_1$ are given by $[-2\m/9,2\m/3)$ and $[2\m/3,\infty)$. Then, one can readily infer that $\b_1(a,s_1)$ non-vanishing for the entire physical range of $s_1$ \emph{if and only if}\footnote{The $a=0$ point is trivial since both $x,y=0$ and the amplitude is a constant. In what follows, if on occasion we imply $a\ne0$, this is to be kept in mind.}
	\be 
	a\in\left(-\frac{2\m}{9},0\right)\cup\left(0,\frac{4\m}{9}\right).
	\ee Further, in this domain of $a$, $\b_1<0$ for the entire physical domain of $s_1$. This sign of $\b_1$ will play a crucial role for various proofs in the following analysis. For the string amplitude, that we will frequently consider, $\mu=0$. We have subtracted the massless pole, and the lower limit of the dispersion integral starts at $s_1'=1$, which is the location of the first massive string pole. This effectively leads to the replacement $\mu\rightarrow 3/2$ in the above discussion given $a\in \left[-\frac{1}{3},0\right)\cup \left(0,\frac{2}{3}\right)$.
\end{enumerate}
The coefficients $\{\b_n(a,s_1)\}$ are of extreme importance because using the crossing symmetric dispersion  relation \eqref{crossdisp}  along with \eqref{Hexp}, we can write an inversion formula 
\be\label{eq:alphadisper}
a^{2n}\a_n(a)=\frac{1}{\pi }\int _{\frac{2 \mu }{3}}^{\infty }\frac{ds_1'}{s_1'}\,\mathcal{A}\left(s_1';s_2^{(+)}(s_1',a)\right)\,\beta _n\left(a,s_1'\right),\quad n>0.
\ee Thus, we see that the $\a_n$s are essentially integral transforms of $\b_n$s convoluted with the $s-$channel absorptive part $\imm$. 

Let us conclude this section with a significant result on the absorptive part, which will be crucial for our subsequent analysis in light of the inversion formula above.  
\begin{lemma}[{\bf Positivity lemma}]\label{Apos}
	For a \emph{unitary} theory, if $a\in\left(-\frac{2\m}{9},\frac{2\m}{3}\right)$ then the absorptive part of the amplitude, $\imm$, is {\emph{non-negative}} for $s_1\in\left[\frac{2\m}{3},\infty\right)$.
\end{lemma}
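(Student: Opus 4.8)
The plan is to reduce the positivity of $\imm = \mathcal{A}(s_1; s_2^{(+)}(s_1,a))$ to the standard positivity of partial-wave coefficients, which follows from unitarity. First I would recall the partial-wave expansion of the $s$-channel absorptive part in $d$ spacetime dimensions,
\be
\mathcal{A}(s_1; s_2) = \sum_{\ell} c_\ell(s_1)\, \mathcal{P}_\ell\!\left(1 + \frac{2 s_2}{s_1 - \tfrac{2\mu}{3} + \cdots}\right),
\ee
where the $c_\ell(s_1)$ are nonnegative for $s_1$ in the physical region by unitarity (optical theorem / positivity of the spectral density), and $\mathcal{P}_\ell$ is the appropriate Gegenbauer polynomial. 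The key point is that the $\mathcal{P}_\ell$ are themselves positive and monotonically increasing on the physical cut $[1,\infty)$ of their argument, and in fact at any real argument $\ge 1$ every $\mathcal{P}_\ell \ge 1 > 0$. So the whole expansion is a sum of nonnegative terms, and strict positivity follows as long as at least one $c_\ell(s_1)$ is strictly positive (which holds on the support of the discontinuity, i.e.\ precisely for $s_1 \in [\tfrac{2\mu}{3}, \infty)$).

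Next I would have to verify the crucial analytic input: that for $a \in (-\tfrac{2\mu}{3}, \tfrac{2\mu}{3})$ and $s_1 \in [\tfrac{2\mu}{3}, \infty)$, the argument of the Gegenbauer polynomial, built from
\be
s_2^{(+)}(s_1,a) = -\frac{s_1}{2}\left[1 - \left(\frac{s_1 + 3a}{s_1 - a}\right)^{1/2}\right],
\ee
stays in the region where positivity of $\mathcal{P}_\ell$ holds — i.e.\ that the effective $\cos\theta$ variable is real and $\ge 1$ (or more precisely lies in $[-1,\infty)$ in the region where the Legendre/Gegenbauer functions retain a definite sign, which for the relevant range is the physical forward cone). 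Concretely one checks that $\tfrac{s_1+3a}{s_1-a} \ge 0$ on the stated domain (so the square root is real), computes $s_2^{(+)}$ and the corresponding $s_3^{(+)} = -s_1 - s_2^{(+)}$, and shows the combination entering $\mathcal{P}_\ell$ exceeds $1$. This is where the constraint $a < \tfrac{2\mu}{3}$ and $a > -\tfrac{2\mu}{3}$ will be used: it is exactly the window in which $s_2^{(+)}(s_1,a)$ does not wander into a region that would flip the sign of some partial wave. I would do this by a direct estimate, possibly splitting into $a \ge 0$ and $a < 0$ cases since $\b_1$ and the geometry behave differently there.

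The main obstacle I anticipate is precisely this second step: controlling the sign of the Gegenbauer polynomial argument uniformly in $\ell$ and over the full noncompact range $s_1 \in [\tfrac{2\mu}{3},\infty)$, because $s_2^{(+)}(s_1,a)$ is not constant and one must rule out that for large $s_1$ (or near the threshold $s_1 = \tfrac{2\mu}{3}$) the argument dips below the safe region. The clean way around this is to note that $s_2^{(+)}(s_1,a) \to$ a finite limit as $s_1 \to \infty$ and is monotone in $s_1$, so one reduces to checking the two endpoints. A secondary subtlety is the precise normalization of the kinematic variables (the shifts by $\mu/3$ and the $\tfrac{2\mu}{3}$ threshold), which must be tracked carefully so that the argument of $\mathcal{P}_\ell$ is genuinely the physical scattering angle variable; I would handle this by expressing everything back in terms of the original Mandelstam $s,t,u$ where the partial-wave positivity statement is textbook, and only then translating to $s_1, s_2^{(+)}$. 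Once the argument is shown to lie in the positivity region, strict positivity of $\imm$ is immediate from term-by-term nonnegativity plus nonvanishing of the spectral density on $[\tfrac{2\mu}{3},\infty)$.
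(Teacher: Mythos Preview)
Your approach is essentially the paper's: expand $\mathcal{A}$ in partial waves, use unitarity for nonnegativity of the spectral coefficients, and reduce everything to showing the Gegenbauer argument is $\ge 1$. The paper makes the last step considerably more direct than your proposed endpoint/monotonicity analysis: it writes the argument as $\sqrt{\xi(s_1,a)}$ with the closed form
\[
\xi(s_1,a)=\xi_0\Bigl(1+\tfrac{4a}{s_1-a}\Bigr),\qquad \xi_0=\frac{s_1^2}{(s_1-2\mu/3)^2},
\]
from which $\xi\ge 1$ on $s_1\in[\tfrac{2\mu}{3},\infty)$ is an elementary algebraic check. This bypasses your anticipated ``main obstacle'' entirely; no separate large-$s_1$ or threshold analysis is needed. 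One correction: you repeatedly write the lower endpoint of the $a$-range as $-\tfrac{2\mu}{3}$, but the lemma (and the computation that makes $\xi\ge1$ work) requires $a>-\tfrac{2\mu}{9}$; this is exactly the value at which the numerator of $\xi-1$ first vanishes at threshold $s_1=\tfrac{2\mu}{3}$, so the distinction is not cosmetic.
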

\begin{proof}[{\bf Proof.}]
	The $s$-channel discontinuity has a partial wave expansion
	\be	
	\begin{aligned}
		\imm &=\Phi\left(s_{1} ; \alpha\right) \sum_{\ell=0}^{\infty}(2 \ell+2 \alpha) a_{\ell}\left(s_{1}\right) C_{\ell}^{(\alpha)}\left(\sqrt{\xi\left(s_{1}, a\right)}\right) \\
		\xi\left(s_{1}, a\right) &=\cos ^{2} \theta_{s}=\left(1+\frac{2 s_{2}^{+}\left(s_{1}, a\right)+\frac{2 \mu}{3}}{s_{1}-\frac{2 \mu}{3}}\right)^{2}=\xi_{0}+4 \xi_{0}\left(\frac{a}{s_{1}-a}\right)
	\end{aligned}
	\ee
	with $\xi_{0}=\frac{s_{1}^{2}}{\left(s_{1}-2 \mu / 3\right)^{2}}$ {and $\a=\frac{d-3}{2}$}. Over the domain of $s_1\in\left[\frac{2\m}{3},\infty\right)$, we find that, $\sqrt{\xi\left(s_{1}, a\right)}\geq 1$, which implies $C_{\ell}^{(\alpha)}\left(\sqrt{\xi\left(s_{1}, a\right)}\right)> 0$ if $a\in\left(-\frac{2\m}{9},\frac{2\m}{3}\right)$.
	 {Next, for the given domains of $s_1$ and $a$ one has  
		$
		\Re\left[s_2^+(s_1,a)\right]\in\left[-\frac{\m}{3},\frac{2\m}{3}\right].
		$ 
		Now the analyticity domain $E(s_1)$ of $\ma(s_1,s_2^+)$ in $t$ has been determined \cite{martin-II} to be
        \be
         E(s_1)=\begin{cases}
         	E\left(0,\frac{2\m}{3}-s_1\,\Bigg|\,4\m+\frac{48\m}{3s_1-2\m}\right),\qquad \frac{2\m}{3}< s_1< \frac{11\m}{3}\\
         	\\
         	E\left(0,\frac{2\m}{3}-s_1\,\Bigg|\,\frac{192\m}{3s_1+\m}\right),\qquad \frac{11\m}{3}< s_1< \frac{23\m}{3}\\
         	\\
         	E\left(0,\frac{2\m}{3}-s_1\,\Bigg|\,\m+\frac{48\m}{3s_1-11\m}\right),\qquad  s_1> \frac{23\m}{3}\\
            \end{cases}
        \ee 	where $E(f_1,f_2|d)$ stands for an ellipse with foci at $s_2^+=f_1,\,s_2^+=f_2$ and right extremity at $s_2^+=d$. It is straightforward to see that our $s_2^+$ values always lie in the interior of $E(s_1)$, i.e. the partial wave expansion for $\imm$ above converges for the given domains of $a$ and $s_1$.
 }  Next,  $0\leq a_{\ell}(s_1) \leq 1$ on $s_1\in\left[\frac{2\m}{3},\infty\right)$ as a consequence of unitarity. Therefore, if $a\in\left(-\frac{2\m}{9},\frac{2\m}{3}\right)$, $\imm$, is \emph{non-negative} for $s_1\in\left[\frac{2\m}{3},\infty\right)\,.$ If $\mu=0$,  where $\xi=1+4\frac{a}{a-s_1}$, for $\xi>1$, $a>0$ must hold\footnote{For the string case, however, we will find that for $a<0$ the bounds we will consider will still hold. We do not have a general explanation for this apart from observing that $\alpha_1<0$ for certain $-1/3<a<2/3$, which is the range of $a$ we will be interested in.}.
\end{proof}
\section{Bounds on $\{\a_n(a)\}$}\label{abound} 

We can bound the Taylor coefficients $\{\a_n(a)\}$ appearing in the power-series representation of the scattering amplitude $\mm$, \eqref{Mpower}. Towards that end, let us first prove a lemma that will be used repeatedly in our analysis that follows.
\begin{lemma}\label{Hschlicht}
	Consider the kernel $H(\tilde{z};s_1,a)$ of the dispersion relation given by \eqref{Hs2def},
	\be 
	H(\tilde{z};s_1,a)=\frac{27 a^2 \tilde{z} \left(2 s_1-3 a\right)}{27 a^3 \tilde{z}-27 a^2 \tilde{z} s_1-\left(\tilde{z}-1\right)^2 \left(s_1\right)^3}.
	\ee  Define the function
	\be \label{Fdef}
	F(\tilde{z};s_1,a):=\frac{H(\tilde{z};s_1,a)}{\b_1(a,s_1)}.
	\ee For $a\in\left(-\frac{2\m}{9},0\right)\cup\left(0,\frac{4\m}{9}\right)$ and $s_1\in\left[\frac{2\m}{3},\infty\right)$, $F(\tilde{z};s_1,a)$ is a schlicht function, or equivalently, $H(\tilde{z};s_1,a)$ is a \emph{univalent function on the unit disc $|\zt|<1$}.
\end{lemma}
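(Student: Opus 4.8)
The plan is to reduce $H$ to a one–parameter canonical form and recognise that form as the reciprocal of a shifted Joukowski map. First I would clear the cubic $s_1^3$ from the numerator and denominator of $H(\tilde z;s_1,a)$: writing the denominator as $-s_1^3\tilde z^2+\bigl[2s_1^3+27a^2(a-s_1)\bigr]\tilde z-s_1^3$ and dividing through by $-s_1^3$, one gets
\be
F(\tilde z;s_1,a)=\frac{H}{\b_1}=\frac{\tilde z}{\tilde z^2-B\,\tilde z+1},\qquad B:=B(a,s_1)=2+\frac{27a^2(a-s_1)}{s_1^3},
\ee
the point being that $\b_1(a,s_1)=\tfrac{27a^2}{s_1^3}(3a-2s_1)$ is exactly the coefficient of $\tilde z$ in the rescaled numerator, so it cancels. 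From this closed form $F(0)=0$, and $F'(\tilde z)=(1-\tilde z^2)/(\tilde z^2-B\tilde z+1)^2$ gives $F'(0)=1$, so $F$ carries the schlicht normalisation; since $\b_1\neq0$ on the stated range (observation (II) above), it remains only to show that $F$ is holomorphic and injective on $|\tilde z|<1$, which is equivalent to $H$ being univalent there.

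The key quantitative step is to prove $B\in(-2,2)$. Since $a<\tfrac{4\mu}{9}<\tfrac{2\mu}{3}\le s_1$ we have $a-s_1<0$, hence $B<2$. For the lower bound set $u=a/s_1$; the stated ranges force $u\in(-\tfrac13,\tfrac23)\setminus\{0\}$ (the endpoints would require $a\le-\tfrac{2\mu}{9}$ or $a\ge\tfrac{4\mu}{9}$), and $\tfrac{27a^2(s_1-a)}{s_1^3}=27u^2(1-u)=:g(u)$. A one–line calculus check ($g'(u)=27u(2-3u)$, so the only interior critical point of $g$ on $[-\tfrac13,\tfrac23]$ is the minimum at $u=0$, while $g(-\tfrac13)=g(\tfrac23)=4$) shows $g(u)<4$ on the open interval, i.e.\ $B=2-g(u)>-2$. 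Hence $\tilde z^2-B\tilde z+1=(\tilde z-e^{i\phi})(\tilde z-e^{-i\phi})$ with $2\cos\phi=B$, $\phi\in(0,\pi)$, so the poles of $F$ lie exactly on $|\tilde z|=1$ and $F$ is holomorphic on the open disc.

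For injectivity I would rewrite
\be
\frac{1}{F(\tilde z)}=\tilde z+\frac1{\tilde z}-B=\mathcal J(\tilde z)-B,\qquad \mathcal J(\tilde z):=\tilde z+\frac1{\tilde z},
\ee
and invoke the classical fact that the Joukowski map $\mathcal J$ is a conformal bijection of $\{0<|\tilde z|<1\}$ onto $\widehat{\mathbb C}\setminus[-2,2]$ (with $0\mapsto\infty$). Since $B\in(-2,2)$ lies in the omitted slit, $\mathcal J(\tilde z)\neq B$ on $\mathbb D\setminus\{0\}$, so $\mathcal J-B$ is injective and nowhere zero on $\mathbb D\setminus\{0\}$ with a simple pole at $0$; composing with $w\mapsto 1/w$ shows $F=1/(\mathcal J-B)$ is holomorphic and injective on $\mathbb D$. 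Thus $F\in\ms$ and $H=\b_1 F$ is univalent on $|\tilde z|<1$. (Alternatively one can verify starlikeness directly: $\tilde z F'/F=(1-\tilde z^2)/(\tilde z^2-B\tilde z+1)=-1+\tfrac{1}{1-e^{i\phi}\tilde z}+\tfrac{1}{1-e^{-i\phi}\tilde z}$, each term $\tfrac{1}{1-e^{\pm i\phi}\tilde z}$ having real part $>\tfrac12$ on $\mathbb D$, so $\mathrm{Re}\,(\tilde z F'/F)>0$ and $F$ is starlike, hence univalent.)

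The only genuine obstacle is the estimate $B\in(-2,2)$; everything downstream is a short algebraic manipulation plus a standard fact about the Joukowski map (or the standard starlike $\Rightarrow$ univalent criterion). The mild subtlety there is that the bound is sharp precisely at the endpoints of the physical $a$–ranges, where $\phi\to0$ or $\pi$ and $F$ degenerates to a rotation of the Koebe function; one must therefore use that those $a$–intervals are open, and it is exactly this estimate that keeps the poles of $F$ on $|\tilde z|=1$ rather than inside the disc.
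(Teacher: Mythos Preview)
Your proof is correct. You arrive at the same canonical form $F(\tilde z)=\tilde z/(\tilde z^2-B\tilde z+1)$ and the same key estimate $|B|<2$ as the paper (the paper writes the denominator as $1+\gamma\tilde z+\tilde z^2$ with $\gamma=-B$ and phrases the condition as $|\gamma|<2$), but you establish injectivity by a genuinely different mechanism. The paper observes that $F$ is a M\"obius transformation of the Koebe function, $F(\tilde z)=k(\tilde z)\bigl[1-\tfrac{27a^2(a-s_1)}{s_1^3}k(\tilde z)\bigr]^{-1}$, and then invokes the invariance of the Grunsky coefficients under M\"obius composition to inherit $\omega_{j,k}=-\d_{j,k}/j$ from $k(\tilde z)$; since the Grunsky inequalities are necessary and sufficient for univalence, this finishes the argument. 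Your Joukowski/starlikeness route is more elementary and self-contained---it needs neither the Grunsky machinery nor the M\"obius-invariance lemma---so for the purpose of proving Lemma~\ref{Hschlicht} alone it is arguably cleaner. On the other hand, the paper's approach produces the Grunsky coefficients of $F$ explicitly, and these are reused later (in the toy scalar-EFT example and in the small-$a$ univalence analysis of the full amplitude), so their detour pays dividends downstream.
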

\begin{proof}[{\bf Proof }]
	For $a\in\left(-\frac{2\m}{9},0\right)\cup\left(0,\frac{4\m}{9}\right)$ and $s_1\in\left[\frac{2\m}{3},\infty\right)$, we have already proved that  $\b_1(a,s_1)\ne 0$. Thus, the function $F$ is well-defined in these domains of $a$ and $s_1$. Further, since $\b_0=0$ identically, $F(\tilde{z})$ admits a power series expansion about $\tilde{z}=0 $:
	\be \label{Fpower}
	F(\tilde{z};s_1,a)=\tilde{z}+\sum_{n=2}^{\infty}\frac{\b_n(a,s_1)}{\b_1(a,s_1)}\tilde{z}^n.
	\ee 
	First note that $$F(\tilde z; s_1,a)=\frac{\tilde z}{1+\gamma \tilde z+\tilde z^2}$$ with $\gamma=27(\frac{a}{s_1})^2(1-\frac{a}{s_1})-2$. To avoid a singularity inside the unit disc we need $$|\gamma|<2\,,$$ which translates to $a\in\left(-\frac{2\m}{9},0\right)\cup\left(0,\frac{4\m}{9}\right)$ for real $a$, which is the same condition mentioned above\footnote{Of course we could consider complex $a$ as well at this stage. However, since we want to make use of the positivity of the absorptive part of the amplitude later on, we will restrict our attention to real $a$.}. 
	
	Next, observe that we can write
	\be 
	F(\zt;s_1,a)=k(\zt)\left[1-\frac{27a^2(a-s_1)}{s_1^3}k(\zt)\right]^{-1},
	\ee where $k(\zt)$ is the Koebe function defined in \eqref{Koebe}. It is straightforward to see that $F$ can be considered as a composition of a Moebius transformation with the Koebe function. Then, by \eqref{grunskymob}, $F$ has the Grunsky coefficients
	\be 
	\w_{p,q}=-\frac{\d_{p,q}}{p},\quad p,q\ge1,
	\ee and these satisfy the Grunsky inequalities of \eqref{eq:grunslyomega} for all $N\ge1$. Since Grunsky inequalities are necessary and sufficient for an analytic function inside the unit disc to be univalent, this completes the proof.
\end{proof}
Observe that, $F(\tilde{z};s_1,a)$ and $H(\tilde{z};s_1,a)$ are related by an affine transformation. Thus, the schlichtness of $F$ implies that $H$ \emph{is an univalent function on the unit disc $|\tilde{z}|<1$} for the same domains of $a$ and $s_1$. 
\begin{corollary}\label{bnbound}
	For $a\in\left(-\frac{2\m}{9},0\right)\cup\left(0,\frac{4\m}{9}\right)$ and $s_1\in\left[\frac{2\m}{3},\infty\right)$, the Taylor coefficients $\{\b_n(a,s_1)\}$ in the power-series expansion of $H$ are bounded by
	\be 
	\left|\frac{\beta_n(a,s_1)}{\beta_1(a,s_1)}\right|\le n,\quad n\ge 2
	\ee 
\end{corollary}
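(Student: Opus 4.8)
The plan is to read off Corollary \ref{bnbound} as an immediate consequence of Lemma \ref{Hschlicht} together with de Branges's theorem (Theorem \ref{bieber}). By Lemma \ref{Hschlicht}, for $a\in\left(-\frac{2\mu}{9},0\right)\cup\left(0,\frac{4\mu}{9}\right)$ and $s_1\in\left[\frac{2\mu}{3},\infty\right)$ the function $F(\tilde z;s_1,a)=H(\tilde z;s_1,a)/\beta_1(a,s_1)$ belongs to $\ms$. Its power series about $\tilde z=0$ is precisely \eqref{Fpower}, so the schlicht normalization $F(0)=0$, $F'(0)=1$ holds and the $n$-th Taylor coefficient of $F$ equals $b_n=\beta_n(a,s_1)/\beta_1(a,s_1)$. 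Substituting this into the de Branges bound $|b_n|\le n$ of Theorem \ref{bieber} yields $\left|\beta_n(a,s_1)/\beta_1(a,s_1)\right|\le n$ for every $n\ge 2$, which is the assertion. The only point to verify is the normalization, which is manifest from \eqref{Fpower}; everything nontrivial has already been absorbed into the proof of Lemma \ref{Hschlicht}.

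I would, however, also spell out a self-contained route that avoids invoking the full strength of de Branges, using the explicit Moebius--Koebe structure of $F$ exhibited in the proof of Lemma \ref{Hschlicht}. Starting from
\be
F(\tilde z;s_1,a)=\frac{\tilde z}{1+\gamma\tilde z+\tilde z^2},\qquad \gamma=27\Big(\frac{a}{s_1}\Big)^2\Big(1-\frac{a}{s_1}\Big)-2,
\ee
the bound $|\gamma|<2$ valid on the stated domain lets us set $\gamma=-2\cos\psi$ for a real $\psi\notin\pi\mathbb{Z}$, so that $1+\gamma\tilde z+\tilde z^2=(1-e^{i\psi}\tilde z)(1-e^{-i\psi}\tilde z)$. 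A partial-fraction expansion then gives
\be
F(\tilde z;s_1,a)=\sum_{n=1}^{\infty}\frac{\sin(n\psi)}{\sin\psi}\,\tilde z^n,
\ee
so $\beta_n(a,s_1)/\beta_1(a,s_1)=\sin(n\psi)/\sin\psi$, a Chebyshev polynomial of the second kind in $\cos\psi$. The elementary estimate $|\sin(n\psi)|\le n|\sin\psi|$, immediate from $\sin(n\psi)/\sin\psi=\sum_{k=0}^{n-1}e^{i(n-1-2k)\psi}$ (a sum of $n$ unimodular terms), reproduces $\left|\beta_n/\beta_1\right|\le n$.

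There is essentially no obstacle here: the hard work is Lemma \ref{Hschlicht}, and the corollary is a one-line deduction. The only items worth a line of care are confirming $|\gamma|<2$ throughout $a\in\left(-\frac{2\mu}{9},0\right)\cup\left(0,\frac{4\mu}{9}\right)$, $s_1\in\left[\frac{2\mu}{3},\infty\right)$ (already established), and the passage from $|\gamma|<2$ to the real parameter $\psi$. It is also worth recording the rigidity statement: since equality $|b_n|=n$ for some $n\ge2$ would force $F$ to be a rotation of the Koebe function, i.e.\ $\gamma=\pm2$, which is excluded on the open domain, the inequality $\left|\beta_n(a,s_1)/\beta_1(a,s_1)\right|<n$ is in fact strict for all physical $(a,s_1)$ — a remark we will use when these bounds are transferred to the amplitude via the inversion formula \eqref{eq:alphadisper}.
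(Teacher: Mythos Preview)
Your primary argument---invoke Lemma \ref{Hschlicht} to conclude $F\in\ms$ and then read off $|\beta_n/\beta_1|\le n$ from de Branges's theorem---is exactly the paper's own one-line proof. Your second, self-contained route via $\gamma=-2\cos\psi$ and the explicit coefficients $\beta_n/\beta_1=\sin(n\psi)/\sin\psi$ is a genuine and correct addition: it bypasses the heavy machinery of de Branges entirely by exploiting the special rational form of $F$, and as a byproduct yields the strict inequality $|\beta_n/\beta_1|<n$ on the open domain (since $|\gamma|<2$ excludes the Koebe extremal). The paper does not record this elementary alternative or the strictness, so your supplement is a useful sharpening.
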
	
\begin{proof} Since $ F(\tilde{z};s_1,a)$ is a Schllicht function on the unit disc $|\tilde{z}|<1$ in the given domains of $a$ and $s_1$, we can apply \emph{de Branges's theorem} to the same to obtain the bound.\end{proof}
Let us note down another corollary of the lemma \ref{Hschlicht} for future reference.
\begin{corollary}
	For $a\in\left(-\frac{2\m}{9},0\right)\cup\left(0,\frac{4\m}{9}\right)$ and $s_1\in\left[\frac{2\m}{3},\infty\right)$, the Taylor coefficients $\{\b_n(a,s_1)\}$ in the power-series expansion of $H$ are bounded by
	\be \label{FKoebe}
	\frac{|\tilde{z}|}{(1+|\tilde{z}|)^2}\le\left|F(\tilde{z};s_1,a)\right|\le \frac{|\tilde{z}|}{(1-|\tilde{z}|)^2},\quad |\tilde{z}|<1.
	\ee 
\end{corollary}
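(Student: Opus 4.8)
The plan is to read this off as an immediate corollary of the Koebe Growth Theorem. Lemma~\ref{Hschlicht} has already done the substantive work: for $a\in\left(-\frac{2\m}{9},0\right)\cup\left(0,\frac{4\m}{9}\right)$ and $s_1\in\left[\frac{2\m}{3},\infty\right)$ the function $\tilde z\mapsto F(\tilde z;s_1,a)$ is schlicht on $|\tilde z|<1$, i.e.\ it belongs to $\mathcal S$ (it is holomorphic and univalent there, with the normalized expansion \eqref{Fpower} beginning $F=\tilde z+\dots$). Hence Theorem~\ref{growth} applies with the role of $z$ played by $\tilde z$, and gives exactly
\be
\frac{|\tilde z|}{(1+|\tilde z|)^2}\le |F(\tilde z;s_1,a)|\le\frac{|\tilde z|}{(1-|\tilde z|)^2},\qquad |\tilde z|<1,
\ee
which is \eqref{FKoebe}. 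There is essentially no obstacle here: all of the difficulty is upstream, in establishing schlichtness of $F$, and that is precisely the content of Lemma~\ref{Hschlicht}; the present statement is a one-line application of a cited theorem to a function already known to be in $\mathcal S$.

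For completeness I would also record the equality/extremal statement inherited from Theorem~\ref{growth}: either bound is saturated at some $\tilde z\ne 0$ only when $F(\cdot;s_1,a)$ is a rotation of the Koebe function, which—using the explicit presentation $F(\tilde z;s_1,a)=k(\tilde z)\,[1-c\,k(\tilde z)]^{-1}$ with $c=27a^2(a-s_1)/s_1^3$ established in the proof of Lemma~\ref{Hschlicht}—occurs exactly at $c=0$, i.e.\ at the trivial point $a=0$ (note $a=s_1$ is excluded since $a<\tfrac{4\m}{9}<\tfrac{2\m}{3}\le s_1$). It is also worth translating the estimate back to the kernel $H$ itself via $H=\beta_1(a,s_1)\,F$, so that the bound reads $|\beta_1(a,s_1)|\,\tfrac{|\tilde z|}{(1+|\tilde z|)^2}\le|H(\tilde z;s_1,a)|\le|\beta_1(a,s_1)|\,\tfrac{|\tilde z|}{(1-|\tilde z|)^2}$; since $\beta_1<0$ throughout the stated $a$-range, this is the form that later feeds, through the inversion formula \eqref{eq:alphadisper} and the positivity lemma, into the two-sided bounds on $\mm$ itself.
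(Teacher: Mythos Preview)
Your proof is correct and matches the paper's own argument exactly: the paper's proof is the single line ``Applying Koebe growth theorem \ref{growth} to $F(\tilde{z};s_1,a)$, one obtains the bounds,'' which is precisely what you do. Your additional remarks on the equality case and the translation back to $H$ are correct but go beyond what the paper records.
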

\begin{proof}
	Applying Koebe growth theorem \ref{growth} to $F(\tilde{z};s_1,a)$, one obtains the bounds.
\end{proof}
Now that we have collected the necessary results, let us now turn to prove the following theorem. 
\begin{theorem}
	\label{alphabound}
	For {non-zero $\mm(\tilde{z},a)$} and $a\in\left(-\frac{2\m}{9},0\right)\cup\left(0,\frac{4\m}{9}\right)$, with $\m>0$,
	\be 
	\left|\frac{\a_n(a)a^{2n}}{\a_1(a)a^2}\right|\le n,\quad\forall\,n\ge2.
	\ee 
\end{theorem}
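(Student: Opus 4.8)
The plan is to push the pointwise de Branges bound for the kernel coefficients (Corollary \ref{bnbound}) through the inversion formula \eqref{eq:alphadisper}, using that in the quoted $a$-range the absorptive part is a strictly positive weight (Lemma \ref{Apos}) and that $\b_1(a,s_1)$ has a fixed sign there. Morally, each $a^{2n}\a_n(a)$ is an average of $\b_n(a,s_1')$ against a positive measure built from $\mathcal{A}$, so any bound holding pointwise in $s_1'$ for the ratio $\b_n/\b_1$ descends to the same bound for $\a_n/\a_1$.

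Concretely, first I would recall the inversion formula in the form
\be
a^{2n}\a_n(a)=\frac{1}{\pi}\int_{\frac{2\m}{3}}^{\infty}\frac{ds_1'}{s_1'}\,\mathcal{A}\big(s_1';s_2^{(+)}(s_1',a)\big)\,\b_n(a,s_1'),\qquad n\ge1,
\ee
and inside the integrand rewrite $\b_n=(\b_n/\b_1)\,\b_1$, which is legitimate because $\b_1(a,s_1)\neq0$ for all $s_1\in[\tfrac{2\m}{3},\infty)$ when $a\in(-\tfrac{2\m}{9},0)\cup(0,\tfrac{4\m}{9})$ (shown just before Lemma \ref{Hschlicht}). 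On this same $a$-range three facts hold simultaneously: (i) $\mathcal{A}\big(s_1';s_2^{(+)}(s_1',a)\big)>0$ on $s_1'\in[\tfrac{2\m}{3},\infty)$ by Lemma \ref{Apos}, (ii) $ds_1'/s_1'>0$, and (iii) $\b_1(a,s_1')<0$ on the integration range. Hence $d\nu(s_1'):=-\tfrac1\pi\,\mathcal{A}\,\b_1\,\tfrac{ds_1'}{s_1'}$ is a positive measure, $a^2\a_1(a)=-\int d\nu<0$, so $|a^2\a_1(a)|=\int d\nu$. Taking absolute values, moving the modulus inside the integral, and using $|\b_n/\b_1|\le n$ for $n\ge2$ from Corollary \ref{bnbound},
\be
\big|a^{2n}\a_n(a)\big|=\Big|\int \tfrac{\b_n}{\b_1}\,d\nu\Big|\;\le\;\int \Big|\tfrac{\b_n}{\b_1}\Big|\,d\nu\;\le\;n\int d\nu\;=\;n\,\big|a^{2}\a_1(a)\big|,
\ee
and dividing by $|a^2\a_1(a)|$ yields the stated inequality.

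The remaining points are bookkeeping. One must check that all three sign statements live on a common $a$-domain: intersecting the domain of Corollary \ref{bnbound}, namely $(-\tfrac{2\m}{9},0)\cup(0,\tfrac{4\m}{9})$, with the domain $(-\tfrac{2\m}{9},\tfrac{2\m}{3})$ of Lemma \ref{Apos} gives precisely the former (since $\tfrac{4\m}{9}<\tfrac{2\m}{3}$), which is why the hypothesis $\m>0$ appears and why the massless/string case $\m=0$ must be handled separately, consistent with the footnote following Lemma \ref{Apos}. One should also record absolute convergence of the $n\ge2$ integrals, inherited from convergence of the $n=1$ integral together with $|\b_n|\le n|\b_1|$, so that the modulus may indeed be passed inside. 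The only mild obstacle is confirming that the ratio $\a_n/\a_1$ makes sense, i.e. $\a_1(a)\neq0$; but $a^2\a_1(a)=-\int d\nu$ with $d\nu$ a nonzero positive measure for any non-constant unitary amplitude, so this is automatic. Beyond this, all the real work has already been done in Lemma \ref{Hschlicht}, Corollary \ref{bnbound} and Lemma \ref{Apos}.
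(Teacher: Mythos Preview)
Your proposal is correct and follows essentially the same route as the paper: both proofs push the pointwise bound $|\b_n/\b_1|\le n$ (Corollary~\ref{bnbound}) through the inversion formula \eqref{eq:alphadisper} using the positivity of $\mathcal{A}$ (Lemma~\ref{Apos}) and the fixed negative sign of $\b_1$, and both establish $\a_1(a)\neq0$ from $a^2\a_1(a)<0$. Your repackaging of the argument via the positive measure $d\nu=-\tfrac{1}{\pi}\mathcal{A}\,\b_1\,\tfrac{ds_1'}{s_1'}$ is a clean way to say exactly what the paper does step by step, and your explicit remark on absolute convergence (inherited from the $n=1$ case and $|\b_n|\le n|\b_1|$) is a useful addition.
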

\begin{proof}[{\bf Proof }]
	First, we make sure that the ratio is well-defined in $a\in\left(-\frac{2\m}{9},0\right)\cup\left(0,\frac{4\m}{9}\right)$. In particular, we need to make sure that $\a_1(a)\ne 0$ for $a\in\left(-\frac{2\m}{9},0\right)\cup\left(0,\frac{4\m}{9}\right)$ since $a^2$ is never zero due to $\m\ne 0$. To do so, let us start with the inversion formula, \eqref{eq:alphadisper}, to write 
	\be \label{a1inv}
	-\a_1(a)a^2=\frac{1}{\pi}\int _{\frac{2 \mu }{3}}^{\infty }\frac{ds_1'}{s_1'}\,\mathcal{A}\left(s_1';s_2^{(+)}(s_1',a)\right)\,\left[-\beta _1\left(a,s_1'\right)\right]\,.
	\ee 
	From the \emph{positivity lemma} \ref{Apos}, we have that $\mathcal{A}\left(s_1';s_2^{(+)}(s_1',a)\right)\ge 0$ for $a\in\left(-\frac{2\m}{9},0\right)\cup\left(0,\frac{4\m}{9}\right)$ and $s_1'\in\left[\frac{2\m}{3},\infty\right)$. Further, we have already seen that $\beta _1\left(a,s_1'\right)<0$ for the same domains of $a$ and $s_1'$. Thus, the integrand is \emph{non-negative}. {Further, since $\mm(\tilde{z},a)$ is non-zero, $\mathcal{A}\left(s_1';s_2^{(+)}(s_1',a)\right)$ does not vanish identically over $s_1'\in\left[\frac{2\m}{3},\infty\right)$.} Hence, the integral will be \emph{positive}   implying 
	\be \label{a1neg}
	\a_1(a)<0,\quad a\in 	\left(-\frac{2\mu}{9},\frac{4\m}{9}\right).
	\ee   
	Therefore, the ratio under consideration,
	\be 
	a^{2n-2}\,\frac{\a_n(a)}{\a_1(a)}
	\ee is well-defined for  $a\in\left(-\frac{2\m}{9},0\right)\cup\left(0,\frac{4\m}{9}\right)$.\par 
	Now, let us use the inversion formula \eqref{eq:alphadisper} once again, and taking the absolute values on both sides of the formula one obtains
	\begin{align} 
		\left|\a_n(a)a^{2n}\right|&=\frac{1}{\p}\left|\int _{\frac{2 \mu }{3}}^{\infty }\frac{ds_1'}{s_1'}\,\mathcal{A}\left(s_1';s_2^{(+)}(s_1',a)\right)\,\beta _n\left(a,s_1'\right)\right|,\nn\\
		&\le \frac{1}{\p}\int _{\frac{2 \mu }{3}}^{\infty }\frac{ds_1'}{s_1'}\,\left|\mathcal{A}\left(s_1';s_2^{(+)}(s_1',a)\right)\,\beta _n\left(a,s_1'\right)\right|\quad\left[\text{Applying triangle inequality}\right],\nn\\
		&=\frac{1}{\p}\int _{\frac{2 \mu }{3}}^{\infty }\frac{ds_1'}{s_1'}\,\mathcal{A}\left(s_1';s_2^{(+)}(s_1',a)\right)\,\left|\beta _n\left(a,s_1'\right)\right|\quad\left[\ma\left(s_1';s_2^{(+)}(s_1',a)\right)\ge0\,\, \text{by lemma \ref{Apos} }\right],\nn
	\end{align}
	\begin{align}
		&\le \frac{1}{\p}\int _{\frac{2 \mu }{3}}^{\infty }\frac{ds_1'}{s_1'}\,\mathcal{A}\left(s_1';s_2^{(+)}(s_1',a)\right)\,n\left|\beta _1\left(a,s_1'\right)\right|\quad\left[\text{Applying corollary}\,\, \ref{bnbound}\right],\nn\\
		&=\frac{n}{\p}\int _{\frac{2 \mu }{3}}^{\infty }\frac{ds_1'}{s_1'}\mathcal{A}\left(s_1';s_2^{(+)}(s_1',a)\right)\,\left[-\beta _1\left(a,s_1'\right)\right]\quad\left[\b_1<0\iff|\b_1|=-\b_1\right],\nn\\
		&=n\,\left(-\a_1(a)a^2\right)=n\left|\a_1(a)a^2\right|\quad \left[\a_1(a)<0\,\,\text{from}\,\eqref{a1neg}\right].
	\end{align}
	$\therefore\,$ For $a\in\left(-\frac{2\m}{9},0\right)\cup\left(0,\frac{4\m}{9}\right)$, 
	\be 
	\left|\frac{\a_n(a)a^{2n}}{\a_1(a)a^2}\right|\le n,\qquad\forall\,n\ge2.
	\ee 
\end{proof}	
For illustration, we show $ \left|\frac{\a_n(a)a^{2n}}{\a_1(a)a^2}\right|$ as a function of $a$ for 1-loop $\phi^4$-amplitude and tree level type II string amplitude in figure \eqref{fig:coef}. The presented proof assumes the range $a\in\left(-\frac{2\m}{9},0\right)\cup\left(0,\frac{4\m}{9}\right)$, even though the plots suggest that bound is still valid till $a\in\left(-\frac{2\m}{9},0\right)\cup\left(0,\frac{2\m}{3}\right)$, for some cases. For the string amplitude $a\in\left(-\frac{s_1^{(0)}}{3},0\right)\cup\left(0,\frac{2s_1^{(0)}}{3}\right)$, where $s_1^{(0)}$ is the starting point of the physical cuts in s-channel. We use the fact that first massive state is at $s_1^{(0)}=1$.

\begin{figure}[hbt!]
	\centering
	\begin{subfigure}[b]{0.48\textwidth}
		\centering
		\includegraphics[width=\textwidth]{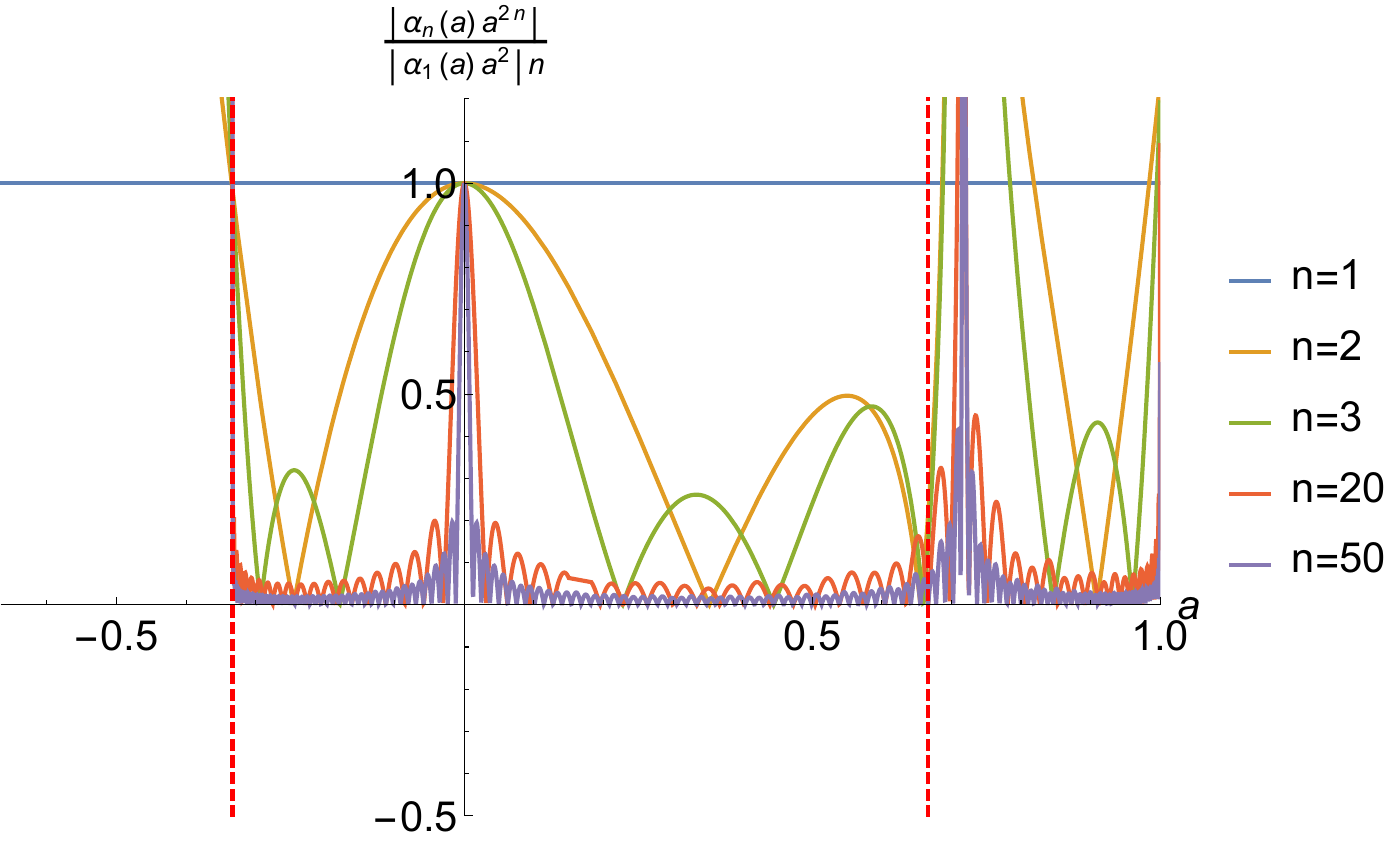}
		\caption{Tree level type II string amplitude. Red lines are $a=-\frac{1}{3}, \frac{2}{3}$}
		\label{fig:coef_string}
	\end{subfigure}
	\hfill
	\begin{subfigure}[b]{0.48\textwidth}
		\centering
		\includegraphics[width=\textwidth]{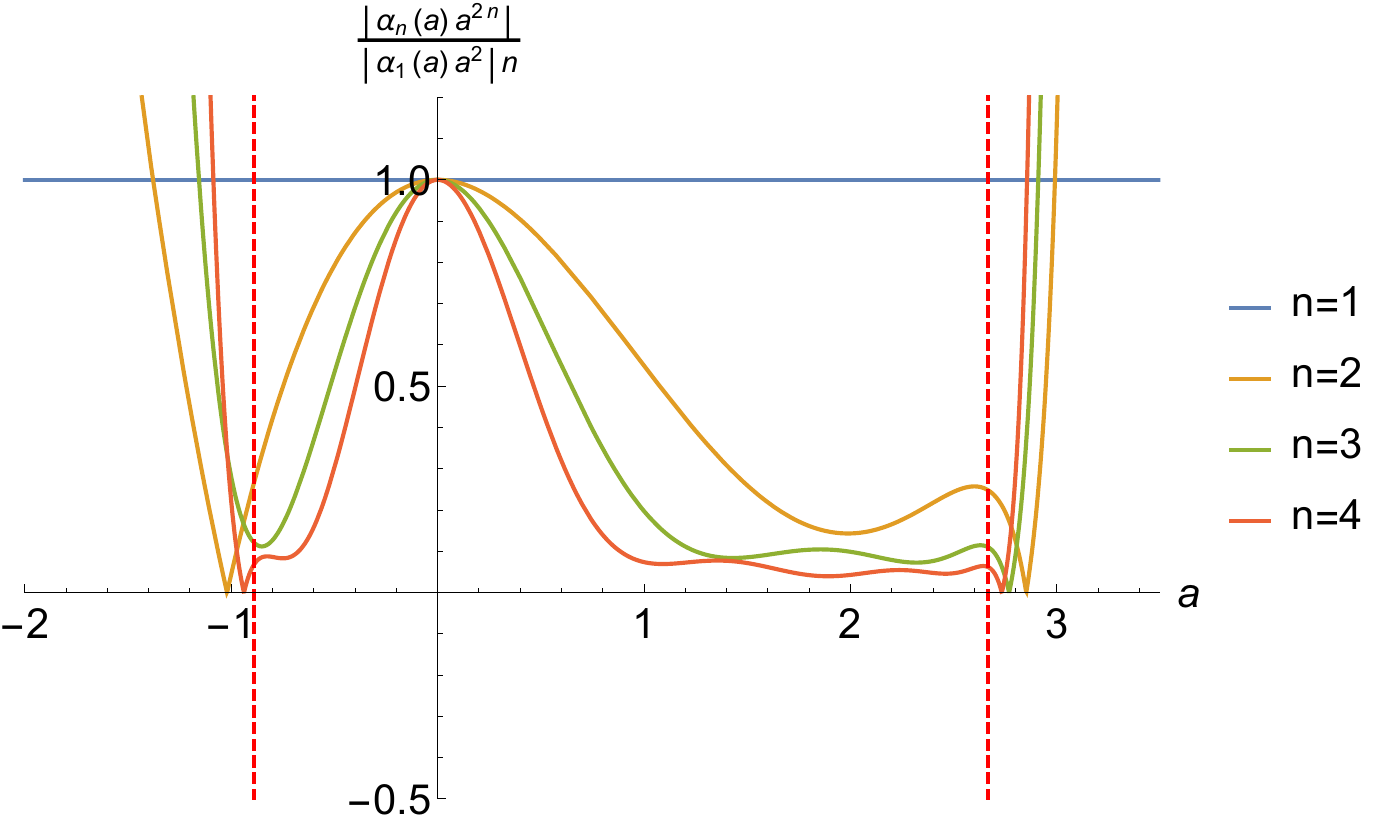}
		\caption{1-loop $\phi^4$-amplitude.  Red lines are $a=-\frac{8}{9}, \frac{8}{3}$}
		\label{fig:coef_phi4}
	\end{subfigure}
	\caption{Bounds on $ \left|\frac{\a_n(a)a^{2n}}{\a_1(a)a^2}\right|$ as a function of $a$ }
	\label{fig:coef}
\end{figure}

\section{Stronger bounds on the Wilson coefficients ${\mathcal W}_{p,q}$}\label{wilsonbound}
In order to derive bounds on ${\mathcal W}_{p,q}$, we first recall the formula of \eqref{eq:alphatoW}. 
{ We have already proved in equation (4.10) that $\alpha_1 < 0$ for $a\in \left(-\frac{2\mu}{9},0\right) \cup\left(0,\frac{4\m}{9}\right)$. Now, $\alpha_1=-\mathcal{W}_{1,0}\left(a\,\frac{\mathcal{W}_{0,1}}{\mathcal{W}_{1,0}}+1\right)$ which follows from  \eqref{eq:alphatoW}. Further, $\mathcal{W}_{1,0}$ is strictly positive, which was shown in \cite{ASAZ, tolley, nima}. This immediately implies that $\left( a\, \frac{\mathcal{W}_{0,1}}{\mathcal{W}_{10}}+1\right)>0$ for the given range of $a$ quoted above. From here, the bound in \eqref{w0110} follows. 
\be\label{w0110}
-\frac{9}{4\m}<\frac{\mathcal{W}_{0,1}}{\mathcal{W}_{1,0}}< \frac{9}{2\m}\,.
\ee
It is to be emphasized that if \eqref{w0110} is not satisfied, i.e., if $\mathcal{W}_{0,1}/\mathcal{W}_{1,0}$ is outside the range given by \eqref{w0110}, then dialing $a$ within the range $-\frac{2\m}{9}< a <\frac{4\m}{9}, \,a\ne0$, one can make the factor $\left( a \,\frac{\mathcal{W}_{0,1}}{\mathcal{W}_{1,0}}+1\right)$ \emph{change sign} contradicting $\alpha_1(a)<0$  in the said range of $a$.\footnote{{Note that if we were to find stronger bounds on the ratio, we would need a {\it wider} allowed range for $a$ and vice versa, contrary to the naive expectation.}}
Let us present this derivation a little differently as well.   Using theorem \ref{alphabound} for $n=2$ and \eqref{eq:alphatoW}, we find
\be\label{eq:n2alphaW}
-2\leq 2-\frac{27 a^2 \left(a \left(a \mathcal{W}_{0,2}+\mathcal{W}_{1,1}\right)+\mathcal{W}_{2,0}\right)}{a \mathcal{W}_{0,1}+\mathcal{W}_{1,0}} \leq 2\,.
\ee
Now if the denominator $a \mW_{0,1}+\mW_{1,0}$ vanishes, then unless at the same point the numerator vanishes, we will contradict the inequality. Suppose for $a=a_0$, the denominator vanishes. Then we must have $a_0 (a_0 \mW_{0,2}+\mW_{1,1})+\mW_{2,0})=0$ giving a relation between three apparently independent Wilson coefficients. This appears unnatural to us and if we were to avoid this possibility we would again get eq.(\ref{w0110}). Using eq.(\ref{eq:n2alphaW}), it is also possible to deduce bounds \cite{RS} on the individual $\mW_{0,2}/\mW_{1,0}, \mW_{1,1}/\mW_{1,0}, \mW_{2,0}/\mW_{1,0}$ and these appear comparable to \cite{Caron-Huot:2020cmc}.}

The condition $-\frac{9}{4\m}<\frac{\mathcal{W}_{0,1}}{\mathcal{W}_{1,0}}$ was derived in \cite[eq (5)]{ASAZ}, while $\frac{\mathcal{W}_{0,1}}{\mathcal{W}_{1,0}}< \frac{9}{2\m}$ is a new finding. For illustration purpose, we show in figure \eqref{fig:W01W10pion} that pion S-matrices satisfy them in a very non-trivial way. We will comment more on the behaviour exhibited in figure (\ref{fig:W01W10pion}) below. 
\begin{figure}[hbt!]
	\centering
	\includegraphics[width=0.6\textwidth]{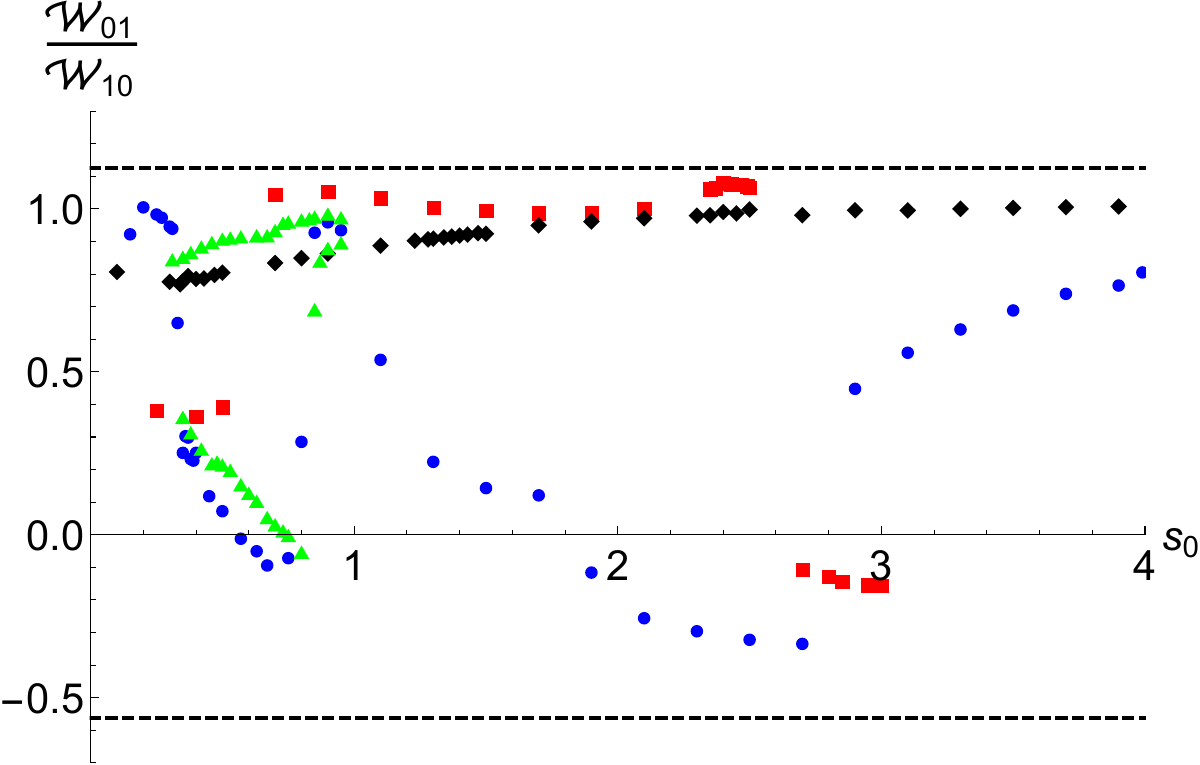}
	\caption{Ratio of $\frac{\mathcal{W}_{0,1}}{W_{1,0}}$ obtained from the S-matrix bootstrap. The horizontal axis is the Adler zero $s_0$. The green points are for the pion lake \cite{gpv}. The blue and red points are for the upper and lower river boundaries \cite{bhsst,bst} while the black points are for the line of minimum averaged total cross section S-matrices \cite{bst}. }
	\label{fig:W01W10pion}
\end{figure}

For the string example, we can ask the following question: Given ${\mathcal W}_{0,1},{\mathcal W}_{1,0},{\mathcal W}_{1,1}, {\mathcal W}_{2,0}$, in other words the Wilson coefficients till the eight-derivative order term $x^2$, how constraining is \eqref{eq:n2alphaW}? The situation is shown in fig(\ref{fig:a2string}). Quite remarkably, the range of $\mW_{02}$ is very limited; the inequality is very constraining indeed! 

\begin{figure}[hbt!]
	\centering
	\begin{subfigure}[b]{0.48\textwidth}
		\centering
		\includegraphics[width=\textwidth]{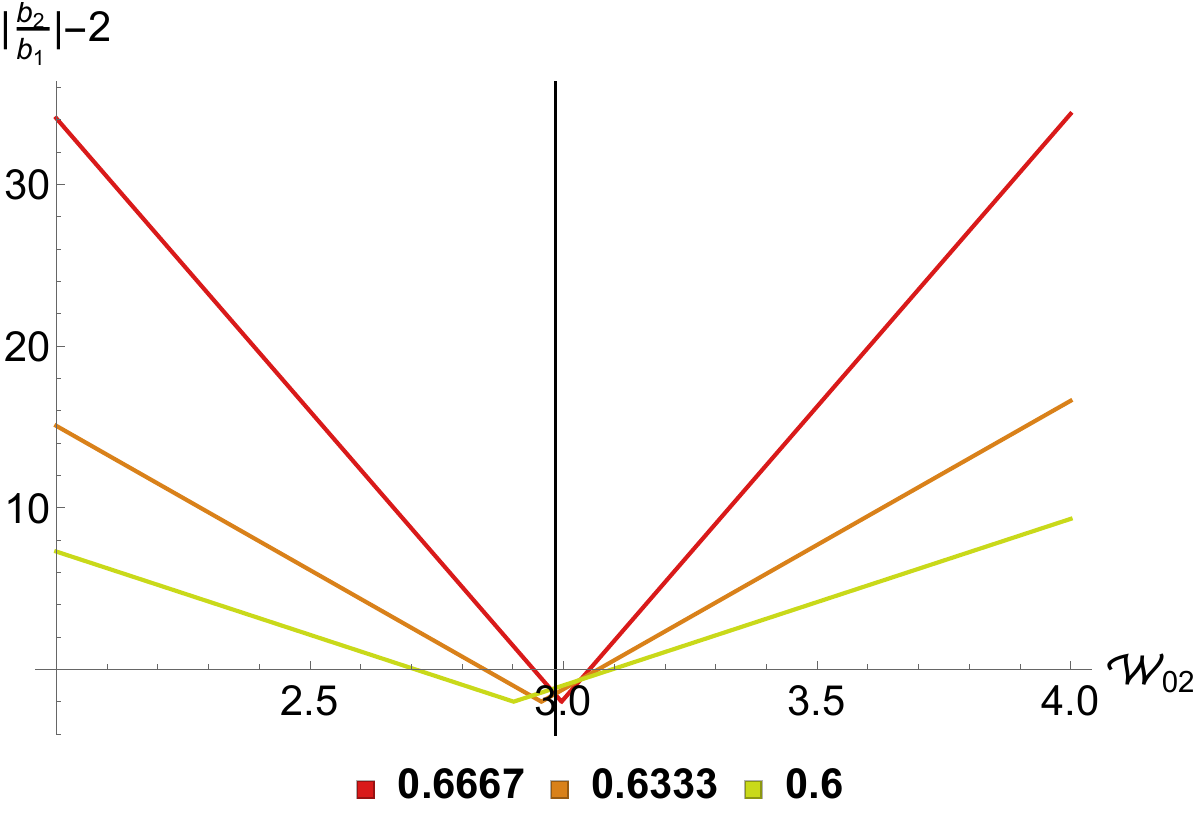}
		\caption{Tree level type II string amplitude}
		\label{fig:a2string}
	\end{subfigure}
	\hfill
	\begin{subfigure}[b]{0.48\textwidth}
		\centering
		\includegraphics[width=\textwidth]{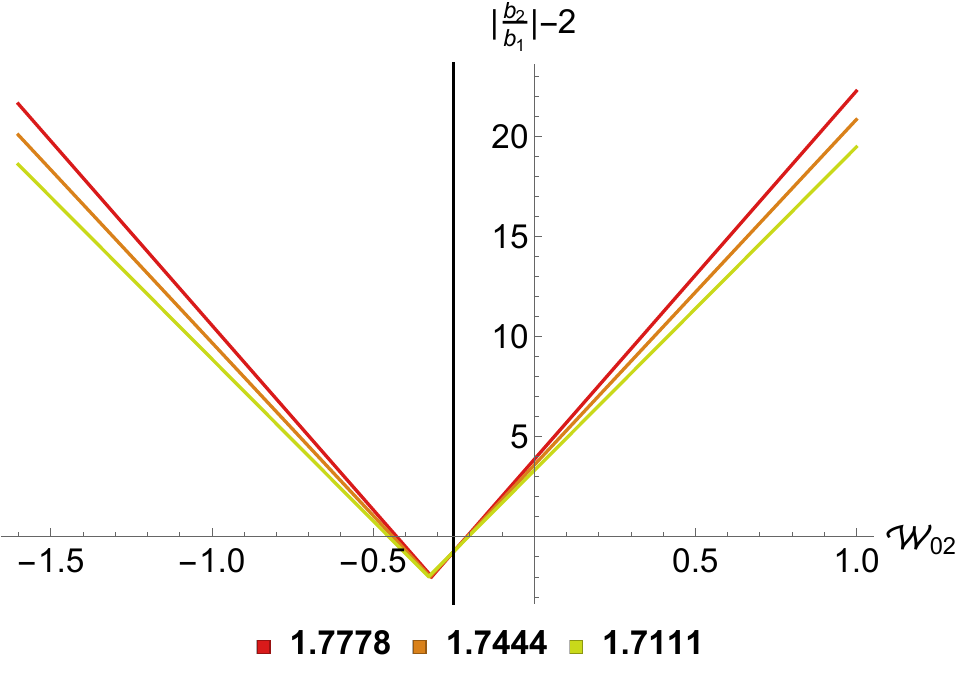}
		\caption{Pion scattering amplitude, $s_0=0.35$}
		\label{fig:a2upper4}
	\end{subfigure}
	\caption{Constraints on Wilson coefficients using \eqref{eq:n2alphaW}.  Given ${\mathcal W}_{0,1},{\mathcal W}_{1,0},{\mathcal W}_{1,1}, {\mathcal W}_{2,0}$ figure shows that bound on the $\mathcal{W}_{0,2}$. Since $\left|\frac{b_2}{b_1}\right|-2$ should be less than zero, $\mathcal{W}_{0,2}$ must lie inside the triangle. {Black} line is the exact answer. Different values of $a$ are indicated with different colours.}
	\label{fig:a2a3bound}
\end{figure}
We can further investigate the situation for $n=3$. We get
\be\label{eq:n3alphaW}
\begin{split}
	&-3\leq 3+\frac{27 a^2 \left(a \left(-4 a \mathcal{W}_{0,2}+27 a \left(a \left(a \left(a \mathcal{W}_{0,3}+\mathcal{W}_{1,2}\right)+\mathcal{W}_{2,1}\right)+\mathcal{W}_{3,0}\right)-4 \mathcal{W}_{1,1}\right)-4 \mathcal{W}_{2,0}\right)}{a \mathcal{W}_{0,1}+\mathcal{W}_{1,0}}\leq 3
\end{split}
\ee

We give in the appendix a demonstration of how to constrain $\mathcal{W}_{0,3}$ using this inequality.

\subsection{Bounds in case of EFTs:} 
In an EFT, usually, the Lagrangian is known up to some energy scale. From that information, one can calculate the amplitude up to that scale.  In such cases, we can subtract off the known part of the amplitude. These steps result in a shift in  the lower limit of the dispersion integral \eqref{crossdisp} by the scale $\d_0$, namely $\m\to \m+3\d_0/2$ (see \cite{ASAZ}). Therefore, making this replacement in \eqref{w0110}, we have
\be \label{mudelta}
-\frac{9}{4\m+6\d_0}<\frac{\mathcal{W}_{0,1}}{\mathcal{W}_{1,0}}< \frac{9}{2\m+3\d_0}\,.
\ee
Here $\mW$ are the Wilson coefficients of the amplitude with subtractions. Now notice that if we consider $\delta_0\gg \mu$ then we have
\be\label{strong}
-\frac{3}{2\d_0}<\frac{\mathcal{W}_{0,1}}{\mathcal{W}_{1,0}}< \frac{3}{\d_0}\,.
\ee
Let us compare this to \cite{Caron-Huot:2020cmc}. Converting their results to our conventions, we find that the lower bound above is identical to their findings--this is corroborated by the results of \cite{tolley} as well as what arises from crossing symmetric dispersion relations \cite{ASAZ}. The other side of the bound is more interesting. The strongest result in $d=4$ in \cite{Caron-Huot:2020cmc} places the upper bound at $\approx 5.35/\delta_0$ in our conventions. Their approach also makes the bound spacetime dimension dependent. Now remarkably, the bound we quote above and the $d\gg 1$ limit of \cite{Caron-Huot:2020cmc} are identical! In EFT approaches, one takes the so-called null constraints or locality constraints and expands in the limit $\delta_0\gg\mu$.  It is possible that a more exact approach building on \cite{Caron-Huot:2020cmc} will lead to a stronger bound as in \eqref{strong}.

Let us now comment on the behaviour in the figure (\ref{fig:W01W10pion}). First, notice that all S-matrices appear to respect the upper bound we have found above; for the S-matrix bootstrap results, we set $\delta_0=0$. For comparison, note that for 1-loop $\phi^4$, and 2-loop chiral perturbation theory, we have
\be
\left(\frac{\mW_{0,1}}{\mW_{1,0}}\right)_{\phi^4}\approx-0.315 \,,\quad \left(\frac{\mW_{0,1}}{\mW_{1,0}}\right)_{\chi-PT}\approx-0.135\,,
\ee
both in units where $m=1$. These numbers would be closer to the lower black dashed line in the figure (\ref{fig:W01W10pion}), which is the bound that is common in all approaches so far. The upper black dashed line is what we find in the current paper. For future work, it will be interesting to search for an interpolating bound as in \eqref{mudelta} which enables us to interpolate between $\delta_0=0$ and $\delta_0\gg \mu$.

\section{Two-sided bounds on the amplitudes}\label{ampbound}
The crossing symmetric dispersion relation can enable us to derive $2-$sided bounds on the scattering amplitude $\mm$. In this section, we will derive such bounds. The result that we will first prove is the following:
\begin{theorem}\label{th:ampbound}
	Let $\mm(\tilde{z},a)$ be a unitary and crossing-symmetric scattering amplitude admitting the dispersive representation \eqref{crossdisp} and admits the power series expansion \eqref{Mpower} about $\tilde{z}=0$ which converges in the open disc $|\tilde{z}|<1$. Define the function 
	\be\label{eq:fdef} 
	f(\tilde{z},a):=\frac{\mm(\tilde{z},a)-\a_0}{\a_1(a)a^2},\qquad\a_0=\mm(\tilde{z}=0,a).
	\ee 
	Then for $a\in\left(-\frac{2\m}{9},0\right)\cup\left(0,\frac{4\m}{9}\right)$,
	\begin{enumerate}
		\item [(1)] 
		\be 
		\left|f(\tilde{z},a)\right|\le \frac{|\tilde{z}|}{(1-|\tilde{z}|)^2},\quad |\tilde{z}|<1
		\ee 
		\item[(2)] 
		\be 
		|f(\tilde{z},a)|\ge \frac{|\tilde{z}|}{(1+|\tilde{z}|)^2},\quad\tilde{z}\in\mathbb{R}\,\land\, |\tilde{z}|<1.
		\ee 
	\end{enumerate} 
\end{theorem}
\begin{proof}[{\bf Proof.}]
	Let us first prove the upper bound. Starting with the dispersion relation \eqref{crossdisp}, we can obtains
	\begin{align}
		|\mathcal{M}(\zt,a)-\alpha_{0}|&\le\imeas\left|\immp\right|\,\left|H(s_1',\zt)\right|\quad\left[\text{Applying triangle inequality}\right],\nn\\
		&=\frac{1}{\p}\imeas\immp\left|\b_1(a,s_1')\right|\left|F(\zt;s_1',a)\right|\quad\left[\text{Using lemma \ref{Apos} and \eqref{Fdef}}\right],\nn\\
		&\le \frac{1}{\p}\imeas\immp	\left|\b_1(a,s_1')\right| \frac{|\zt|}{(1-|\zt|)^2}\quad\left[\text{Using corollary \ref{FKoebe}}\right],\nn\\
		&=\frac{|\zt|}{(1-|\zt|)^2}\frac{1}{\p}\imeas\immp	\left[-\b_1(a,s_1')\right]\quad\left[\b_1<0\iff|\b_1|=-\b_1\right],\nn\\
		&=\frac{|\zt|}{(1-|\zt|)^2}\,\left[-\a_1(a)a^2\right]=\frac{|\zt|}{(1-|\zt|)^2}\,\left|\a_1(a)a^2\right|\quad\left[\a_1(a)<0\,\,\text{from}\,\eqref{a1neg}\right].
	\end{align}
	Thus, we have finally, 
	\be 
	\left|\frac{\mm(\zt,a)-\a_0}{\a_1(a)a^2}\right|\le \frac{|\zt|}{(1-|\zt|)^2},\quad |\zt|<1;
	\ee which proves part $(1)$ of the theorem.\par 
	Next, let us prove the lower bound. First, we write 
	\be \label{rintro}
	F(\zt;s_1,a)\equiv\frac{H(\zt;s_1,a)}{\b_1(a,s_1)}=\frac{\zt}{(1+\zt)^2}\times\rho(\zt;s_1,a),
	\ee where
	\be 
	\rho(\zt;s_1,a):=\left(\frac{1+\zt}{1-\zt}\right)^2\times \left[1-\frac{27a^2(a-s_1)}{s_1^3}k(\zt)\right]^{-1},
	\ee  $k(\zt)$ being the Koebe function of \eqref{Koebe}. Now it turns out that $\r>1$ for $\zt\in\rb^+$ and $\r<-1$ for $\zt\in\rb^-$. This immediately implies
	\begin{align} 
		\label{rhomod}	|\r|&>1,\\
		\label{rhosign}	\zt\r&=|\zt||\r|.
	\end{align} Next, we use the dispersion relation  and \eqref{Fdef} to get 
	\begin{align}
		\left|\mm(\zt,a)-\a_0\right|&=\frac{1}{\p}\left|\imeas\immp \b_1(a,s_1')\,F(\zt;s_1,a)\right|,\nn\\
		&=\frac{1}{\p}\left|\imeas\immp \b_1(a,s_1')\,\frac{\zt}{(1+\zt)^2}\times\rho(\zt;s_1,a)\right|\quad\left[\text{Using \eqref{rintro}}\right],\nn\\
		&=\frac{1}{\p}\left|\imeas\immp \b_1(a,s_1')\,\frac{|\zt|}{(1+\zt)^2}\times\left|\rho(\zt;s_1,a)\right|\right|\quad\left[\text{Using \eqref{rhosign}}\right],\nn\\
		&\ge \frac{1}{\p}\frac{|\zt|}{(1+|\zt|)^2}\left|\imeas\immp\b_1(a,s_1')\right|\quad\left[\text{Using triangle inequality and \eqref{rhomod}}\right],\nn\\
		&=\frac{1}{\p}\frac{|\zt|}{(1+|\zt|)^2}\left|\a_1(a)a^2\right|.
	\end{align}
	Therefore, we have finally,
	\be 
	\left|\frac{\mm(\zt,a)-\a_0}{\a_1(a)a^2}\right|\ge \frac{|\zt|}{(1+|\zt|)^2},\quad \zt\in\rb\,\land\,|\zt|<1;
	\ee proving the second part and, hence, the theorem.
\end{proof}
%
We emphasize that our derivation for upper bound considers $\tilde{z}$ as complex numbers, while we present the derivation for lower bound considering only $\tilde{z}$ real numbers. We worked with a range of $-\frac{2\m}{9}<a<\frac{4\m}{9}$. Nevertheless, both of the bounds are valid for complex $\tilde{z}$ as far as we have observed. These bounds are satisfied by the $1-$loop $\phi^4$ amplitude and close string amplitude even for complex $\tilde z$, demonstrated in figure \eqref{fig:bound}, \eqref{fig:boundcomplex}. 

Even though, we have presented our proof for the range $a\in\left(-\frac{2\m}{9},0\right)\cup\left(0,\frac{4\m}{9}\right)$, for certain cases, we observed that bound is still valid till $a\in\left(-\frac{2\m}{9},0\right)\cup\left(0,\frac{2\m}{3}\right)$. For massless amplitude its $a\in\left(-\frac{s_1^{(0)}}{3},0\right)\cup\left(0,\frac{2s_1^{(0)}}{3}\right)$, where $s_1^{(0)}$ is the starting point of the physical cuts in s-channel.
Let us now rewrite the two sided bounds of the familiar Mandelstam variables. 
Now first note that, for real $\tilde{z}>0$, in terms of $x,a$ variables, the bounds become
\be
\frac{a^2 |\a_1(a)||x|}{4|x|+27 a^2}\leq \left|\mathcal{M}(z,a)-\a_0\right| \leq\frac{ |\a_1(a)||x|}{27}\,.
\ee
Here since $x=-27 a^2 \tilde z/(\tilde z-1)^2$, $x$ is real and negative. Now let us examine these bounds in the Regge limit where $|s_1|\rightarrow \infty$ with $s_2$ fixed.  From \eqref{eq:skdef}, it is clear that in this case in the $z$-variable,  $z\rightarrow e^{2\pi i/3}$ which also takes $s_2\rightarrow a$.  Now writing $s_1=|s_1|e^{i\theta/2}$, so that $x\sim |s_1|^2 e^{i\theta}$ when $|s_1|\rightarrow\infty$, we find
\be
\frac{1}{4}-\frac{27 a^2 \sin^2\frac{\theta}{2}}{16 \left| x\right| }+O\left(\frac{1}{|x|^{3/2}}\right)\leq \left|\frac{\mm(\zt,a)-\a_0}{\a_1(a)a^2}\right| \leq\frac{\left| x\right|  }{27 a^2 \sin ^2 \frac{\theta }{2}}-\frac{1}{4}\cot^2 {\theta}+O\left(\frac{1}{|x|^{1/2}}\right)\,.
\ee
Let us comment on this form. First, since $|x|\sim |s_1|^2$, the upper bound for fixed $\theta$ is the $|s_1|^2$ bound on the amplitude. Next, note the the important $\sin^2\frac{\theta}{2}$ factor. If we took $s_1$ to be real and positive, then the upper bound would be trivial. However, the real $s_1$-axis gets mapped to boundary of the unit disc, which is not a part of the open disc. Thus to use these bounds, we have to keep $\theta\in(0,2\pi)$, with the end-points excluded. Next, note the more interesting lower bound which begins with a constant! Finally, and importantly, the bound involves $\alpha_1(a)a^2$. In terms of Wilson coefficients, this involves only $\mathcal{W}_{01}, \mathcal{W}_{10}$. Thus, in the Regge limit, we have the following interesting bound
\be
\frac{27a^2}{4} \left|a\mathcal{W}_{01}+\mW_{10}\right|\lesssim \left|\mm(\tilde z,a)-\mm(0,a)\right| \lesssim\frac{\left| s_1\right|^2  }{\sin ^2 \frac{\theta }{2}}\left|a\mathcal{W}_{01}+\mW_{10}\right|\,.
\ee


\begin{figure}[hbt!]
	\centering
	\begin{subfigure}[b]{0.48\textwidth}
		\centering
		\includegraphics[width=\textwidth]{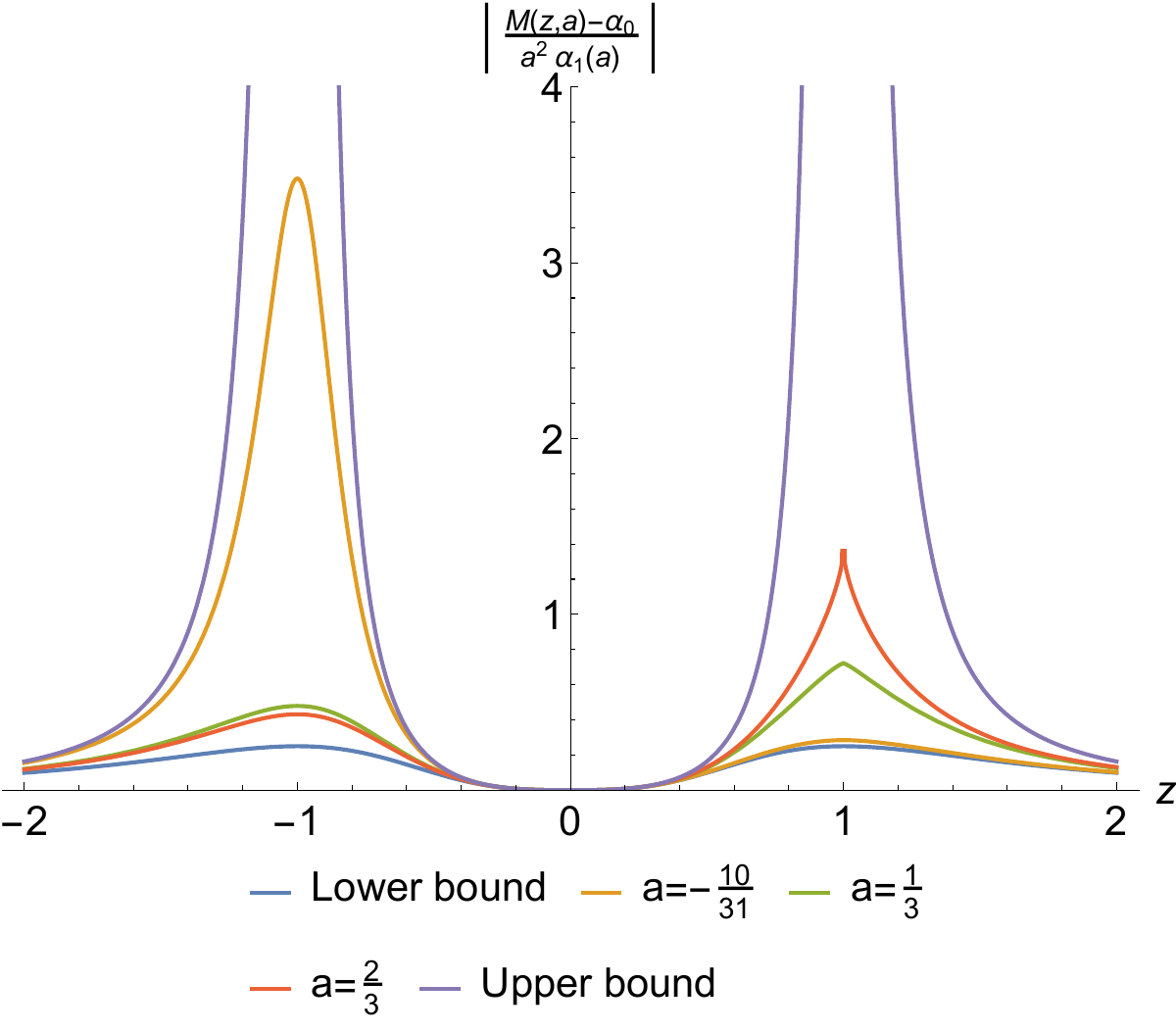}
		\caption{Tree level type II string amplitude}
		\label{fig:bound_string}
	\end{subfigure}
	\hfill
	\begin{subfigure}[b]{0.48\textwidth}
		\centering
		\includegraphics[width=\textwidth]{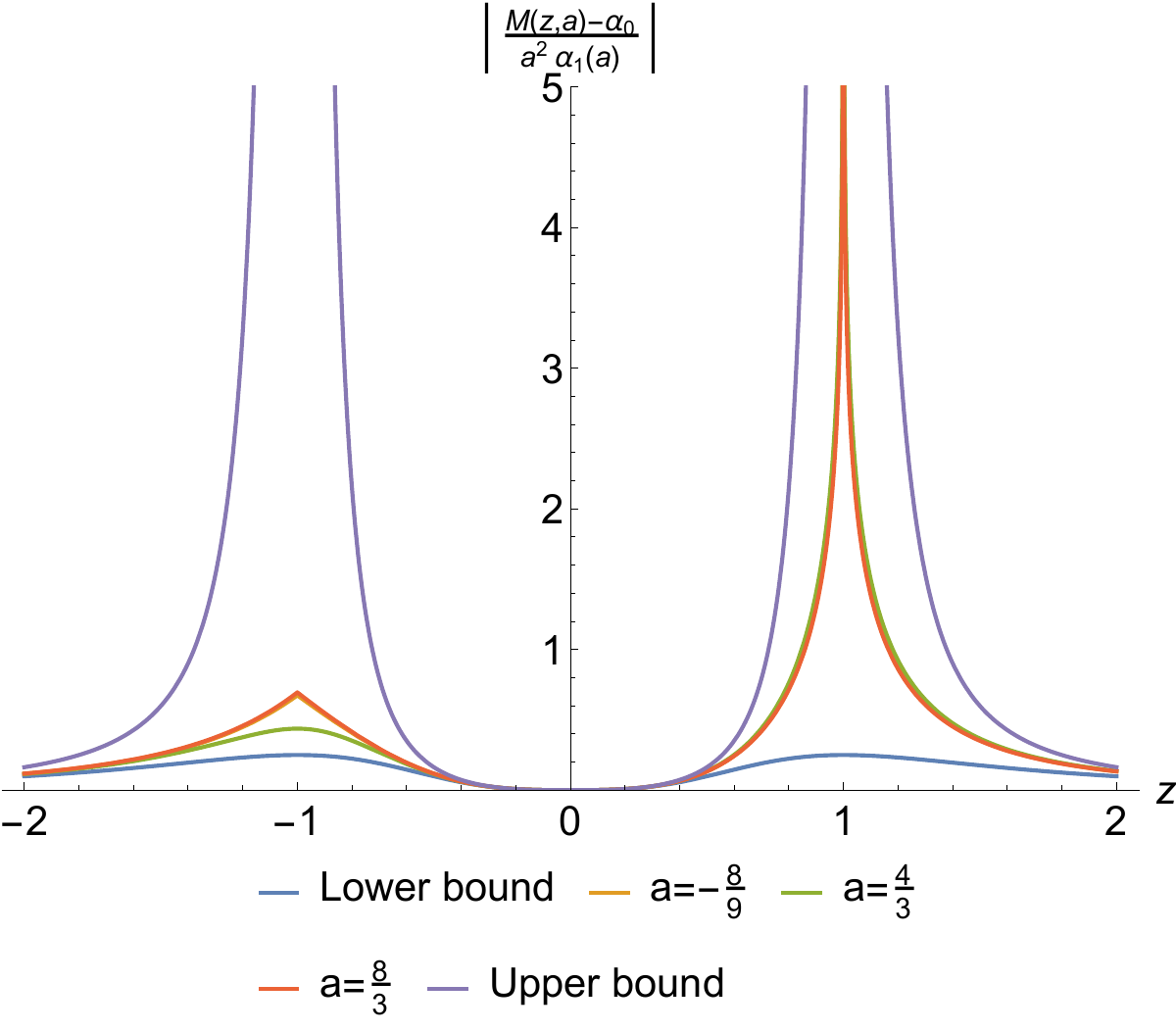}
		\caption{1-loop $\phi^4$-amplitude}
		\label{fig:bound_phi4}
	\end{subfigure}
	\caption{{Bounds on amplitude, as in theorem \eqref{th:ampbound}, are satisfied by Tree level type II string amplitude and 1-loop $\phi^4$-amplitude.}}
	\label{fig:bound}
\end{figure}

\begin{figure}[hbt!]
	\centering
	\begin{subfigure}[b]{0.48\textwidth}
		\centering
		\includegraphics[width=\textwidth]{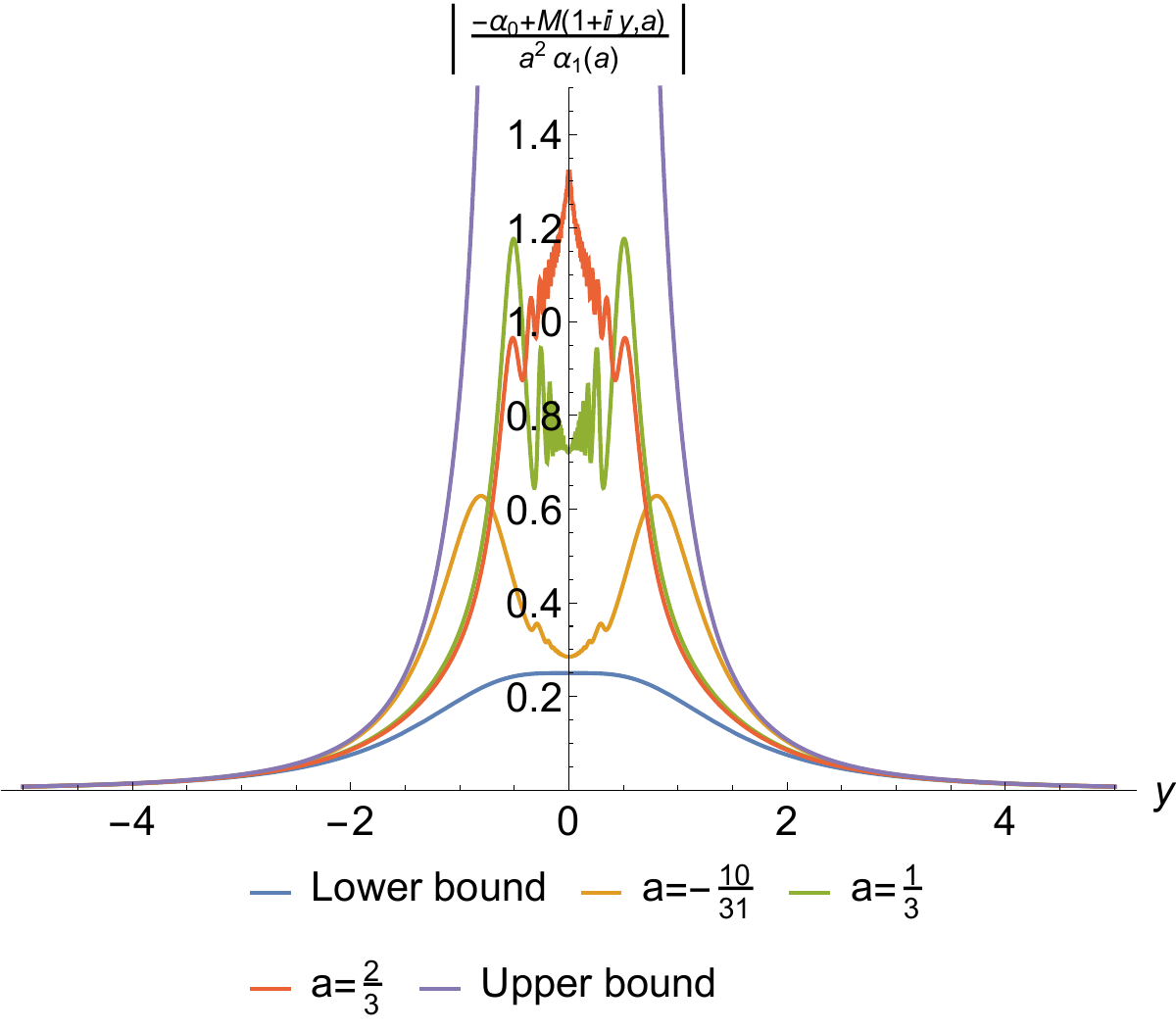}
		\caption{Tree level type II string amplitude}
		\label{fig:bound_stringc}
	\end{subfigure}
	\hfill
	\begin{subfigure}[b]{0.48\textwidth}
		\centering
		\includegraphics[width=\textwidth]{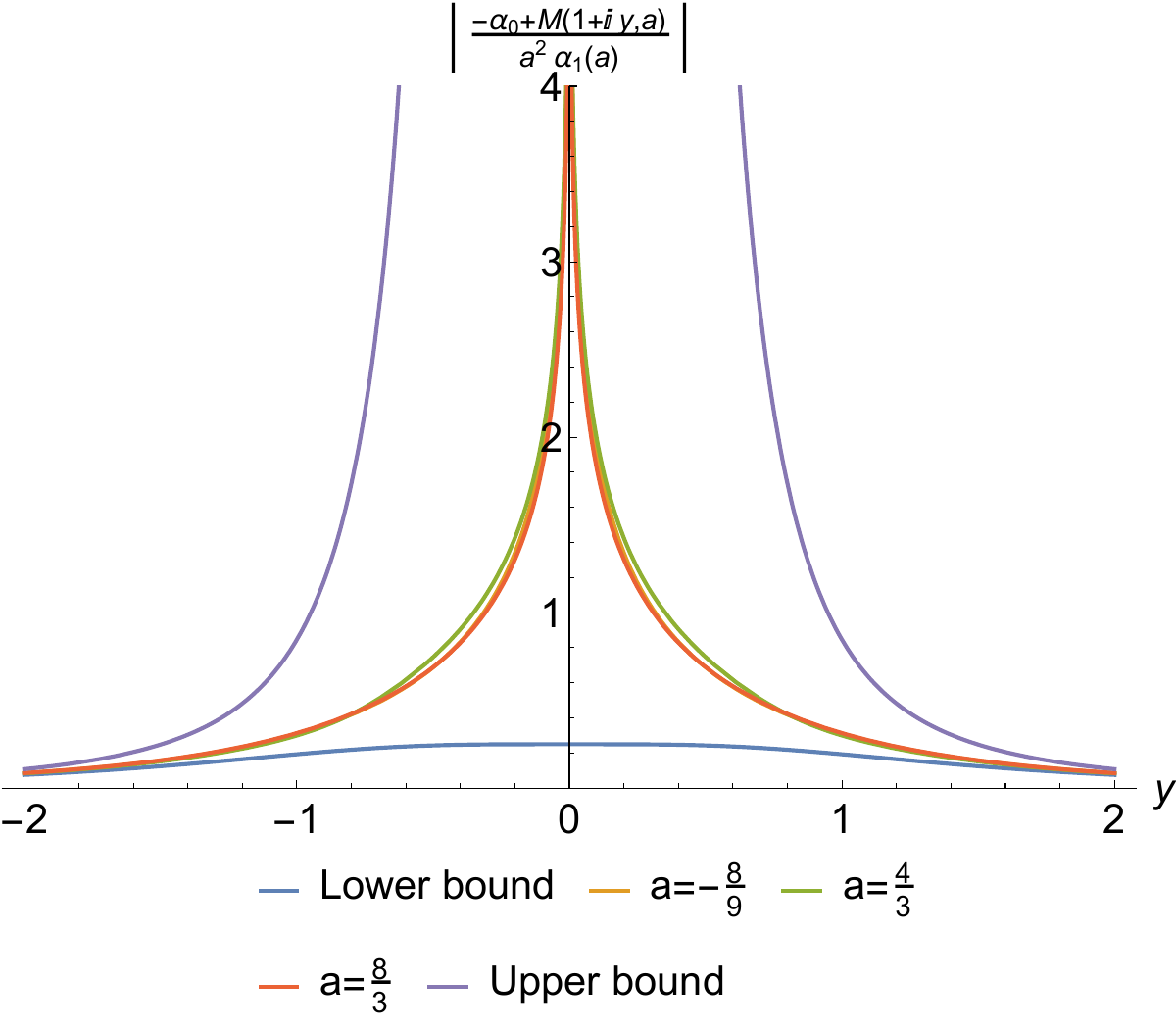}
		\caption{1-loop $\phi^4$-amplitude}
		\label{fig:bound_phi4c}
	\end{subfigure}
	\caption{{Bounds on amplitude, as in theorem \eqref{th:ampbound}, are satisfied by Tree level type II string amplitude and 1-loop $\phi^4$-amplitude. These bounds on amplitude valid for complex $\tilde{z}$.}}
	\label{fig:boundcomplex}
\end{figure}

\section{Univalence of the EFT expansion}\label{EFTunival}
So far, on the QFT side, we have focused on results motivated by the Bieberbach conjecture but which one can derive by using the crossing symmetric dispersion relation. All the QFT conditions we have derived so far hint at univalence. Establishing univalence in generality is a hard question and beyond the scope of our present work. Nevertheless, we can investigate scenarios where univalence is guaranteed to {\underline{not}} hold. We will begin our investigations using Szeg\"{o}'s theorem\footnote{We thank P. Raman for numerous suggestions and discussions pertaining to this section.}. For definiteness, consider the low energy expansion of the 2-2 dilaton scattering in type II string theory with the massless pole subtracted. We will consider
\be
f(\tilde z,a)= \frac{M(\tilde z,a)-M(0,a)}{\partial_{\tilde z}M(\tilde z,a)|_{\tilde z=0}}\,,
\ee
expanded around $\tilde z=0$. This corresponds to a low energy expansion of the amplitude since $\tilde z\sim 0$ corresponds to $x\sim 0, y \sim 0$. If $M(\tilde z,a)$ was univalent inside a disc $D$ of radius $R$, for a certain range of $a$, then $f(\tilde z,a)$ should be locally univalent inside $D$. This means that the absolute value of the smallest root ($\tilde z=\zeta_{min}$) of $\partial_{\tilde z} f(\tilde z,a)=0$ should be greater than $1/4$ to satisfy Szeg\"{o}'s theorem. This can be easily checked using Mathematica to some high power in the expansion in $\tilde z$.  The plot in fig.(\ref{fig:szego}) shows that univalence of the full amplitude is only possible in the range $a\in(-1/3,2/3)$ as one may have anticipated from our previous discussion. In fig.(\ref{fig:zetaminvsn}) we show monotonicity of $\zeta_{min}$ as a function of $n$ for small $a$. This is intuitive in the sense that for higher $n$'s, we are putting in more number of terms in the EFT expansion, as a result of which the radius of the disk, within which there is potential univalence, increases. We should point out, however, that for slightly larger values of $a$, for instance, $a>0.05$, there are other features that arise in the plot, which do not respect monotonicity---the physical implication of this finding is unclear to us. For 1-loop $\phi^4$, our findings are qualitatively similar. In appendix D, we comment on the Nehari conditions. 

\begin{figure}[hbt!]
	\centering
	\begin{subfigure}[b]{0.46\textwidth}
		\centering
		\includegraphics[width=\textwidth]{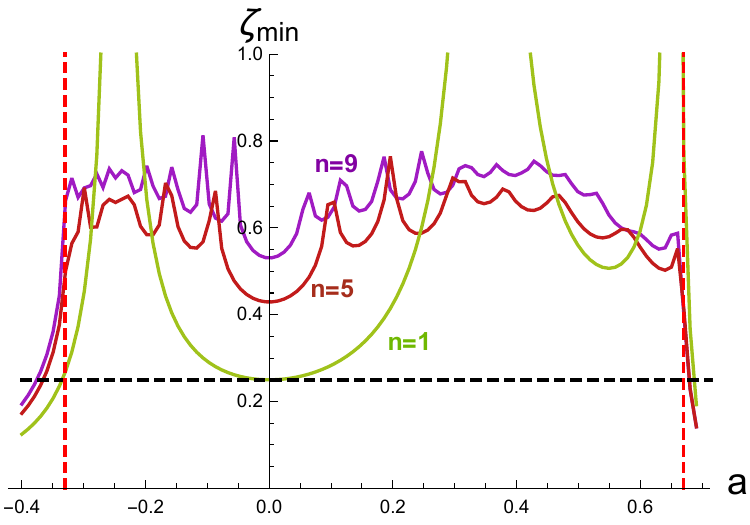}
		\caption{}
		\label{fig:szego}
	\end{subfigure}
	\hfill
	\begin{subfigure}[b]{0.46\textwidth}
		\centering
		\includegraphics[width=\textwidth]{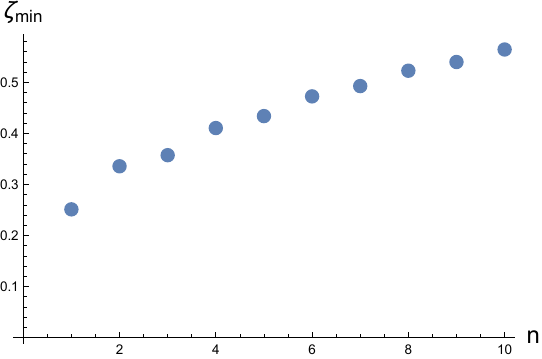}
		\caption{}
		\label{fig:zetaminvsn}
	\end{subfigure}
	\caption{Testing univalence using Szeg\"{o}'s theorem. (a) Plot of $\zeta_{min}$ vs $a$. This demonstrates that for the string amplitude to be potentially univalent, $-1/3<a<2/3$ must hold. (b) Plot of $\zeta_{min}$ vs $n$ for $a=0.02$.}
	\label{fig:szegotest}
\end{figure}

\subsection{Grunsky inequalities and EFT expansion}\label{Grunsky}
From the expansion in \eqref{eq:Wpqdef}, one can relate the $\mathcal{W}_{p,q}$ with the $\omega_{j,k}$ in \eqref{eq:grunslyomega}. One can easily check that for each $N$ in \eqref{eq:grunslyomega}, $p+q\leq 2N+1$ numbers of $W_{p,q}$ appears in \eqref{eq:grunslyomega}. Therefore, in order to hold the \eqref{eq:grunslyomega} till $N$,  for \eqref{eq:fdef}, one has to retain terms in EFT expansion \eqref{eq:Wpqdef} up to
\be
p+q\geq 2N+1\,.
\ee
Before delving into generalities, we begin by considering a toy problem.
\subsubsection{A toy example: scalar EFT approximation}
Since the general case of the univalence of the convex sum of univalent functions is not entirely clear (or more appropriately known to us), let us consider a toy problem below. This problem is enlightening for several reasons. For starters, the amplitudes we consider below are in two ``standard'' forms. Both of these were considered in \cite{Caron-Huot:2020cmc} to study scalar EFTs, and it was found that scalar EFTs could be approximated as a convex sum of the amplitudes below.
Thus consider the sum 
\be
\mathcal{M}^{(toy)}(s_1,s_2)=\mu_1 \mathcal{M}_1(s_1,s_2)+\mu_2 \mathcal{M}_0(s_1,s_2)
\ee
where
\be
\begin{split}
	&\mathcal{M}_0(s_1,s_2)=\frac{1}{M_1^2-s_1}+\frac{1}{M_1^2-s_2}+\frac{1}{M_1^2+s_1+s_2}-\frac{3}{M_1^2}=\frac{27 a^2 \tilde{z} \left(2 M_1^2-3 a\right)}{27 a^3 \tilde{z}-27 a^2 M_1^2 \tilde{z}-M_1^6 \left(\tilde{z}-1\right)^2}\\
	&\mathcal{M}_1(s_1,s_2)=-\frac{1}{\left(M_2^2-s_1\right) \left(M_2^2-s_2\right) \left(M_2^2+s_1+s_2\right)}=\frac{\left(\tilde{z}-1\right)^2}{27 a^3 \tilde{z}-27 a^2 M_2^2 \tilde{z}-M_2^6 \left(\tilde{z}-1\right)^2}\,.
\end{split}
\ee
In the range (we are considering real $a$ only), $-\frac{M_{1,2}^2}{3}<a< \frac{2M_{1,2}^2}{3}$, individually $\mathcal{M}_0(s_1,s_2),~ \mathcal{M}_1(s_1,s_2)$ do not have any singularity inside the unit disk. A straightforward calculation shows  for both of them $\omega_{j,k}=-\frac{\d_{j,k}}{k}$ for $j>0, k>0$. Therefore, $\mathcal{M}_0(s_1,s_2),~ \mathcal{M}_1(s_1,s_2)$ are individually univalent inside the unit disk when the restriction on $a$ holds. Now the sum of $\mathcal{M}_0(s_1,s_2),~ \mathcal{M}_1(s_1,s_2)$, we denoted $\mathcal{M}^{(toy)}(s_1,s_2)$ does not have any singularity inside the unit disk. We can check the univalence of the combination using the Grunsky inequalities again. Quite remarkably, if $M_1=M_2$,  we again find that $\omega_{j,k}=-\frac{\d_{j,k}}{k}$ for $j>0, k>0$. Therefore, for any $\l_1,\l_2$ for the range of $a$ above, the combination is univalent inside the disk! When $M_1\neq M_2$ we get nontrivial constraints for univalence. Already at $N=1$,  the Grunsky inequality \eqref{eq:grunslyomega} leads to
\be\label{bigdeal}
{\bigg |}1-\frac{729 a^4 \mu _1 \mu _2 \left(M_2^2-M_1^2\right){}^2 \left(a-M_1^2\right) \left(2 M_1^2-3a\right){}^3 }{M_1^{10} \left(\mu_1(a-M_1^2)+\mu_2 M_1^6(2M_1^2-3a)\right){}^2}{\bigg |}\leq 1\,,
\ee
where we have expanded around $M_1^2=M_2^2$ and retained only the leading term. This leads to a constraint on $\mu_1,\mu_2$ in terms of $M_1, M_2, a$. One interesting point to make note of is the following: for $a\sim 0$, for the above condition to hold, we will need $\mu_1\mu_2<0$. This is a consequence of unitarity and conforms with the signs in \cite{Caron-Huot:2020cmc}. We leave a detailed investigation of such constraints for the future\footnote{In light of such complications, it seems to us that it will be more useful in physics to think about an approximate notion of univalence, may be saying that a function is approximately univalent if the first few $N$'s in the Grunsky inequalities hold.}. Next, we will consider expanding the amplitude around $a\sim 0$ and will find that to leading order the Grunsky inequalities hold. 

\subsubsection{Proof of univalence of $f(\zt,a)$ for $|a|$ small}
Using  the expansion \eqref{eq:Wpqdef}, one can calculate the Grunsky coefficients, $\{\omega_{j,k}\}$, for $f(\zt,a)$ in leading order of small $a$ to obtain 
\be
\omega_{j,k}=-\frac{\delta _{j,k}}{k}+\frac{729 a^4 j k \left(\mathcal{W}_{1,0} \mathcal{W}_{3,0}-\mathcal{W}_{2,0}^2\right)}{\mathcal{W}_{1,0}^2}+O(a^5).
\ee
From positivity \cite[eq (6.4)]{tolley}, $\mathcal{W}_{1,0} \mathcal{W}_{3,0}\geq \mathcal{W}_{2,0}^2$. Therefore second term in the above equation is positive. 
Therefore, we get
\be
|\omega_{j,k}|\leq \frac{\delta _{j,k}}{k}\text{ , for j=k}\,.
\ee
The off-diagonal terms ($j\neq k$) starts at $O(a^4)$. From \eqref{eq:grunslyomega}, we can say that for small $a$, Grunsky inequalities are satisfied by $f(\tilde{z},a)$ in \eqref{eq:fdef}.


We can push the Grunsky inequalities further. If we assume that perturbatively around $a=0$, univalence should hold, then we can derive nonlinear inequalities by making clever choice for the complex parameters $\lambda_k$ in \eqref{eq:grunslyomega}. For instance, for $N=2$ in  \eqref{eq:grunslyomega}, choosing $\lambda_2=-\lambda_1/2$ and $\lambda_1,\lambda_2$ as real, we easily find the following complicated nonlinear inequality
\be\label{complic}
-3\mW_{2,0}^4+8 \mW_{1,0}\mW_{2,0}^2\mW_{3,0}-4\mW_{1,0}^2\mW_{2,0}\mW_{4,0}-3\mW_{1,0}^2\mW_{3,0}^2+2 \mW_{1,0}^3\mW_{5,0}\geq 0\,.
\ee
We have verified that the string amplitude, as well as the pion S-matrices, satisfy this inequality. \subsection{Bounds in case of EFTs:} 
In an EFT, usually, the Lagrangian is known up to some energy scale. From that information, one can calculate the amplitude up to that scale.  In such cases, we can subtract off the known part of the amplitude. These steps result in a shift in  the lower limit of the dispersion integral \eqref{crossdisp} by the scale $\d_0$, namely $\m\to \m+3\d_0/2$ (see \cite{ASAZ}). Therefore, making this replacement in \eqref{w0110}, we have
\be \label{mudelta}
-\frac{9}{4\m+6\d_0}<\frac{\mathcal{W}_{0,1}}{\mathcal{W}_{1,0}}< \frac{9}{2\m+3\d_0}\,.
\ee
Here $\mW$ are the Wilson coefficients of the amplitude with subtractions. Now notice that if we consider $\delta_0\gg \mu$ then we have
\be\label{strong}
-\frac{3}{2\d_0}<\frac{\mathcal{W}_{0,1}}{\mathcal{W}_{1,0}}< \frac{3}{\d_0}\,.
\ee
Let us compare this to \cite{Caron-Huot:2020cmc}. Converting their results to our conventions, we find that the lower bound above is identical to their findings--this is corroborated by the results of \cite{tolley} as well as what arises from crossing symmetric dispersion relations \cite{ASAZ}. The other side of the bound is more interesting. The strongest result in $d=4$ in \cite{Caron-Huot:2020cmc} places the upper bound at $\approx 5.35/\delta_0$ in our conventions. Their approach also makes the bound spacetime dimension dependent. Now remarkably, the bound we quote above and the $d\gg 1$ limit of \cite{Caron-Huot:2020cmc} are identical! In EFT approaches, one takes the so-called null constraints or locality constraints and expands in the limit $\delta_0\gg\mu$.  It is possible that a more exact approach building on \cite{Caron-Huot:2020cmc} will lead to a stronger bound as in \eqref{strong}.

Let us now comment on the behaviour in the figure (\ref{fig:W01W10pion}). First, notice that all S-matrices appear to respect the upper bound we have found above; for the S-matrix bootstrap results, we set $\delta_0=0$. For comparison, note that for 1-loop $\phi^4$, and 2-loop chiral perturbation theory, we have
\be
\left(\frac{\mW_{0,1}}{\mW_{1,0}}\right)_{\phi^4}\approx-0.315 \,,\quad \left(\frac{\mW_{0,1}}{\mW_{1,0}}\right)_{\chi-PT}\approx-0.135\,,
\ee
both in units where $m=1$. These numbers would be closer to the lower black dashed line in the figure (\ref{fig:W01W10pion}), which is the bound that is common in all approaches so far. The upper black dashed line is what we find in the current paper. For future work, it will be interesting to search for an interpolating bound as in \eqref{mudelta} which enables us to interpolate between $\delta_0=0$ and $\delta_0\gg \mu$.

\section{Discussion}\label{discuss}

In this paper, we have examined a potentially remarkable correspondence between aspects of geometric function theory and quantum field theory. We believe we have just scratched the surface. There are many interesting questions to pursue in the near future. By no stretch of the imagination is our examination of the vast mathematics literature on univalent functions exhaustive. While we believe we have identified some of the interesting mathematics theorems which have either a QFT counterpart or applications in QFT, there must be a great many connections waiting to be discovered.
Let us first recapitulate what we accomplished in this paper:
\begin{itemize}
	\item We found QFT counterparts to \eqref{m1} and \eqref{m2}. In QFT, we made use of the crossing symmetric dispersion relation and unitarity. The only place we used univalence was for the kernel in the crossing symmetric dispersion relation, which we proved was univalent inside the open disk $\mathbb{D}$ for a range of $a$.  In deriving the appropriate kernel for the dispersion relation, we assumed that $\mm(s,t)$ in the Regge limit went like $o(s^2)$. However, the upper bound in \eqref{m2} is stronger since it applies not just in this limit. To see this, recall that the Regge limit was $\tilde z\rightarrow 1$, and the bound applies more generally.
	
	\item The univalence of the kernel enabled us to derive upper bounds on the Taylor coefficients of the scattering amplitude via application of the de Branges's theorem (Bieberbach conjecture) to the kernel. These bounds can be used to obtain inequalities concerning the Wilson coefficients, and we found strong bounds which are respected by all theories considered in this paper, which include 1-loop $\phi^4$, pion S-matrix bootstrap (which included a plethora of examples which respects unitarity, crossing symmetry and has information about the standard model $\rho$-meson mass) and even the massless pole subtracted string tree-level dilaton scattering.
	\item We further derived two-sided bounds on the scattering amplitude. In deriving the upper bound, we used the univalence of the kernel in the form of the Koebe growth theorem. The upper bound, expressed in terms of the usual Mandelstam variables, translates to, for large $|s|$, fixed $t$, $|\mm(s,t)|\lesssim s^2$.
	
	\item We proved that to leading order in $a$, around $a\sim 0$ the scattering amplitude is univalent as it respects the Grunsky inequalities.
\end{itemize}

Here is our immediate wish-list:
\begin{itemize}
	\item The crossing symmetric kernel was univalent for regions of complex $a$. We did not examine this in detail in this paper since we wanted to make use of the positivity of the absorptive part. It will be important to ask if the univalence of the full amplitude holds for complex $a$.
	\item In examples that do not have three channel crossing symmetry, like open string scattering amplitudes for Yang-Mills, or even Moller scattering, it will be interesting to identify the appropriate variable in which univalence can be studied.
	\item In light of our proof of univalence to leading order in $a$, a more detailed study of what is known about the convex sum of univalent functions should be made. 
	\item While the power of univalence cannot be denied, in the mathematics literature, there are also interesting and potentially powerful theorems (and conjectures) corresponding to multivalent functions. The connection between multivalence and QFT should also be examined.
\end{itemize}

As physicists, we should ask if univalence is a new/stronger condition or if it is possible to simply prove this holds (for a range of $a$ values) using standard QFT dispersion techniques. While we do not have a conclusive opinion about this, we should point out that \eqref{bigdeal} holds near $a\sim 0$ only for unitary theories. Further, we have implicitly used unitarity to derive our QFT results since it entered in showing positivity of the absorptive part. So maybe we can sharpen this question by asking: Does dispersion relation and univalence imply unitarity?

It will be also interesting to expand the S-matrix bootstrap numerics to more general cases (for instance, in higher dimensions, varying the $\rho$-meson mass etc) to examine the inequalities considered in this paper\footnote{The Grunsky inequalities superficially have some similarities with the non-linear inequalities arising in the EFT-hedron story in \cite{nima}.}. In a related vein, the CFT Mellin amplitudes \cite{cft} also admit a crossing symmetric representation \cite{RGASAZ}. One can ask related questions in these cases as well. In \cite{miguel} CFT correlators in position space in the diagonal limit were shown to be two-sided bounded, reminiscent of \eqref{m2}. It is tempting to think that there will be a similar story at work there.

It took around 70 years for the (now more than 100 year old) Bieberbach conjecture to be proved in generality. We found an interesting connection with physics; it is highly likely that there are more gems and hidden treasures buried waiting to be discovered!

\section*{Acknowledgments} We thank Faizan Bhat, Rajesh Gopakumar, Miguel Paulos, Prashanth Raman and Sasha Zhiboedov for useful discussions. We thank Yang-Hui He and Rajesh Gopakumar for comments on the draft. We especially thank Shaswat Tiwari for helping us with the S-matrix bootstrap numerical checks for the Wilson coefficients. We acknowledge partial support from a SPARC grant P315 from MHRD, Govt of India.

\appendix
\section{Univalence in Physics}
As conveyed earlier, the discussion of univalence in physics is scarce to find in the literature. However, such endeavours have been up taken in the distant past and in some recent work. In this appendix, we will give concise reviews of such papers \cite{Khuri:1965zza, andrea, saso}. 

\subsection*{Univalence in forward scattering}To the best of our knowledge, univalence in the context of high energy scattering amplitude was first considered in the mid-1960s in the background of axiomatic field theory considerations. Khuri and Kinoshita \cite{Khuri:1965zza} constructed a univalent function out of the \emph{forward scattering amplitude}. We will summarize their analysis in what follows. 
Starting from the forward scattering amplitude $\mm(s,0)$, one can construct the function 
\be 
g(s)=\int_{0}^{s}ds'\,\frac{\mm(s',0)-\mm(0,0)}{(s')^2},
\ee  which can be \emph{proved to be univalent in the upper-half $s$ plane, i.e. for $\Im (s)>0$}.  See \cite{Khuri:1965zza} for the detailed argument of the proof.

Univalent functions satisfy sharp inequalities. However, usually, these inequalities are stated for univalent functions on the open unit disc. But, by Riemann mapping theorem, every simply connected, proper, open subset of the complex plane can be bi-holomorphically (holomorphic bijective mapping) mapped onto the open disc $\mathbb{D}$. Thus, we can map the upper half-$s$-plane bi-holomorphically to $\mathbb{D}$. One such bi-holomorphic mapping is 
\be 
w(s):=\frac{s-i\l}{s+i\l},\quad\l>0.
\ee Under this map, the upper-half $s$ plane is mapped to the unit disc $|w|<1$, in the $w$ plane with $s=i\l$ mapped to the origin $w=0$. It is to be emphasized that the mapping is defined \emph{for fixed $\l$}\footnote{In fact, this is the \emph{unique} mapping with the property $w(i\l)=0$ and $\text{Arg.}[w'(i\l)]=3\p/2$ for \emph{fixed $\l$}. }. Now one can consider the function $\vf(w)$ defined by 
\be 
\vf(w):=\frac{g(s(w))-g(i\l)}{2i\l\, g'(i\l)}.
\ee Because $g(s)$ is univalent in the upper half-$s$ plane, $g'(i\l)\ne 0$ necessarily. Thus $\vf(w)$ is well-defined over $|w|<1$. Further, since $w(s)$ is a bi-holomorphism, $\vf(w)$ is a univalent mapping on the open disc $|w|<1$.  Also note that, $\vf(0)=0$ and $\vf'(0)=1$. Thus, $\vf(w)$ admits a power series representation of the form 
\be 
\vf(w)=w+\sum_{k=2}^{\infty}\g_k\,w^k.
\ee  Thus, it is a schlicht function, $\vf\in\ms$. And now one can apply Koebe growth theorem to this function to 
estimate bounds on $g(s)$. 

\subsection*{Univalence in flux-tube bootstrap}
Recently in \cite{andrea}, possible role of univalent functions has been explored in the context of flux-tube S-matrix bootstrap. We would like to thank Andrea Guerrieri for drawing our attention to this work. We briefly summarize the investigation as below.

Excitations of the flux-tube can be modelled by massless particles called branon. Their scattering is described by 2D massless S-matrix \footnote{Note that in 2D, we have a single Mandelstam variable $s$. }
\be\label{Smat}
S(s)=e^{2i\d(s)}\,,
\ee
with
\be\label{delta}
2\d(s)=\frac{s}{4}+\g_3 s^3+\g_5 s^5+\g_7 s^7+i \g_8 s^8+O(s^9)\,.
\ee
Here $\g_3,\g_5,\g_7$ are non-universal parameters which parametrize the theory space. On the other hand $\g_8\propto \g_3^3$ is not independent, see \cite{andrea} for details.

The S-matrix $S(z)$ is a holomorphic function from the upper half-plane $\mathbb{H}^+:=\left\{z|~Im(z)>0\right\}$ to the unit disk $\mathbb{D}$. This can be seen as following. From unitarity, $|S(z)|\leq 1$ for $z\in \mathbb{R}$. Then applying maximum modulus principle, we have 
\be
S(z)\leq 1,~z\in \mathbb{H^+}
\ee
The holomorphic map $S:\mathbb{H^+}\to \mathbb{D}$ satisfies the Schwarz-Pick inequality:
\be\label{SP}
\left|\frac{S(z_1)-S(z_2)}{1-S(z_1)\overline{S(z_2)}}\right|\leq \left|\frac{z_1-z_2}{z_1-\bar{z_2}}\right|\,, z_1,z_2\in \mathbb{H}^{+}
\ee
This can be obtained by applying Schwarz-Pick lemma to the holomorphic map $S \circ W^{-1}: \mathbb{D}\to \mathbb{D}\,,$ where $W: \mathbb{H}^+\to \mathbb{D}$ is Cayley transform defined by
\be\label{Cayley}
W(z)=\frac{z-i}{z+i}, z\in \mathbb{H}^+
\ee
The equality in \eqref{SP} above is satisfied if and only if $S$ is a holomorphic isomorphism or equivalently an univalent function.

We can now consider the function 
\be\label{S1def}
S^{1}(z|w):=\left[ \frac{S(z)-S(w)}{1-S(z)\overline{S(w)}}\right]\left[ \frac{z-w}{z-\bar{w}}\right]^{-1}
\ee
Then, the Schwarz-Pick inequality \eqref{SP} gives 
\be\label{Sbound}
\left|S^1(z|w)\right|\leq 1 \,, \forall z,w\in \mathbb{H}^+
\ee
Now inserting \eqref{Smat} and \eqref{delta} into \eqref{Sbound} above and expanding for small imaginary $z$ and $w$, we get
\be
S^1(ix|iy)=-1+\left(\frac{1}{96}+8\g_3\right) xy+\dots\geq -1
\ee
This leads to the bound 
\be
\g_3\geq -\frac{1}{768}
\ee
The main important point in connection to univalence is that the \emph{bound is saturated by univalent S-matrix. } These functions have been called single CDD-zero functions in \cite{andrea}.

\subsection*{Hydrodynamical bounds from univalence}
In the recent work \cite{saso}, the theory of univalent functions was put to use for deriving bounds on the hydrodynamic transport coefficients. The starting point is the frequency-momentum dispersion relation, $\w(\textbf{q}^2)$, obtained from linearised hydrodynamics. Here $\w$ is the frequency, and $\textbf{q}^2$ is the momentum squared of a collective mode: diffusion or sound.  In a hydrodynamical theory preserving spatial rotations, the classical\footnote{ The classical theory is devoid of any stochastic noise or loop corrections which lead to breakdown of analyticity.}  $\w(\textbf{q}^2)$ are given by infinite series of the form 
\be
\begin{aligned}
	\omega_{\mathrm{diff}}\left(z \equiv \mathbf{q}^{2}\right) &=-i \sum_{n=1}^{\infty} c_{n} z^{n}=\frac{f_{\text{diff}}(z)}{i}, \\
	\omega_{\mathrm{sound}}^{\pm}\left(z \equiv \sqrt{\mathbf{q}^{2}}\right) &=-i \sum_{n=1}^{\infty} a_{n} e^{\pm \frac{i \pi n}{2}} z^{n}=f^\pm_{\text{sound}}(z),
\end{aligned}
\ee
with $a_n,~c_n\in\mathbb{R}$ for all $n\ge1$. The coefficient $c_1=D$ is the diffusivity, and $a_1=v_s$ is the speed of sound. Treating $z$ as a complex variable in both of the above equations, one investigates the domains in the complex $z$ plane on which $\w_{\text{diff}}(z)$ and $\w_{\text{sound}}^{\pm}(z)$ are univalent. The main tool to establish this is to use ${\rm Re} f'(z)>0$ which is a sufficient condition for an analytic function to be univalent, which translates into conditions on the group velocity. Explicit checks for situations where such conditions holds were done using holography--for more details see \cite{saso}. Then, in that domain one can use univalence to write bounds via Koebe growth theorem:
\be
\frac{\left|\omega_{0}\right|\left(1-\left|\zeta_{0}\right|\right)^{2}}{\left|\zeta_{0}\right|\left|\partial_{\zeta} \varphi^{-1}(0)\right|} \leq\left(D {\rm ~or~} v_{s}\right) \leq \frac{\left|\omega_{0}\right|\left(1+\left|\zeta_{0}\right|\right)^{2}}{\left|\zeta_{0}\right|\left|\partial_{\zeta} \varphi^{-1}(0)\right|},
\ee
where $\zeta:=\varphi(z)$ is the conformal mapping from domain of univalence to open unit disk, and $\varphi(0)=0$. $z=z_0$ is a point in the domain of univalence such that $\omega_0:=\omega(z_0)$ is known and $\zeta_0=\varphi(z_0)$.
Also, one can write bounds from de Branges theorem
\be
\left|c_{2}+\frac{D}{2} \frac{\partial_{\zeta}^{2} \varphi^{-1}(0)}{\left[\partial_{\zeta} \varphi^{-1}(0)\right]^{2}} \right| \leq \frac{2 D}{\left|\partial_{\zeta} \varphi^{-1}(0)\right|}.
\ee
Using the relation between transport and chaos, which has been established in large-$N$ theories, namely via pole skipping considerations, which relate frequency at a specific complex momentum with the Lyapunov exponent, interesting bounds were derived in \cite{saso} which give two-sided bounds on diffusivity in terms of the Lyapunov exponent. The main challenge in \cite{saso} as well as in the present paper is to identify conditions where univalence holds. As pointed out in the main text, we were able to get mileage by knowing where the kernel appearing in the dispersion relation was univalent. Further constraints will arise on establishing the precise conditions where the full amplitude (which is a convex sum of univalent functions) is univalent.

\section{Various Amplitudes}

\subsection{Tree level type II superstring theory amplitude }
%
The low energy expansion of the type II superstring amplitude is well known, see for example \cite{green} for a recent discussion. The amplitude after stripping off a kinematic factor and subtracting off the massless pole is given below. This is what we will use. In order to facilitate expansion, it is also useful to recast the Gamma function in terms of an exponential of sum of Zeta functions as in \cite{green}.
\be
\mathcal{M}^{(cl)}(s_ 1, s _2)=-\frac{\Gamma\left(1-s_{1}\right) \Gamma\left(1-s_{2}\right) \Gamma\left(s_{1}+s_{2}+1\right)}{s_{1} s_{2}\left(s_{1}+s_{2}\right) \Gamma\left(s_{1}+1\right) \Gamma\left(-s_{1}-s_{2}+1\right) \Gamma\left(s_{2}+1\right)}+\frac{1}{s_{1} s_{2}\left(s_{1}+s_{2}\right)}
\ee

\begin{table}[hbt!]
\centering
\begin{tabular}{|c|c|c|c|c|c|c|}
\hline
 $\mathcal{W}_{p,q}$& \text{q=0} & \text{q=1} & \text{q=2} & \text{q=3} & \text{q=4} & \text{q=5} \\ \hline
p=0& 2.40411 & -2.88988 & 2.98387 & -2.99786 & 2.99973 & -2.99997 \\ \hline
p=1& 2.07386 & -4.98578 & 7.99419 & -10.9987 & 13.9998 & -17. \\ \hline
p=2& 2.0167 & -6.99881 & 14.9984 & -25.9995 & 39.9999 & -57. \\ \hline
p=3& 2.00402 & -9.00023 & 23.9996 & -49.9998 & 89.9999 & -147. \\ \hline
p=4 & 2.00099 & -11.0002 & 34.9999 & -84.9999 & 175. & -322. \\ \hline
p=5& 2.00025 & -13.0001 & 48. & -133. & 308. & -630. \\ \hline
\end{tabular}
\caption{$\mathcal{W}_{p,q}$ for tree level type II superstring theory amplitude}
\label{tab:stringWpq}
\end{table}
Note that we have stripped off the kinematic factor $x^2=(s_1 s_2 +s_2 s_3+s_1 s_3)^2$. Had we retained it then the graviton pole subtracted amplitude in the Regge limit would have behaved like $|s_1|^{2}/t$ so that the dispersion relation would need three subtractions.  Therefore, it is important that we remove this kinematic factor in what we do. The $a_\ell$'s with this factor removed continue to be positive--which is the main thing we used in our derivation.
\subsection{1-loop $\phi^4$ amplitude}
We just note the well-known standard result for the 1-loop $\phi^4$ amplitude. 
\be\label{eq:phi4amp}
\mathcal{M}^{(\phi^4)}(s_ 1, s_ 2)=-\frac{2 \sqrt{s_1-\frac{8}{3}} \tanh ^{-1}\left(\frac{\sqrt{s_1+\frac{4}{3}}}{\sqrt{s_1-\frac{8}{3}}}\right)}{\sqrt{s_1+\frac{4}{3}}}-\frac{2 \sqrt{s_2-\frac{8}{3}} \tanh ^{-1}\left(\frac{\sqrt{s_2+\frac{4}{3}}}{\sqrt{s_2-\frac{8}{3}}}\right)}{\sqrt{s_2+\frac{4}{3}}}-\frac{2 \sqrt{s_3-\frac{8}{3}} \tanh ^{-1}\left(\frac{\sqrt{s_3+\frac{4}{3}}}{\sqrt{s_3-\frac{8}{3}}}\right)}{\sqrt{s_3+\frac{4}{3}}}
\ee

\begin{table}[H]
\centering
\begin{tabular}{|c|c|c|c|c|c|c|}
\hline
 $\mathcal{W}_{p,q}$&  \text{q=0} & \text{q=1} & \text{q=2} & \text{q=3} & \text{q=4} & \text{q=5} \\\hline
p=0& -5.22252 & -0.0209238 & 0.000401094 & -0.0000116118 & \text{3.9934$\times$ $10^{-7}$} & -\text{1.5104$\times$ $10^{-8}$} \\\hline
p=1& 0.0663542 & -0.0023309 & 0.0000983248 & -\text{4.4442$\times$ $10^{-6}$} & \text{2.0832$\times$ $10^{-7}$} & - \\\hline
p=2& 0.00344623 & -0.00027954 & 0.00001862 & -\text{1.1521$\times$ $10^{-6}$} &- & - \\\hline
p=3& 0.000267396 & -0.0000348355 & \text{3.1948$\times$ $10^{-6}$} & -\text{2.5174$\times$ $10^{-7}$} &- & - \\\hline
p=4& 0.0000245812 & -\text{4.4442$\times$ $10^{-6}$} & \text{5.2081$\times$ $10^{-7}$} & - &- & - \\\hline
p=5& \text{2.4827$\times$ $10^{-6}$} & -\text{5.7605$\times$ $10^{-7}$} & - &- & - & - \\\hline
\end{tabular}
\caption{$\mathcal{W}_{p,q}$ for 1-loop $\phi^4$ amplitude}
\label{tab:stringWpq}
\end{table}

\subsection{ Amplitude for pion scattering from S-matrix bootstrap}
The S-matrix bootstrap puts constraints on pion scattering using unitarity and crossing symmetry. Some additional phenomenological inputs like $\rho$-meson mass or certain theoretical constraints like S/D wave scattering length inequalities are used. For more details, the reader is referred to \cite{gpv, bhsst, bst}. The allowed S-matrices are displayed as regions on the Adler-zeros $(s_0,s_2)$ plane. In \cite{bhsst}, a river like region of S-matrices on this plane was identified. The chiral perturbation theory appeared to lie close to a kink-like feature near $s_0=0.35$. As such this particular S-matrix is of interest to us. In the main text, we have considered a plethora of S-matrices like the lake in \cite{gpv}, the upper and lower boundaries of the river in \cite{bhsst} as well as the more interesting line of minimization (where the total scattering cross-section is minimized for a given $s_0$) in \cite{bst}. 
 The amplitude for pion scattering from S-matrix bootstrap with $s_0=0.35$ is given\footnote{One can write these kind of expansion for a general $a$ upto desired order in $\tilde{z}$. To minimize numerical errors in our calculations, we had to  rationalize upto 20 decimal place.} below for $a=1/2$
\be
\begin{split}
&\mathcal{M}(\tilde{z},a)=-1.90562-55.586 \tilde{z}-75.7314 \tilde{z}^2-49.2812 \tilde{z}^3+3.43872 \tilde{z}^4+45.4445 \tilde{z}^5+O\left(\tilde{z}^6\right)
\end{split}
\ee
In table \eqref{tab:pionWpq}, we have listed various $\mathcal{W}_{p,q}$.
\begin{table}[hbt!]
\centering
\begin{tabular}{|c|c|c|c|c|c|c|}
\hline
 $\mathcal{W}_{p,q}$&\text{q=0} & \text{q=1} & \text{q=2} & \text{q=3} & \text{q=4} & \text{q=5} \\
\hline
\text{p=0}& -1.90562 & 5.02671 & -0.249527 & 0.0118008 & -0.000555517 & 0.0000262344 \\
\hline
\text{p=1}&  5.72161 & 0.395863 & -0.0520982 & 0.00402939 & -0.000264317 &- \\
\hline
\text{p=2}&  0.642298 & 0.0217519 & -0.00787377 & 0.000904172 & - & - \\
\hline
\text{p=3}&  0.0796397 & -0.000836409 & -0.000995454 & 0.000166504 &- &- \\
\hline
\text{p=4}&  0.0101505 & -0.000579411 & -0.000103708 &-& -&- \\
\hline
\text{p=5}&  0.0013093 & -0.000136893 &- &- & - & - \\
\hline
\end{tabular}
\caption{$\mathcal{W}_{p,q}$ for pion scattering from S-matrix bootstrap with $s_0=0.35$}
\label{tab:pionWpq}
\end{table}
\section{Grunsky inequalities \eqref{eq:grunslyomega} and $s_0=0.35$ pion amplitude}
Using the table \eqref{tab:pionWpq}, one can check the Grunsky inequalities \eqref{eq:grunslyomega} for $N=2$, with some random $\l_1,\l_2$. Since we are truncating the sum over Wilson coefficient expansion, if this truncated sum comes from a univalent function (in the range of $-\frac{2\m}{9}<a<\frac{4\m}{9}$) and the truncated sum is itself univalent. Therefore, it may be expected that the radius of the disc where univalence holds should be smaller. This translates into the range of $a$, which should be now $-\frac{\m}{9}<a<\frac{2\m}{9}$ (or maybe a smaller range of $a$). This can be realized from  Szeg\"{o}'s theorem, since $a^{2n}$ always comes with $\tilde{z}^n$, reducing the radius to $1/4$ means reducing the range of $a$  by $1/2$ for unit disk in $\tilde{z}$-plane. One can see in figure \eqref{fig:Grunsky_pion_random} that our expectation matches\footnote{There can be some random $\lambda_1,\l_2$ for which the curves can be slightly below the line $a=8/9$} exactly. The main point of the above discussion is that, \textit{there exists a finite range of $a$ for which the $f(\tilde{z},a)$ is univalent.}
\begin{figure}[hbt!]
     \centering
         \includegraphics[width=0.50\textwidth]{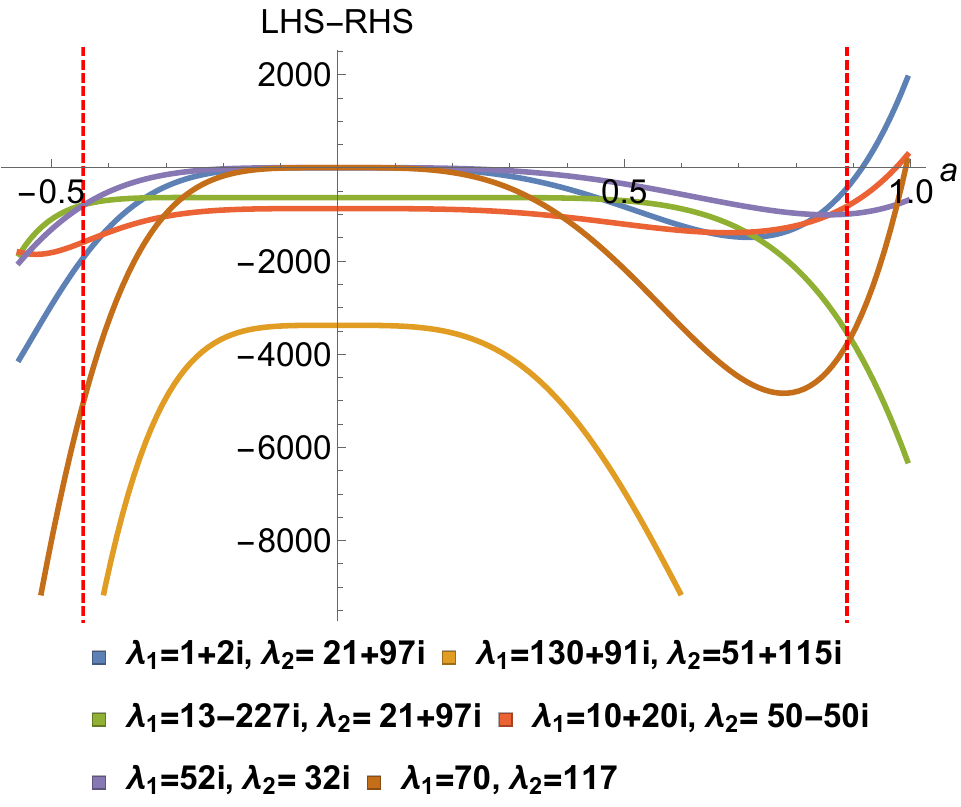}
         \caption{Red lines are the $a=-4/9,8/9$}
         \label{fig:Grunsky_pion_random}
\end{figure} 
\section{Constraints from \eqref{eq:n3alphaW}}
Suppose we consider $\mathcal{W}_{0,1},\mathcal{W}_{1,1},\mathcal{W}_{1,0},\mathcal{W}_{2,1},\mathcal{W}_{1,2}\mathcal{W}_{2,0},\mathcal{W}_{3,0}$ as given. We can constrain $\mathcal{W}_{0,3}$, using \eqref{eq:n3alphaW}. See figure \eqref{fig:a3bound}.
\begin{figure}[H]
     \centering
     \begin{subfigure}[b]{0.45\textwidth}
         \centering
         \includegraphics[width=\textwidth]{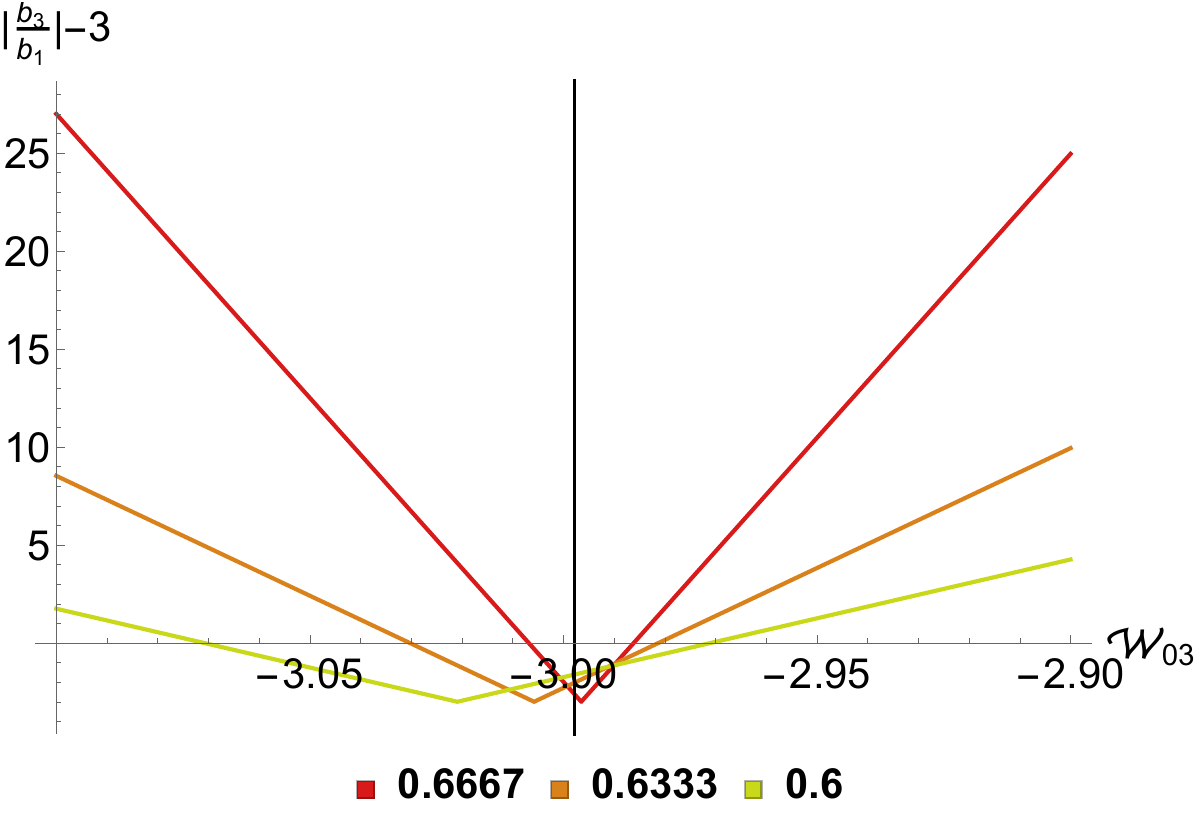}
         \caption{Tree level type II string amplitude}
         \label{fig:a3string}
     \end{subfigure}
     \hfill
     \begin{subfigure}[b]{0.45\textwidth}
         \centering
         \includegraphics[width=\textwidth]{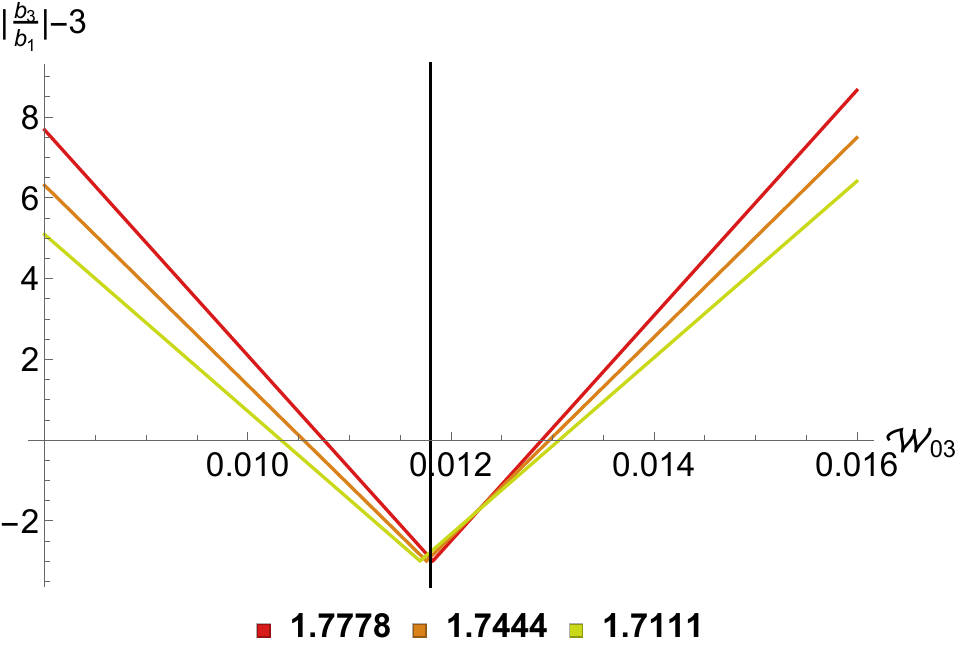}
         \caption{Pion scattering amplitude, $s_0=0.35$}
         \label{fig:a3upper4}
     \end{subfigure}
        \caption{Constraints on Wilson coefficient $W_{0,3}$ using \eqref{eq:n3alphaW}, where $\mathcal{W}_{0,1},\mathcal{W}_{1,1},\mathcal{W}_{1,0},\mathcal{W}_{2,1},\mathcal{W}_{1,2}\mathcal{W}_{2,0},\mathcal{W}_{3,0}$ are given. Figure shows that bound on the $\mathcal{W}_{0,3}$. Since $\left|\frac{b_3}{b_1}\right|-3$ should be less than zero, $\mathcal{W}_{0,3}$ must lie inside the triangle. Black dashed line is the exact answer. Different values of $a$ are indicated with different colours.}
        \label{fig:a3bound}
\end{figure}

\section{Nehari conditions in 1-loop $\phi^4$-theory. }
Using the 1-loop $\phi^4$-amplitude, we can check the Nehari conditions. For the range $-4/9<a<16/9$, we find that Nehari necessary condition \eqref{NehariN} always holds. Further, we find that Nehari sufficient condition \eqref{NehariS} does not always hold within the unit circle. Nevertheless, there are regions where 1-loop $\phi^4$-amplitude respects Nehari sufficient condition \eqref{NehariS}. For example within the radius\footnote{These can be realized replacing $z\to 2z/3$, and check the conditions for the given ranges of $a$.} of $\frac{2}{3}$ for the range $-4/9<a<16/9\,,$  the Nehari sufficient condition \eqref{NehariS} holds.

We can also check the Nehari conditions in $a\sim 0$ region. We can expand the amplitude around $a=0$, then calculate the Schwarzian derivative \eqref{eq:sch}. For example upto $a^4$, we find
\be\label{eq:nehariphi4a4}
\{f(z),z\}=-\frac{6 }{\left(z^2-1\right)^2}\left(1-\frac{0.971 a^4 (z+1)^4}{(z-1)^4}\right)
\ee
Of course, for the full range of $a$, the above \eqref{eq:nehariphi4a4} need not to satisfy the Nehari necessary condition \eqref{NehariN}, since this is an EFT type expansion, and by Szeg\"{o} theorem, we don't expect the univalent to hold in the same range of $-4/9<a<16/9$. Further, there always exists a smaller range of $a$, where it is univalent. For example if we consider the radius $1/2$, the above \eqref{eq:nehariphi4a4} satisfies Nehari necessary condition \eqref{NehariN} for $-0.301<a<0.301$. Qualitatively similar features hold for the string amplitude as well.

\end{document}